\documentclass[11pt]{article}
\usepackage{graphicx} 
\usepackage[english]{babel}
\usepackage{csquotes}
\usepackage[top=2cm,bottom=3.5cm,left=2.5cm,right=2.5cm]{geometry}
\usepackage{amsmath, amsfonts, amsthm, algorithm, algorithmic, enumerate, mathtools, bm, enumitem, braket}
\usepackage{tikz-cd}
\usepackage{tikz}
\usepackage{hyperref}
\hypersetup{colorlinks,allcolors=blue,breaklinks=true}
\usepackage{breakurl}
\usepackage[backend=bibtex, style=alphabetic, maxbibnames=99, maxalphanames=10]{biblatex}
\usepackage{multirow}

\newtheorem{definition}{Definition}[section]
\newtheorem{notation}[definition]{Notation} 
\newtheorem{construction}[definition]{Construction}
\newtheorem{remark}[definition]{Remark}
\newtheorem{obs}[definition]{Observation}
\newtheorem{theorem}{Theorem}[section]
\newtheorem{corollary}[theorem]{Corollary}
\newtheorem{lemma}[theorem]{Lemma}

\newtheorem{proposition}[theorem]{Proposition}
\newtheorem{example}{Example}[section]

\newcommand{\coeff}{\mathtt{coeff}}
\newcommand{\YES}{\mathtt{YES}}
\newcommand{\NO}{\mathtt{NO}}
\DeclareMathOperator*{\E}{\mathbb{E}}
\newcommand{\N}{\mathbb{N}}

\newcommand{\Z}{\mathbb{Z}}
\newcommand{\R}{\mathbb{R}}

\newcommand{\I}{\mathcal{I}}

\newcommand{\G}{\mathcal{G}}
\newcommand{\K}{\mathcal{K}}

\newcommand{\poly}{\texttt{poly}}
\newcommand{\innerproduct}[2]{\langle #1, #2 \rangle}

\newcommand{\abs}[1]{\left\lvert #1 \right\rvert}

\newcommand{\norm}[1]{\left\lVert #1 \right\rVert}

\newcommand{\LIN}[2]{{#1}\mbox{-}\text{LIN}(#2)}
\newcommand{\ALIN}[2]{{#1}\mbox{-}\text{ALIN}(#2)}
\newcommand{\outdeg}{\text{out-deg}}

\newcommand{\val}{\texttt{val}}
\newcommand{\algval}{\texttt{alg-val}}

\DeclarePairedDelimiter\floor{\lfloor}{\rfloor}
\DeclarePairedDelimiter\ceil{\lceil}{\rceil}

\begin{document}
\title{Spectral Method attacks Sparse LWE, Sparse LPN and Beyond}
\date{\vspace{-4ex}}
\author{
Shashwat Agrawal\\ Indian Institute of Technology Delhi  \\ \texttt{csz248012@iitd.ac.in} \and
Amitabha Bagchi\\ Indian Institute of Technology Delhi \\ \texttt{bagchi@cse.iitd.ac.in} \and
Rajendra Kumar\\ Indian Institute of Technology Delhi  \\ \texttt{rajendra@cse.iitd.ac.in}}
\maketitle             
\begin{abstract}
Given a set of $k$-sparse linear equations over a ring $R$, we give algorithms to determine whether the right-hand sides are random or have a secret assignment planted with noise. For a parameter $k/2\leq l\leq n$, we give a spectral method to solve this problem in $\widetilde{O}\left(\binom{n}{l}\abs{R}^l\right)$ time except with probability at most $n^{-\Omega(l)}$, provided the number of samples is roughly at least $\left(\frac{\abs{R}n}{l}\right)^{k/2}$. This attack generalizes the Kikuchi method described by Wein et. al. (Journal of the ACM 2019) for $\Z_2$ to (commutative) rings of any finite size. We also give a simpler algorithm with better runtime than the spectral method and better sample complexity when $\abs{R}=\omega(n/l)$. As a consequence, we obtain new sample-time tradeoffs for the decision problem of sparse LWE, sparse LPN over higher modulus $q$, and in general the distinguishing random vs planted $\Z_q$-linear equations for a large class of noise distributions. Our results imply a tightness of the hardness claims of Jain, Lin, Saha (Annual International Cryptology Conference, 2024) for sparse LWE.
\end{abstract}

\clearpage

\section{Introduction} \label{intro}

Solving systems of linear equations is one of the most fundamental problems in mathematics and computer science. Over a field, we can efficiently solve this via Gaussian elimination in polynomial time. The problem, however, becomes considerably harder when the equations are \emph{noisy}: given $m$ equations of the form $\langle\bm{a}_i, \bm{s}\rangle +e_i=b_i$  where $e_i$ is an unknown error term, recovering the secret $\bm{s}$ is no longer straightforward. The difficulty of solving these noisy linear equations  depends on the structure of the error distribution, and understanding precisely \emph{when} efficient recovery is possible remains a central question in both learning theory and cryptography.

Two of the most well-studied average-case instantiations of noisy linear equations are the Learning With Errors (LWE) problem, introduced by Regev~\cite{reg05}, and its close cousin, the Learning Parity with Noise (LPN) problem~\cite{alekhnovich2003more}; both serve as central hardness assumptions in modern cryptography. In LWE, one samples $(\bm{a}, b) \in \mathbb{Z}_q^n \times \mathbb{Z}_q$ by choosing $\bm{a} \leftarrow \mathbb{Z}_q^n$ uniformly, an error $e \leftarrow \Psi$ (typically a discrete Gaussian of width $r < q$), and setting $b = \langle \bm{a}, \bm{s} \rangle + e \bmod q$. The decision problem is to distinguish such samples from uniform over $\mathbb{Z}_q^n \times \mathbb{Z}_q$. The LPN problem is most commonly studied over $\mathbb{Z}_2$ with Bernoulli noise, or more generally over modulus $q$ with sparse (low Hamming-weight) noise, and has a closely related structure.
Most lattice-based cryptosystems rely on LWE's conjectured intractability against classical and quantum algorithms~\cite{reg05,peik09,gentry2009fully,brakerski2014lattice,crypto_primitives_on_LWE}, while LPN's hardness rests on decades of cryptanalysis~\cite{goldreich2000candidate,cryan2001pseudorandom,feige2002relations,mossel2003epsilon,allen2015refute,applebaum2016algebraic,kothari2017sum}, offering competitive efficiency despite fewer constructions~\cite{lpn_crypto,applebaum2017secure}.

Variants of these problems are considered to balance security with efficiency of cryptographic primitives. Different choices of modulus, noise distribution, or \textsf{structural constraints} (such as sparsity) can lead to primitives with faster key generation, smaller ciphertexts, or reduced computational overhead, while still relying on problems which are believed to have same hardness~\cite{lyubashevsky2013ideal,PKE_from_sparse_LPN}.
In this work, we focus on LWE and LPN with sparsity constraints, where the coefficient vectors $\bm{a}\in\mathbb{Z}_q^n$ are required to have only $k$ non-zero entries for a parameter $k<n$. The motivation for this sparsity is that the samples can be computed and stored with a roughly $O(n/k)$ factor improvement in efficiency. Hardness of sparse LPN over $\Z_2$ and variants have been studied for as long as forty years in average-case complexity and have intimate connections to constraint satisfaction problems \cite{oldest_ref_for_sparse_LPN,sparse_LPN_ref2,sparse_LPN_ref3,sparse_LPN_ref4,sos_lower_bounds}. Sparse variants of LWE/LPN over large moduli 
$q$ (where $q$ is at least polynomial in the security parameter) are useful for designing efficient advanced cryptographic primitives, including FHE but remain underexplored. 
We summarize the current known attacks for sparse variants below.

The task in analyzing the hardness of these sparse variants is to determine the computational complexity as a function of $m$, i.e. the number of samples $(\bm{a},b)$. For $m\leq n$, the random and the planted noisy instances are statistically indistinguishable. Raghavendra, Rao and Schramm \cite{sos_upper_bound} showed that sparse LPN over $\Z_2$ can be solved in $n^{O(l)}$ time when $m$ is a quantity of the form $l\left(\frac{n}{l}\right)^{k/2}\log^{c(k)}(n)$ using a degree $l$ ``Sum of squares'' method for $l\geq k/2$. An improved (and the best known) bound of the form $l\left(\frac{n}{l}\right)^{k/2}\log{n}$ for $m$ with the same time complexity can be obtained using the ``Kikuchi method'' by Wein, Alaoui, and Moore \cite{kikuchi_ref1} (independently by Hastings \cite{kikuchi_ref2}).\footnote{The Kikuchi method is inspired from Kikuchi free energy, a marginal approximation to Gibbs free energy \cite{kikuchi1951theory}} Schmidhuber, O’Donnell, Kothari, and Babbush \cite{quartic_speedup} gave a quantum algorithm that achieves a quartic speedup in time (i.e. $n^{l/4}\cdotp\text{poly}(n)$) over the Kikuchi method for certain choices of the sample complexity $m$, while using exponentially less space.

Despite the presence of numerous other variants of LWE, sparse LWE was not studied until a recent work by Jain, Lin, and Saha \cite{sparse_LWE}. Assuming hardness of sparse LWE, they construct secret key Linearly Homomorphic Encryption (LHE) schemes with slightly super-constant/constant overheads, and secret key homomorphic encryption for arbitrary constant-degree polynomials with slightly super-constant/constant overheads. They also perform some cryptanalysis on sparse LWE by suggesting two attacks for sparse LWE, namely the sparse vector in kernel attack and the dense minor attack. On analyzing the feasibility of these attacks for $k\geq\Omega(\log n)$ and $m = \text{poly}(n)$, it turns out that these attacks will take exponential time in $n$ with high probability. Hence, the authors conjecture that sparse LWE with sparsity $k\geq\Omega(\log n)$ and polynomially many samples $m = \text{poly}(n)$ is as hard as the standard LWE with dimension $n'=\Theta(n)$ and polynomially many samples. 

Subsequently, Bangachev et al.~\cite{bangachev2025near} give improved reductions from dense LWE to sparse LWE, and similarly from dense LPN to sparse LPN. Their reductions aim to establish $n^{\Omega(k)}$-hardness of sparse LWE under the assumption that dense LWE requires exponential time, but still incur an additional $1 / \log k$ multiplicative factor in the exponent.

\textbf{Our Contributions.} In this work we provide algorithms for solving the analogue of sparse LPN over general rings $R$ with $\abs{R}>2$, which we call $\LIN{k}{R}$. Here the task is to determine whether a system of $k$-sparse linear equations over a finite ring $R$ (having randomly chosen left-hand sides) has random right-hand sides or having a planted secret assignment with a noise distribution $\Psi$. We obtain analogous results of tradeoff between the sample complexity and runtime as those for sparse LPN. 

\begin{theorem}\label{thm:Kikuchi attack informal}
    (Informal version of Corollary~\ref{solving kLIN small q}) There exists an algorithm to distinguish whether a $\LIN{k}{R}$ instance on $n$ dimensions is random or planted in $\widetilde{O}\left(\binom{n}{l}(\abs{R}-1)^l\right)$ time for some parameter $\ceil{k/2}\leq l\leq n-\floor{k/2}$, provided the number of samples is $m \geq (\abs{R}-1)^{\floor{k/2}}\left(\frac{n}{l}\right)^{\ceil{k/2}}l\ln((\abs{R}-1)n)$.
\end{theorem}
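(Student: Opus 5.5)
We plan to generalize the Kikuchi matrix construction of Wein, Alaoui and Moore from $\Z_2$ to a finite commutative ring $R$, replacing signs $\pm1$ by additive characters of $R$. Fix a nontrivial additive character $\chi: (R,+)\to\mathbb{C}^\times$; for a finite ring such characters separate points, so for $e\sim\Psi$ with $\Psi$ not uniform some character has $\abs{\E[\chi(e)]}=\delta>0$. The index set of the Kikuchi matrix is the set $V$ of ``partial assignments'' $(T,\bm{x})$ with $T\subseteq[n]$, $\abs{T}=l$, and $\bm{x}\in(R\setminus\{0\})^T$. This gives $\abs{V}=\binom{n}{l}(\abs{R}-1)^l$, which matches the claimed runtime. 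For an equation $\langle\bm{a}_i,\bm{s}\rangle=b_i$ with support $S_i$, $\abs{S_i}=k$, we put an entry $\chi(b_i)$ (or its conjugate) at $((T,\bm{x}),(U,\bm{y}))$ whenever $T\triangle U=S_i$, the coordinates $\bm{x},\bm{y}$ agree on $T\cap U$, and the ``difference'' restricted to $S_i$ equals $\bm{a}_i|_{S_i}$. More precisely, $\bm{x}$ carries the coefficients on $T\setminus U$ and $-\bm{y}$ carries those on $U\setminus T$, so that $\chi(\langle \bm{x},\bm{s}_T\rangle)\overline{\chi(\langle\bm{y},\bm{s}_U\rangle)}=\chi(\langle\bm{a}_i,\bm{s}\rangle)$. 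This Hermitian matrix $M$ is the object we analyze via its top eigenvalue, computed by power iteration in $\widetilde O(\abs{V})$ time times the sparsity.

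In the planted case we lower bound $\lambda_{\max}(M)$ by the Rayleigh quotient of the test vector $v_{(T,\bm{x})}=\chi(\langle\bm{x},\bm{s}_T\rangle)$. Each nonzero entry then contributes $\chi(e_i)$, so $v^*Mv/\norm{v}^2=\frac{1}{\abs{V}}\sum_i D_i\,\mathrm{Re}\,\chi(e_i)$, where $D_i$ is the number of pairs generated by equation $i$. Counting, $D_i\approx\binom{k}{k/2}\binom{n-k}{l-k/2}(\abs{R}-1)^{l-k/2}$, with the $\floor{\cdot}/\ceil{\cdot}$ split when $k$ is odd. Hence the quotient is about $m\delta\, d$, where $d=D/\abs{V}\approx (l/n)^{\ceil{k/2}}(\abs{R}-1)^{-\floor{k/2}}$ is the average degree. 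A Chernoff bound over the errors makes this hold with high probability.

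In the random case each $\chi(b_i)$ is a uniformly random root of unity with mean zero, independent across $i$, and $M=\sum_i \chi(b_i)A_i+\text{h.c.}$, where $A_i$ is a partial-matching-like matrix. Applying matrix Bernstein/Khintchine gives $\norm{M}\lesssim\sqrt{\Delta\log\abs{V}}$, with $\Delta$ the maximum row degree and $\log\abs{V}\approx l\ln((\abs{R}-1)n)$. The main obstacle is that the maximum degree is not the average degree: the row degrees of $M$ fluctuate with the random supports and coefficient vectors. We would handle this as Wein et al. do, either by bounding all row degrees by $O(d m+l\log n)$ via Chernoff and a union bound over the $\abs{V}$ rows (failure probability $n^{-\Omega(l)}$), or by deleting high-degree rows. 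One must also verify that each row receives from each equation at most a bounded number of entries, since the coefficients $\bm{a}_i$ pin $\bm{x}$ on $T\setminus U$. Comparing the two bounds, planted exceeds random once $m\delta d\gg\sqrt{md\, l\ln((\abs{R}-1)n)}$, that is, $m\gtrsim l\ln((\abs{R}-1)n)/(\delta^2 d)$. Substituting $d$ yields the stated bound $m\geq(\abs{R}-1)^{\floor{k/2}}(n/l)^{\ceil{k/2}}l\ln((\abs{R}-1)n)$, up to constants depending on $\delta$ and $k$.

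The distinguisher thresholds $\lambda_{\max}(M)$ between the two bounds. A subtle point for non-field rings is that $\bm{a}_i$ may have zero-divisor entries; restricting $\bm{x}$ to nonzero entries still keeps the construction well defined, and the character identity above uses only additivity, so no invertibility is needed.
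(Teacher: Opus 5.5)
Your construction fails for odd $k$, and odd $k$ is exactly where the asymmetric bound $(\abs{R}-1)^{\floor{k/2}}(n/l)^{\ceil{k/2}}$ has content. You join $(T,\bm{x})$ and $(U,\bm{y})$ only when $T\triangle U=S_i$. Since $\abs{T}=\abs{U}=l$, we have $\abs{T\triangle U}=2l-2\abs{T\cap U}$, which is always even. So for odd $k$ no pair qualifies, $M$ is the zero matrix, and the ``$\floor{\cdot}/\ceil{\cdot}$ split'' you invoke in the count does not exist. For even $k$ your edge set is exactly the $r=0$ term of \eqref{theta}, and your argument goes through there.

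The missing idea is the paper's Construction~\ref{kLIN to 2LIN}: drop the symmetric-difference requirement and join all $\bm{u},\bm{v}\in R^n_l$ with $\bm{v}-\bm{u}=\bm{a}_i$. This admits coordinates $j\in S_i$ where $u_j$ and $v_j$ are both nonzero and $v_j-u_j=a_j$ (condition 2(c) of Construction~\ref{kLIN to 2LIN alternate}). There are $\abs{R}-2$ choices for such a coordinate, which is at least $1$ precisely when $\abs{R}>2$. For odd $k$, use one such coordinate and split the remaining $k-1$ evenly between the two sides. This gives $\theta^{(n,l,k)}\gtrsim\binom{n-k}{l-\ceil{k/2}}(\abs{R}-1)^{l-\floor{k/2}}$, i.e.\ average degree $\approx(l/n)^{\ceil{k/2}}(\abs{R}-1)^{-\floor{k/2}}$, which is what produces the stated sample bound. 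Your other ingredients carry over unchanged to this larger edge set:
\begin{itemize}
\item the character identity;
\item at most two entries per row per equation;
\item the Chernoff-plus-union-bound control of the maximum degree;
\item matrix Bernstein in the random case.
\end{itemize}

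A smaller issue concerns the planted case. Your test-vector Rayleigh quotient is proportional to $\sum_i\mathrm{Re}\,\chi(e_i)$, so its mean is $\Delta\cdot\mathrm{Re}\,\E[\chi(e)]$, not $\Delta\cdot\abs{\E[\chi(e)]}$. Hence ``$\Psi$ non-uniform, so pick a character with $\abs{\E[\chi(e)]}>0$'' does not suffice. For example, on $\Z_3$ the distribution $\Psi=(1/3,2/3,0)$ is non-uniform, yet $\E[\chi(e)]=\pm\iota/\sqrt{3}$ for both nontrivial characters. You need $\mathrm{Re}\,\E[\chi(\Psi)]\neq 0$; the paper simply assumes $\rho\in\R\setminus\{0\}$. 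You should also threshold $\norm{M}$ rather than $\lambda_{\max}(M)$, to cover $\rho<0$.

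With that correction, your planted-case argument is a genuinely simpler route than the paper's. It is a test vector plus scalar Hoeffding, essentially Proposition~\ref{prop:justifying Kikuchi construction} evaluated at $\bm{x}=\bm{s}$. The paper instead applies matrix Bernstein to $\bm{K}_i-\rho A_{\bm{s},\bm{a}_i}$.
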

\begin{remark}
We note that the attack In the above theorem can achieve a speedup in runtime using quantum computation via the same algorithm as given by \cite{quartic_speedup} for $R = \Z_2$. This speedup can be between quadratic to quartic depending on the choice of parameter $l$.
\end{remark}

We also have a simpler algorithm for the distinguishing problem with a better time complexity, and also a better sample complexity than above when $\abs{R}=\omega(n/l)$. 

\begin{theorem}\label{thm:simpler algo informal}
    (Informal version of Theorem~\ref{thm:simple solving kLIN}) There exists an algorithm to distinguish whether a $\LIN{k}{R}$ instance on $n$ dimensions is random or planted in $\widetilde{O}\left((n/l)^{k+1}+\abs{R}^l\right)$ time for some parameter $k\leq l\leq n$, provided the number of samples is $m \geq \left(\frac{n}{l}\right)^{k}l\ln((\abs{R}-1)n)$.
\end{theorem}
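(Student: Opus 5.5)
The plan is to replace the spectral machinery by a direct "restrict and brute force" strategy. Partition the variable set $[n]$ into $n/l$ blocks $B_1,\dots,B_{n/l}$ of size $l$ (up to rounding). A uniformly random $k$-sparse left-hand side has its support inside a fixed block $B_j$ with probability $\binom{l}{k}/\binom{n}{k}\approx (l/n)^k$. So with $m \geq (n/l)^k\, l\ln((\abs{R}-1)n)$ samples, each block is expected to receive about $t:=l\ln((\abs{R}-1)n)$ equations whose support is entirely inside it. We make one pass over the $m$ equations and bucket each one according to the block containing its support, discarding it if there is none. This costs $\widetilde{O}(m)=\widetilde{O}((n/l)^{k+1})$ up to the bookkeeping over the $n/l$ blocks. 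By a Chernoff bound, some block $B$ (say, any block) then receives at least $t/2$ equations, except with probability $n^{-\Omega(l)}$.

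Restricted to $B$, we obtain a system of about $t$ linear equations in only $l$ unknowns. We enumerate all $\abs{R}^l$ assignments $\bm{x}\in R^B$ and compute
\[
N(\bm{x})=\#\{\text{equations in }B\text{ satisfied by }\bm{x}\}.
\]
We output ``planted'' iff $\max_{\bm{x}} N(\bm{x})\ge \tau t$, for a threshold $\tau$ chosen strictly between $1/\abs{R}$ and $p_0:=\Pr_{e\sim\Psi}[e=0]$. (For a general noise $\Psi$, one can instead test the empirical distribution of residuals $b_i-\langle \bm{a}_i,\bm{x}\rangle$ against uniform, e.g.\ via its most likely value.) Naively this step costs $\abs{R}^l\cdot t$. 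To obtain the stated $\widetilde{O}(\abs{R}^l)$, we amortize, e.g.\ by enumerating assignments in Gray-code order, or by noting that $t$ is polylogarithmic in the relevant parameters.

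For correctness we analyze the two cases separately.

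\emph{Planted case.} The true secret restricted to $B$ satisfies each bucketed equation independently with probability $p_0$, since the noise is independent of the event that the support lies in $B$. Chernoff then gives $N(\bm{s}|_B)\ge\tau t$ with high probability.

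\emph{Random case.} For each fixed $\bm{x}$, the right-hand sides are uniform and independent of the left-hand sides, so $N(\bm{x})\sim\mathrm{Bin}(t,1/\abs{R})$. Hence
\[
\Pr\bigl[N(\bm{x})\ge\tau t\bigr]\le e^{-t\, D(\tau\,\|\,1/\abs{R})}.
\]
A union bound over the $\abs{R}^l$ assignments requires $t\,D(\tau\|1/\abs{R})\ge l\ln\abs{R}+\Omega(l\ln n)$. This is exactly what the choice $t=\Theta(l\ln((\abs{R}-1)n))$ is designed to provide, giving failure probability $n^{-\Omega(l)}$.

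The main obstacle is this last quantitative step. Making the union bound close requires $D(\tau\|1/\abs{R})$ to be bounded below by a constant (or to absorb a $\ln\abs{R}$ factor) uniformly over the class of noise distributions considered. This depends on how far $\Psi$ is from uniform, so the constant hidden in ``$m\geq(n/l)^k l\ln((\abs{R}-1)n)$'' will have to depend on the advantage $p_0-1/\abs{R}$ (or on the statistical distance of $\Psi$ from uniform). I would first handle the case where $p_0$ is a constant bounded away from $1/\abs{R}$, using $D(\tau\|1/\abs{R})\ge \tau\ln(\tau\abs{R})-O(1)$, which already scales like $\ln\abs{R}$ when $\abs{R}$ is large. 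I would then state the general noise case with the advantage made explicit.

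A second, minor issue is the requirement $k\le l$, which is needed so that supports fit inside a block. A further minor issue is the dependence between equations from different blocks. This is harmless because the test only ever uses a single block.
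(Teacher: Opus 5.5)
Your skeleton is the paper's Algorithm~\ref{alg:dense minor}. You restrict to equations supported on one fixed $l$-set (the partition into blocks is unnecessary), use Chernoff to keep about $t=\Theta(l\ln(\abs{R}n))$ of them, and brute-force the $\abs{R}^l$ assignments.

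The gap is the test statistic. Counting satisfied equations makes both sides of your analysis depend on $\Psi(0)$ versus $1/\abs{R}$. The theorem (formally Theorem~\ref{thm:simple solving kLIN}) instead assumes only $\rho=\E[\chi(\Psi)]\neq 0$ and pays a factor $\abs{\rho}^{-2}$. These are different conditions, so the obstacle you flag at the end is not a matter of constants.

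\begin{itemize}
\item For the perturbed Gaussian $\Psi'$ of Subsection~\ref{subsec:compare with KM25}, $\Psi'(0)=1/q$ while $\rho'=\Omega(1)$. No valid threshold $\tau$ exists for your count.
\item For sparse LWE with width $r=\Theta(q)$, the paper's main application, $\rho\geq e^{-\pi r^2/q^2}=\Omega(1)$ (Observation~\ref{obs: rho for discrete gaussian}), but $\Psi(0)=\Theta(1/q)$ (Observation~\ref{obs:psi0 discrete gaussian}). Then $D(\tau\,\|\,1/q)=\Theta(1/q)$, and your union bound needs $t=\Omega(q\,l\ln q)$.
\end{itemize}

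In the LWE case this is not slack in the analysis. With only $l$ unknowns, in the random case one can typically satisfy about $l$ of the bucketed equations at once, by solving a nonsingular $l\times l$ subsystem over $\Z_q$. The planted secret satisfies only about $\Psi(0)t$ of them. So the count test genuinely needs $t=\Omega(l/\Psi(0))=\Omega(ql)$, an extra factor of roughly $\abs{R}$ in $m$. Your most-likely-residual fallback has the same defect, since its signal $\max_e\Psi(e)-1/\abs{R}$ is also $\Theta(1/q)$ in both examples.

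The missing idea is to threshold the character advantage rather than the count. For $\bm{x}$ supported on $B$, set $\texttt{adv}_{\bm{x}}=\frac{1}{t}\sum\chi(b-\innerproduct{\bm{a}}{\bm{x}})$ over the bucketed equations.
\begin{itemize}
\item For random right-hand sides, each summand has mean $0$ for every fixed $\bm{x}$ (Observation~\ref{obs:sum of chi}).
\item In the planted case, at $\bm{x}=\bm{s}|_B$ each summand equals $\chi(e)$ and has mean $\rho$, whatever $\Psi(0)$ is.
\item The summands have modulus $1$, so Hoeffding (on real and imaginary parts) bounds a deviation of $\abs{\rho}/2$ by $e^{-\Omega(t\abs{\rho}^2)}$.
\item The union bound over the $\abs{R}^l$ assignments then closes with $t=\Theta(l\ln(\abs{R}n)/\abs{\rho}^2)$. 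Output planted iff $\max_{\bm{x}}\abs{\texttt{adv}_{\bm{x}}}\geq\abs{\rho}/2$.
\end{itemize}
This gives the stated sample bound with only the $\abs{\rho}^{-2}$ overhead, which is a constant for sparse LWE with $r<q/2$. Your count test does work when $\Psi(0)-1/\abs{R}$ is bounded below by a constant (e.g.\ LPN noise with $\mu$ bounded away from $(q-1)/q$), but that is a narrower noise class than the theorem's.
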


As can be seen, the parameter $l$ in the above theorem trades sample complexity for time. In particular, we obtain tradeoffs between the sample complexity $m$ and time complexity for sparse LWE and sparse LPN over higher modulus $q$. As an interesting case, if either $k=O(\log n/\log\log \abs{R})$ or we are allowed $m = n^{\Omega(\log\log \abs{R})}$ samples, we can choose $l$ appropriately so that the time complexity is $\exp({\delta n})$ for arbitrarily small $\delta > 0$. Assuming standard LWE does not have an arbitrarily small exponential time, our result also establishes the tightness of the conjecture on the hardness of sparse LWE by \cite{sparse_LWE}.

Moreover, the algorithms in the above theorems are applicable for a wide class of noise distributions $\Psi$. For the ring $\Z_q$, apart from the discrete Gaussian noise in LWE or the `low Hamming weight' noise in LPN, the distinguishing problem is solvable for a large class of other plausible noise distributions. For example, one might consider a distribution where zero error is allowed only with probability $1/q$, otherwise the error is rejection sampled again according to the discrete Gaussian. Another example can be an addition of noises, one from the discrete Gaussian and another from the LPN distribution noise, which was recently proposed in a bid for new hardness on noisy linear equations \cite{ghosal2025post}. In the sparse regime, our algorithms attack the problem for these noise distributions as well.

For Theorem~\ref{thm:Kikuchi attack informal} we extend the Kikuchi method used by \cite{kikuchi_ref1} for $\LIN{k}{\Z_2}$. This involves formulating a large graph corresponding to the $\LIN{k}{R}$ instance, called the Kikuchi Graph. We determine whether the instance is random or planted based on the spectral norm of the adjacency matrix of this graph. For Theorem~\ref{thm:simpler algo informal}, we collect all subsamples of the original instance that have non-zero coefficients only on the first $l$ variables, and then distinguish by naively checking all $\abs{R}^l$ assignments on this sub-instance.

In \cite{sparse_LWE}, Jain, Lin and Saha had defined sparse LWE in a manner which allows equations to be \emph{at most} $k$-sparse instead of exactly $k$-sparse \footnote{although in their Introduction, they mention that an equation has exactly $k$ non-zero entries}. We analyse this variant for general rings as well, which we call $\ALIN{k}{R}$. Intuitively, the distinguishing problem should be easier in this case, as left-hand sides here can have very few or even no non-zero entries. For large $\abs{R}$, the above approach itself can work, however when $\abs{R}$ is small, we can achieve a much better complexity.

\begin{theorem}
    (Informal version of Theorem~\ref{thm:simple solving ALIN small q}) There exists an algorithm to distinguish whether a $\ALIN{k}{R}$ instance on $n$ dimensions is random or planted in $\widetilde{O}\left(\abs{R}^k\right)$ time, provided the number of samples is $m \geq \widetilde{\Omega}\left(\abs{R}^k\right)$.
\end{theorem}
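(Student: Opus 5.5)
The plan is to exploit the fact that in $\ALIN{k}{R}$ the left-hand sides are at most $k$-sparse, so a constant fraction of samples (or at least a $\widetilde{\Omega}(\abs{R}^{-k})$-ish fraction) are supported on a fixed small set of coordinates. I would guess the sampling model is: a support $S$ of size $k$ is chosen, and the coefficients on $S$ are uniform in $R$ (so some may be zero). Then each coefficient is zero with probability $1/\abs{R}$. The simplest target is the event that $\bm{a}$ is identically zero, which happens with probability $\abs{R}^{-k}$. For such a sample the planted right-hand side is $b=e\sim\Psi$, while in the random case $b$ is uniform on $R$. With $m\geq\widetilde{\Omega}(\abs{R}^k)$ samples we expect $\Omega(\log n)$, or more generally polylogarithmically many, all-zero equations. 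The algorithm scans the samples in $O(mk)$ time, collects the $b$'s of zero rows, and runs a test distinguishing $\Psi$ from uniform on $R$.

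The key steps, in order, are as follows.

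(i) Define the event $Z=\{\bm{a}=\bm 0\}$ and compute $\Pr[Z]\ge \abs{R}^{-k}$. If the model allows the support size to be chosen uniformly in $\{0,\dots,k\}$ or similar, this probability only gets larger.

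(ii) By a Chernoff bound, with $m=C\abs{R}^k\log n\cdot t$ samples, at least $t$ zero rows appear except with probability $n^{-\Omega(1)}$.

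(iii) Use the standing assumption on $\Psi$ made earlier in the paper for $\LIN{k}{R}$: $\Psi$ has an atom where it exceeds uniform. Concretely, either $\Pr_\Psi[0]\ge 1/\abs{R}+\epsilon$, or more generally the statistical distance of $\Psi$ from uniform is at least $\epsilon$ with a known witness value. The test counts how many collected $b$'s equal $0$ (or the witness) and thresholds at $(1/\abs{R}+\epsilon/2)t$. With $t=O(\epsilon^{-2}\log n)$, Hoeffding gives error $n^{-\Omega(1)}$ in both cases.

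(iv) Total the running time. It is $O(mk)=\widetilde{O}(\abs{R}^k)$ when $\epsilon$ is at least inverse-polylogarithmic, absorbed into the $\widetilde{O}$.

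The main obstacle is matching the exact noise assumption of the paper. If the hypothesis on $\Psi$ is phrased via a nontrivial character, i.e.\ $\abs{\E_\Psi[\chi(e)]}\ge\epsilon$, rather than via an atom, then step (iii) needs adjusting. The first thing I would try is to estimate $\E[\chi(b)]$ empirically on the zero rows for each of the $\abs{R}$ characters and declare planted if some empirical average is large. This costs $O(t\abs{R})$, still within the budget, and the union bound over characters costs only a $\log\abs{R}$ factor in $t$.

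A second fallback, if zero rows are too rare under the model (for instance, if exactly $k$ distinct positions must be chosen with coefficients nonzero, which would not be "at most"), is to take rows supported within a fixed set of $k$ coordinates. One then brute-forces the $\abs{R}^k$ assignments to those coordinates, as in Theorem~\ref{thm:simpler algo informal} with $l=k$, and checks whether some assignment makes the residuals $\Psi$-biased. This also fits the $\widetilde{O}(\abs{R}^k)$ time bound.
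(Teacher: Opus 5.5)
Your proposal is correct and, once step (iii) uses the character-based test, it is exactly the paper's proof. The paper collects the samples with $\bm{a}=\bm{0}$, which occur with probability exactly $\abs{R}^{-k}$ under $D_{\coeff,n,k,R}$ as you guessed; it uses Chernoff to get $m'\geq\abs{\rho}^{-2}l\ln n$ such rows from $m\geq 2\abs{R}^k\abs{\rho}^{-2}l\ln n$ samples; it then thresholds $\abs{\frac{1}{m'}\sum\chi(b)}$ at $\abs{\rho}/2$ via Hoeffding, using only the single canonical character $\chi$, since the hypothesis is $\rho=\E[\chi(\Psi)]\neq 0$ (if you test several characters, exclude the trivial one). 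Drop your atom-based version of (iii): that is not the paper's hypothesis, and under $\rho\neq0$ the largest excess $\Psi(j)-1/\abs{R}$ can be as small as order $\abs{\rho}/\abs{R}$, which would inflate the required number of zero rows, and hence $m$ and the running time, by a factor polynomial in $\abs{R}$.
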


\cite{sparse_LWE} had also given a reduction from standard LWE on $k$ dimensions to (at most) $k$-sparse LWE on $n\geq k$ dimensions. It is believed that distinguishing standard LWE takes $2^{\Omega(k)}$ time given arbitrary many samples. Our result above can be seen as a step towards matching this hardness result for (at most) $k$-sparse LWE.

\textbf{Independent Work.} Independent of our work, Kocurek and Manohar \cite{kocurek2025spectral} have studied the problem of refuting random and semirandom instances of $\LIN{k}{R}$\footnote{a semirandom instance is when the left-hand sides of a $\LIN{k}{R}$ instance can be arbitrarily chosen, but the right-hand sides are uniformly and independently chosen}. For this task, they define Kikuchi graphs for finite fields and abelian groups that are similar to our construction. Their sample-time tradeoffs are also similar to ours. However, our objective is different from the refutation task; ours is the task of distinguishing a random vs a planted instance. It turns out that the statistic they use for the refutation task, the  {\em value} of an instance (maximum fraction of satisfiable constraints), cannot be extended to the distinguishing task when the noise distribution for the planted case has a high probability of introducing an error, i.e., has a low probability of making zero error. The statistic we use, which we call the \emph{advantage} (Definition~\ref{def advantage}), is able to handle error distributions which satisfy a much milder condition than the one~\cite{kocurek2025spectral} need, and is thereby more widely applicable.

Moreover, they construct separate Kikuchi constructions depending on the sparsity $k$ being even or odd, and also depending on the ring being a field or non-field, with significantly more involved construction in the odd sparsity case. 

Our Kikuchi construction is able to mitigate both these issues; it is more amenable to the distinguishing task, and is a unified and simplified construction for all rings $R$ and both sparsity parities when $\abs{R}>2$. Further, we are able to construct a simpler distinguisher which has a significantly lower runtime than that of their simple refutation algorithm when the size of the ring $R$ is large. 

We elaborate on these aspects in Subsection~\ref{subsec:compare with KM25}.

\textbf{Organization.} Section~\ref{sec:overview} gives a high level overview of our spectral Kikuchi method. Section~\ref{sec:preliminaries} introduces some basic notations, standard results, definitions of the $\LIN{k}{R}$ problem and variants. Section~\ref{sec:Kikuchi graph} describes our Kikuchi graph construction for $\LIN{k}{R}$. In Section~\ref{sec:Kikuchi spectral}, we utilize the spectral norm of the Kikuchi graph to distinguish planted vs random $\LIN{k}{R}$ instances. In Section~\ref{sec:simple algo}, we describe our simpler algorithm for distinguishing $\LIN{k}{R}$ instances. In Section~\ref{sec:alin} we deal with the `at most' $k$-sparse version, called $\ALIN{k}{R}$, and give a simple algorithm to distinguish a planted vs random version of this problem. Finally, in Section~\ref{sec:discussion} we discuss how our techniques and results compare with the works \cite{kocurek2025spectral} and \cite{sparse_LWE}.

\section{Overview of the Kikuchi Method}\label{sec:overview}
 
In this section, we provide a high-level overview of our Kikuchi construction (Section~\ref{sec:Kikuchi graph}) and the spectral method to distinguish planted vs random $\LIN{k}{R}$ instances (Section~\ref{sec:Kikuchi spectral}). We first summarize how this technique was used for the $k$-XOR problem ($k$-sparse LWE/LPN for $\Z_2$), \cite{kikuchi_ref1} (and independently \cite{kikuchi_ref2}), and then discuss the modifications we needed to make it work for general rings. 

Let us review the steps of the Kikuchi method for $\Z_2$:

\begin{enumerate}
\item Each $k$-sparse equation is converted into a set of $2$-sparse equations, i.e., equations in two variables. This is done by assigning variables to $l$-sized subsets of $[n]$ where $l\geq k/2$ is a parameter. For each original equation of the form $C: \sum_{j\in S}{x_j}=b$, we write down equations $C': y_T + y_{T'}=b$ for all possible $l$-sized subsets $T,T'\subseteq [n]$ such that their symmetric difference equals $S$. For every vector $\bm{s}\in\Z_2^n$, one can define a natural vector $\bm{z}\in\Z_2^{\binom{[n]}{l}}$ such that $\bm{s}$ satisfies $C$ iff $\bm{z}$ satisfies $C'$.
\item We note that such sets of $2$-sparse equations correspond to a graph with $\binom{[n]}{l}$ as nodes and an edge between each pair of variables that appear in an equation. The weight of the edge is the value $(-1)^b$, where $b$ is the RHS of the equation. 
\item For the $k$-XOR problem an assignment $\bm{x}\in\Z_2^n$ has a function called the \emph{advantage} (or \emph{value}) associated with it. This advantage is defined in such a way that it is equal (up to scaling) to the quadratic form associated with the adjacency matrix of the graph defined in the previous item. Hence the optimal advantage can be directly related to the spectral norm of the adjacency matrix.
\item Distinguishing an instance as random or planted is done on the basis of spectral norm of the adjacency matrix of this graph. Appropriate tail bounds on the spectral norm are proved to show that it will be unambiguously separated in the two cases. The spectral norm of the expected adjacency matrix is $0$ in the random case, and in the planted case is of the order of the average degree of the graph, which in turn is proportional to the number of samples of the $k$-XOR instance. Thus in order to get tail bounds which are separated in the two cases, it is essential that there are sufficiently many samples. 
\end{enumerate}

Now we discuss the specific modifications we made to generalize this for a ring $R$:

\begin{enumerate}[label=\roman*.]
    % \item Linear equations over $R$ are rewritten as product equations over $\Omega_q = \{1,\omega_q,\ldots,\omega_q^{q-1}\}$, where $\omega_q = e^{2\pi\iota/q}$ is a primitive $q\textsuperscript{th}$ root of unity.  
    \item To successfully convert a $k$-sparse equation to a collection of specific $2$-sparse equations, there are a couple of challenges in the $\abs{R}>2$ regime that we mitigate. Firstly, the LHS of an equation here consists of a $k$-sparse vector $\bm{a}\in R^n$ instead of just the non-zero indices for $\Z_2$. Accordingly, the 2-sparse equations will have to be defined over variables indexed by the subset of $R^n$ which has $l$-sparse vectors. Secondly, for defining the $2$-sparse constraints in this space, we will generalize the notion of choosing pairs satisfying a symmetric difference: we choose all pairs $(\bm{u},\bm{v})$ such that $\bm{v}-\bm{u} = \bm{a}$.
    \item We want to have $\mathbb{C}$-valued signs for each edge pair, so that we could utilize spectral method for random matrices. So the sign of $(\bm{u},\bm{v})$ is $\chi(b)$ for the canonical character map of (the underlying additive group of) $R$, $\chi : R \to \mathbb{C}$. Our generalization of symmetric difference is asymmetric and yields a directed graph. To apply the spectral method, it is desirable that the adjacency matrix of the graph be \emph{Hermitian}. We enable this by adding edges $\bm{v}\xrightarrow{\chi(-b)}\bm{u}$ for every original edge $\bm{u}\xrightarrow{\chi(b)}\bm{v}$ in the graph. 
    \item We define a new notion of advantage of an assignment. For $R=\Z_q$ this captures the `extent' to which the assignment violates a constraint (unlike the $\Z_2$ case where this can only capture whether a constraint got satisfied). We then show that it is equal (up to scaling) to the quadratic form of the (Hermitian) matrix from the previous item. 
    \item Like the $\Z_2$ case, we can compute the spectral norm of the adjacency matrix, and prove appropriate tail bounds to ensure a separation between the spectral norms in the random vs planted instances. The method we employed for the tail bounds slightly deviates for that in the case of $\Z_2$, say in \cite{quartic_speedup}. In the latter, the upper tail bound for the random case was obtained using Matrix Chernoff bounds for Rademacher series, and the lower tail bound for the planted case was obtained using properties of Skellam distribution. We use the \emph{Matrix Bernstein inequality} to get both the tail bounds. 
\end{enumerate} 

\section{Preliminaries}\label{sec:preliminaries}

\subsection{Notations}
Let $\mathbb{N}, \R, \mathbb{C}$ denote the set of naturals, reals, complex numbers respectively. Let $\iota = \sqrt{-1}$. For $n\in\N$, $[n] := \{1,2,\ldots,n\}$. For $0\leq r\leq n$, $\binom{[n]}{r}$ denotes the set of all $r$-size subsets of $[n]$. For a set $S$, $\text{Unif}(S)$ denotes the uniform distribution on $S$. For $a\in\mathbb{C}$, $\text{Re}(a)$ denotes its real part, $a^*$ denotes its complex conjugate. $\omega_q := e^{\frac{2\pi\iota}{q}}$ denotes the $q$\textsuperscript{th} primitive root of unity. 

For a vector $\bm{x}$, unless stated otherwise, $x_j$ denotes the $j$\textsuperscript{th} coordinate of $\bm{x}$. For a matrix $A$, $\norm{A}$ denotes its spectral norm. $A^{\dagger}$ denotes the conjugate transpose of $A$. $\bm{O}$ denotes a null matrix, and $\bm{0}$ denotes a null vector.

For a function $f:\N\to\N$, the notation $\widetilde{O}(f(n))$ denotes $O(f(n)\cdot\poly(\log f(n)))$ for some polynomial $\poly(\cdot)$.

In this paper, a ring $R$ is assumed to be finite, commutative, and to have a multiplicative identity $1\in R$. For $t\leq n$, $R^n_t$ denotes the subset of $R^n$ consisting of all vectors with exactly $t$ coordinates in $R\setminus\{0\}$, and $R^n_{\leq t}$ is the subset of $R^n$ consisting of all vectors with at most $t$ coordinates in $R\setminus\{0\}$. For $\bm{x},\bm{y}\in R^n$, $\innerproduct{\bm{x}}{\bm{y}} := x_1y_1 + \ldots +x_ny_n$ denotes the sum of their coordinate-wise products (the ring product operation). 

For a vector $\bm{x} \in R^n$, we denote $\bm{x}^{\odot l}\in R^{R^n_l}$ as the vector whose $\bm{u}\textsuperscript{th}$ coordinate (for $\bm{u}\in R^n_l$) is given by $x^{\odot l}_{\bm{u}} := \innerproduct{\bm{u}}{\bm{x}}$.

For a directed graph $\G$, let $\text{out-deg}_{\G}(T)$ denote the number of outgoing edges of a vertex $T$ of $\G$, and let $D(\G)$ denote the maximum outgoing degree of $\G$. We may skip writing $\G$ in these notations when the graph is clear in the context.

\subsection{Canonical Character of an Abelian Group}
For a ring $R$, its underlying additive group is finite and abelian, so it can be assumed to be group isomorphic to $\Z_{q_1}\times\ldots\times\Z_{q_t}$. Let $\psi : (R,+) \to \Z_{q_1}\times\ldots\times\Z_{q_t}$ be this group isomorphism, which we can find in $\widetilde{O}(\abs{R})$ time. Define the canonical character of $R$ as the group homomorphism
\begin{align*}
    \chi_R : &(R,+) \to \left(\mathbb{C},\cdot\right) \\
    &x \mapsto \omega_{q_1}^{\psi(x)_1}\cdots\omega_{q_t}^{\psi(x)_t}
\end{align*}

We write $\chi_R$ as simply $\chi$ when the ring is clear in the context.

\begin{obs}\label{obs:sum of chi}
$\sum_{a\in R}\chi(a) = 0$.
\end{obs}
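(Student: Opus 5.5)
We prove Observation~\ref{obs:sum of chi} with the standard argument: the sum of a nontrivial character over a finite abelian group vanishes. The plan has three steps: show that $\chi$ is nontrivial whenever $\abs{R}>1$, exploit translation invariance of the sum, and conclude.

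First, $\chi$ is nontrivial. Since $R$ has a multiplicative identity $1\neq 0$ (which we assume, so that $\abs{R}\geq 2$), the decomposition $\psi:(R,+)\to\Z_{q_1}\times\cdots\times\Z_{q_t}$ has some factor with $q_i\geq 2$. We may discard trivial factors with $q_j=1$. Let $\bm{e}_i$ be the element with a $1$ in coordinate $i$ and $0$ elsewhere, and set $a_0:=\psi^{-1}(\bm{e}_i)$. Then $\chi(a_0)=\omega_{q_i}\neq 1$, because $\omega_{q_i}$ is a primitive $q_i$\textsuperscript{th} root of unity with $q_i\geq 2$.

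Second, we reindex the sum by a translation. Put $S:=\sum_{a\in R}\chi(a)$. The map $a\mapsto a+a_0$ is a bijection of $R$, and $\chi$ is a homomorphism from $(R,+)$ to $(\mathbb{C},\cdot)$, so
\begin{align*}
\chi(a_0)\,S=\sum_{a\in R}\chi(a_0)\chi(a)=\sum_{a\in R}\chi(a+a_0)=\sum_{b\in R}\chi(b)=S.
\end{align*}

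Third, this gives $(\chi(a_0)-1)S=0$, and since $\chi(a_0)\neq 1$ we conclude $S=0$.

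As a sanity check, the sum also factorises through $\psi$ as $\prod_{i}\sum_{x=0}^{q_i-1}\omega_{q_i}^{x}$, and any factor with $q_i\geq 2$ is a geometric sum equal to $0$. There is no real obstacle here. The only point needing care is excluding the degenerate case $R=\{0\}$, which the standing assumption that $R$ has an identity $1\neq 0$ rules out.
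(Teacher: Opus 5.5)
Your proof is correct, and it is the standard argument the paper has in mind: the paper states this observation without proof, relying on the fact that a nontrivial character of a finite abelian group sums to zero, and your factorization $\prod_i\sum_{x=0}^{q_i-1}\omega_{q_i}^{x}$ checks it directly from the paper's definition of $\chi$ via $\psi$. You are also right to add $1\neq 0$ (equivalently $\abs{R}\geq 2$) as a hypothesis: the paper's standing assumption only says $1\in R$, and for the zero ring the sum equals $\chi(0)=1$, so the observation would fail there.
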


\subsection{Concentration Inequalities}

Let us state tail bound results for random variables (both scalar and matrix) which we will use in this work.

\begin{theorem}\label{thm:chernoff}
    (Chernoff Bound) Let $X = \sum_{i=1}^n X_i$ be the sum of independent random variables on $[0,1]$ with $E[X] = \mu$. Then

\begin{align*}
    \forall\ \delta\geq 0,\ &\Pr\left[X \ge (1+\delta)\mu\right] \le e^{-\frac{\delta^2\mu}{2+\delta}} \\
    \forall\ 0\leq \delta\leq 1,\ &\Pr\left[X \le (1-\delta)\mu\right] \le e^{-\frac{\delta^2\mu}{2}} 
\end{align*}

\end{theorem}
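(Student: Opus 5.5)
We will prove both tails of Theorem~\ref{thm:chernoff} with the standard exponential moment (Chernoff--Cramér) method, in three steps.

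\textbf{Step 1: reduce to the moment generating function.} For any $t>0$, Markov's inequality applied to $e^{tX}$ gives
\begin{align*}
\Pr[X\ge (1+\delta)\mu]\le e^{-t(1+\delta)\mu}\,\E[e^{tX}] = e^{-t(1+\delta)\mu}\prod_{i=1}^n \E[e^{tX_i}],
\end{align*}
where the factorization uses independence.

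\textbf{Step 2: bound each factor using that $X_i\in[0,1]$.} Since $x\mapsto e^{tx}$ is convex, on $[0,1]$ we have $e^{tx}\le 1+(e^t-1)x$. Taking expectations with $p_i=\E[X_i]$ gives
\begin{align*}
\E[e^{tX_i}]\le 1+(e^t-1)p_i\le \exp\bigl((e^t-1)p_i\bigr).
\end{align*}
Multiplying over $i$ yields $\E[e^{tX}]\le \exp\bigl((e^t-1)\mu\bigr)$.

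\textbf{Step 3: optimize $t$ for the upper tail.} Choosing $t=\ln(1+\delta)$ gives the classical bound
\begin{align*}
\Pr[X\ge (1+\delta)\mu]\le \left(\frac{e^\delta}{(1+\delta)^{1+\delta}}\right)^{\mu}.
\end{align*}
It then remains to show the elementary inequality $(1+\delta)\ln(1+\delta)-\delta\ge \frac{\delta^2}{2+\delta}$ for all $\delta\ge 0$. Let $f(\delta)$ denote the difference of the two sides. Then $f(0)=0$, and we compare derivatives, where $f'(\delta)=\ln(1+\delta)-\frac{\delta^2+4\delta}{(2+\delta)^2}$. This reduces the claim to the known bound $\ln(1+\delta)\ge \frac{2\delta}{2+\delta}$. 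That bound itself holds because both sides vanish at $0$ and the derivatives satisfy $\frac1{1+\delta}\ge \frac{4}{(2+\delta)^2}$, which is equivalent to $(2+\delta)^2\ge 4(1+\delta)$, i.e.\ $\delta^2\ge 0$. Since $\frac{2\delta}{2+\delta}\ge\frac{\delta^2+4\delta}{(2+\delta)^2}$, we get $f'\ge 0$ and the upper tail follows.

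\textbf{Lower tail.} We apply the same argument to $e^{-tX}$. Convexity gives $e^{-tx}\le 1-(1-e^{-t})x$ on $[0,1]$, so $\E[e^{-tX}]\le \exp\bigl(-(1-e^{-t})\mu\bigr)$. Choosing $t=-\ln(1-\delta)$ gives
\begin{align*}
\Pr[X\le(1-\delta)\mu]\le \left(\frac{e^{-\delta}}{(1-\delta)^{1-\delta}}\right)^{\mu}.
\end{align*}
The case $\delta=1$ is handled by a limit or directly, since then the event is $X=0$, whose probability is at most $\prod_i(1-p_i)\le e^{-\mu}$. It remains to verify $-\delta-(1-\delta)\ln(1-\delta)\le -\delta^2/2$. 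This follows from the expansion $(1-\delta)\ln(1-\delta)=-\delta+\sum_{j\ge2}\frac{\delta^j}{j(j-1)}\ge -\delta+\frac{\delta^2}{2}$.

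The only step needing care is the calculus inequality for the upper tail, which is handled by the derivative comparison above. Everything else is routine.
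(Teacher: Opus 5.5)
Your argument is correct. The convexity chord bounds, the choices $t=\ln(1+\delta)$ and $t=-\ln(1-\delta)$, both calculus inequalities, and your separate handling of $\delta=1$ via $\Pr[X_i=0]\le 1-p_i$ all check out, and $\delta=0$ is trivial because the bound is then $1$. The paper gives no proof of this theorem; it quotes it as a standard concentration inequality, and your exponential-moment (Chernoff--Cram\'er) derivation is the usual textbook proof behind that citation.
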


\begin{theorem}\label{thm:Hoeffding}
    (Hoeffding Inequality) Let $X_1, \dots, X_n$ be independent and identically distributed random variables on $[a, b]$, let $\overline{X}= \frac{1}{n}\sum_{i\in[n]}X_i$, and let $\mu := \E[\overline{X}]$. Then
\begin{align*}
    &\Pr\left[\overline{X}\leq \mu-t\right] < \exp\left( -\frac{2nt^2}{(b-a)^2} \right) \\
    &\Pr\left[\overline{X}\geq \mu+t\right] < \exp\left( -\frac{2nt^2}{(b-a)^2} \right)
\end{align*}
\end{theorem}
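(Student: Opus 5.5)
The plan is the standard Chernoff--Hoeffding exponential-moment argument for $\overline{X}$. By symmetry it suffices to prove the upper tail. Applying the upper-tail bound to the variables $a+b-X_i$, which are i.i.d.\ on $[a,b]$ with mean $a+b-\mu$, turns the event $\overline{X}\le \mu-t$ into an upper deviation of size $t$. I also assume $t>0$; for $t\le 0$ the strict inequality fails for the lower tail with $t=0$, so the statement should be read with $t>0$ as intended.

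For the upper tail, fix $\lambda>0$ and apply Markov's inequality to $e^{\lambda\sum_i (X_i-\E X_i)}$, giving
\[
\Pr\left[\overline{X}\ge \mu+t\right]\le e^{-\lambda n t}\prod_{i=1}^n \E\left[e^{\lambda(X_i-\E X_i)}\right].
\]
Independence is what lets the expectation factor into this product. The key step is Hoeffding's lemma: a mean-zero variable $Y$ supported on an interval of length $b-a$ satisfies $\E[e^{\lambda Y}]\le e^{\lambda^2(b-a)^2/8}$.

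I would prove Hoeffding's lemma in three moves. First, use convexity of $y\mapsto e^{\lambda y}$ to bound $e^{\lambda Y}$ by the chord through the endpoints. Taking expectations gives $\E e^{\lambda Y}\le e^{L(h)}$, where $h=\lambda(b-a)$ and $L(h)=-hp+\ln(1-p+pe^h)$ for a suitable $p\in[0,1]$ determined by the endpoints. Second, check $L(0)=L'(0)=0$. Third, show $L''(h)=\theta(1-\theta)\le 1/4$ for some $\theta\in[0,1]$. Taylor's theorem then gives $L(h)\le h^2/8$. This calculus lemma is the only real obstacle; everything else is bookkeeping.

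Combining, the upper tail is at most $\exp(-\lambda n t+n\lambda^2(b-a)^2/8)$. Optimizing at $\lambda=4t/(b-a)^2$ yields $\exp(-2nt^2/(b-a)^2)$.

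The displayed inequalities are strict. Since this is a standard result quoted from the literature, I would simply note that the bound holds with $\le$, and that strictness is immaterial for how the paper uses it. Alternatively, one can observe that equality in Markov's inequality would force the exponential variable to be a.s.\ two-valued in a way incompatible with equality in Hoeffding's lemma, but I would not belabor this point.
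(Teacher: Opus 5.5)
The paper does not prove this statement; it quotes Hoeffding's inequality as a standard concentration bound and uses it as a black box. Your proposal is the standard textbook proof, and it is correct. The steps are: the exponential-moment (Chernoff) bound; independence to factor the moment generating function; Hoeffding's lemma via the convexity chord and the bound $L''(h)=\theta(1-\theta)\le 1/4$; and finally the choice $\lambda=4t/(b-a)^2$. Your argument never uses the identical-distribution hypothesis. You are right that $t>0$ is needed, and so is $b>a$, so that the right-hand side is defined.

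On strictness, you can be precise rather than leaving it loose, and the reason you sketch is slightly off. Put $S=\sum_i(X_i-\E X_i)$ and $c=e^{\lambda n t}$. Markov's inequality $\Pr[e^{\lambda S}\ge c]\le \E[e^{\lambda S}]/c$ is an equality only if $e^{\lambda S}$ vanishes on $\{e^{\lambda S}<c\}$ and equals $c$ on its complement. Since $e^{\lambda S}>0$, equality would force $S=nt$ almost surely, which contradicts $\E S=0<nt$. So the Markov step is already strict for every $t>0$. Both strict inequalities then follow, the lower tail via your reflection $X_i\mapsto a+b-X_i$. There is no need to examine the equality case of Hoeffding's lemma, and the obstruction is not a "two-valued" variable but a constant one.
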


The Matrix Bernstein Inequality bounds the norm of a sum of random Hermitian matrices (see, e.g., \cite{tropp_matrix_bounds}). 

\begin{theorem} \label{matrix bernstein} (Theorem 1.4 in \cite{tropp_matrix_bounds})
Consider a finite sequence $\{\bm{X}_i\}$ of independent, random, Hermitian matrices with dimension $N$. Assume that each random matrix satisfies $\mathbb{E}\left[\bm{X}_i\right] = \bm{O}$ and $\norm{\bm{X}_i} \leq B$ with probability 1.  
Then, for all $t \geq 0$,
\[
\mathbb{P} \left[ \norm{ \sum_{i} \bm{X}_i } \geq t \right] 
\leq N \cdot \exp\!\left( \frac{-t^{2}/2}{\sigma^{2} + (B t)/3} \right),
\]
where 
\[
\sigma^{2} := \left\| \sum_{i} \mathbb{E}\left[\bm{X}_i^{2}\right] \right\|.
\]
\end{theorem}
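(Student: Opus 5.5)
The statement is the Matrix Bernstein inequality, and I would prove it by the matrix Laplace transform method of Tropp. Write $\bm{Y} = \sum_i \bm{X}_i$. Since $\bm{Y}$ is Hermitian, $\norm{\bm{Y}} = \max\{\lambda_{\max}(\bm{Y}), \lambda_{\max}(-\bm{Y})\}$. The family $\{-\bm{X}_i\}$ satisfies the same hypotheses with the same $B$ and $\sigma^2$, so it suffices to bound $\mathbb{P}[\lambda_{\max}(\bm{Y}) \geq t]$ and then take a union bound. Strictly, the union bound doubles the dimensional prefactor, giving $2N$. I would either accept this (it is harmless for the later applications) or note that the cited form is the one-sided eigenvalue statement.

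\textbf{Step 1 (Laplace transform).} For any $\theta > 0$, Markov's inequality applied to $e^{\theta\lambda_{\max}(\bm{Y})} = \lambda_{\max}(e^{\theta \bm{Y}}) \leq \mathrm{tr}\, e^{\theta\bm{Y}}$ gives
\begin{align*}
\mathbb{P}[\lambda_{\max}(\bm{Y})\geq t] \leq e^{-\theta t}\, \mathbb{E}\,\mathrm{tr}\exp(\theta \bm{Y}).
\end{align*}

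\textbf{Step 2 (subadditivity of matrix cumulants).} I would show that
\begin{align*}
\mathbb{E}\,\mathrm{tr}\exp\Big(\sum_i \theta\bm{X}_i\Big) \leq \mathrm{tr}\exp\Big(\sum_i \log \mathbb{E}\, e^{\theta \bm{X}_i}\Big).
\end{align*}
The tool is Lieb's theorem: for fixed Hermitian $\bm{H}$, the map $\bm{A}\mapsto \mathrm{tr}\exp(\bm{H}+\log \bm{A})$ is concave on positive definite matrices. I would condition on all but one summand and apply Jensen's inequality to that summand, using independence. Iterating this over the summands one at a time gives the display. This is the step I expect to be the main obstacle. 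The scalar argument, which factors the expectation of a product of exponentials, fails because the matrices do not commute, so Lieb's concavity theorem must be invoked as a black box.

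\textbf{Step 3 (Bernstein moment-generating-function bound).} Let $g(\theta) = \frac{\theta^2/2}{1-\theta B/3}$ for $0<\theta<3/B$. First I would establish the scalar inequality $e^{\theta x} \leq 1 + \theta x + g(\theta)x^2$ for $x \leq B$. It follows from the Taylor series, using the fact that $(e^{y}-1-y)/y^2$ is increasing together with the bound $k! \geq 2\cdot 3^{k-2}$. Next, the transfer rule for Hermitian matrices lifts this to $e^{\theta\bm{X}_i} \preceq \bm{I} + \theta\bm{X}_i + g(\theta)\bm{X}_i^2$. Taking expectations and using $\mathbb{E}\bm{X}_i = \bm{O}$ gives $\mathbb{E}\, e^{\theta\bm{X}_i} \preceq \bm{I} + g(\theta)\mathbb{E}\bm{X}_i^2 \preceq \exp(g(\theta)\mathbb{E}\bm{X}_i^2)$. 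Finally, operator monotonicity of $\log$ yields $\log\mathbb{E}\, e^{\theta\bm{X}_i} \preceq g(\theta)\,\mathbb{E}\bm{X}_i^2$.

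\textbf{Step 4 (conclusion).} Since $\mathrm{tr}\exp$ is monotone with respect to $\preceq$, combining Steps 1–3 gives
\begin{align*}
\mathbb{P}[\lambda_{\max}(\bm{Y})\geq t] \leq e^{-\theta t}\,\mathrm{tr}\exp\Big(g(\theta)\sum_i\mathbb{E}\bm{X}_i^2\Big) \leq N\, e^{-\theta t + g(\theta)\sigma^2}.
\end{align*}
The last inequality holds because the trace of a positive semidefinite $N\times N$ matrix is at most $N$ times its largest eigenvalue. I would then choose $\theta = t/(\sigma^2 + Bt/3)$, which lies in $(0, 3/B)$. With this choice $1-\theta B/3 = \sigma^2/(\sigma^2+Bt/3)$, so $g(\theta)\sigma^2 = \frac{t^2/2}{\sigma^2+Bt/3}$ and $\theta t = \frac{t^2}{\sigma^2+Bt/3}$. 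The exponent is therefore $-\frac{t^2/2}{\sigma^2+Bt/3}$, which gives the stated bound.
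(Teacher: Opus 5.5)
Your argument is Tropp's proof: the Laplace transform bound, Lieb's concavity theorem for subadditivity of the matrix cumulants, the Bernstein bound $\log\mathbb{E}\,e^{\theta\bm{X}_i}\preceq g(\theta)\,\mathbb{E}\bm{X}_i^2$, and then optimizing $\theta$. That is exactly what the paper relies on, since it gives no proof beyond the citation, and each step checks out. Your $\theta$ lies in $(0,3/B)$ only when $t>0$ and $\sigma^2>0$, but $\sigma^2=0$ forces every $\bm{X}_i=\bm{O}$ almost surely, so the remaining cases are trivial.

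Your caveat about the prefactor is not slack in your method; it is a real defect in the statement as written. Tropp's Theorem 1.4 bounds only $\mathbb{P}[\lambda_{\max}(\sum_i\bm{X}_i)\geq t]$. With $\norm{\cdot}$ and prefactor $N$ the inequality is false: take $N=1$ and a single $\bm{X}_1=\pm1$ with equal probability, so $B=\sigma^2=t=1$, and the left side is $1>e^{-3/8}$. The correct norm version is the $2N$ bound your union bound produces. This only changes the paper's failure probabilities by a constant factor.
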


\subsection{$\LIN{k}{R}$}

We now state definitions and problems related to sparse systems of linear equations over a ring.

\begin{definition}\label{defn: kLIN}
    (exact $k$-sparse instance) Let $R$ be a ring. A $\LIN{k}{R}$ instance $\I = \{(\bm{a}_i,b_i)_{i\in[m]}\}$ consists of vectors $\bm{a}_i\in R^n_{k}$ and $b_i\in R$.
\end{definition}

\begin{definition}\label{defn: at most kLIN}
    (at most $k$-sparse instance) Let $R$ be a ring. A $\ALIN{k}{R}$ instance $\I = \{(\bm{a}_i,b_i)_{i\in[m]}\}$ consists of vectors $\bm{a}_i\in R^n_{\leq k}$ and $b_i\in R$.
\end{definition}

For a $\LIN{k}{R}$ or $\ALIN{k}{R}$ instance $\I$, we view $(\bm{a},b)\in\I$ as a $R$-linear homogeneous equation in $n$ variables $x_1,\ldots,x_n$: $\sum_{j\in[n]}{a_{j}x_j} = b$.

\begin{remark}
    The exact $k$-sparse instance has been defined and analysed in \cite{kocurek2025spectral} for abelian groups and in numerous other works for $\Z_2$ (e.g. \cite{oldest_ref_for_sparse_LPN,sparse_LPN_ref2,sparse_LPN_ref3,sos_upper_bound,kikuchi_ref1}). On the other hand, \cite{sparse_LWE} defined sparse LWE as an \emph{at most} $k$-sparse instance over $\Z_p$ for prime $p$. We consider and analyse both versions, and compare with their works in Section~\ref{sec:discussion}.
\end{remark}

Next, we define a notion of \emph{advantage} of an assignment for an instance.

\begin{definition}\label{def advantage}
    Given a $\LIN{k}{R}$ (or $\ALIN{k}{R}$) instance $\I$ and an assignment $\bm{x}\in R^n$, the advantage of assignment $\bm{x}$ for $\I$ is given by
    $$\texttt{adv}_{\I}(\bm{x}) := \frac{1}{\abs{\I}}\sum_{(\bm{a},b)\in\I}\text{Re}\left(\chi\left(b-\innerproduct{\bm{a}}{\bm{x}}\right)\right) \in [-1,1]$$
    We denote $\texttt{adv}(\I) := \max\limits_{\bm{x}\in R^n}{\abs{\texttt{adv}_{\I}(\bm{x})}}$.
    % It can be seen that 
    % $$\text{adv}_{\I}(\bm{x}) := \frac{m_{sat}(\bm{x}) - (1/2)m_{unsat}(\bm{x})}{m}$$
    % where $m_{sat}(x)$ denotes the number of constraints satisfied by $x$ and $m_{unsat}(x)$ denotes the number of constraints violated by $x$.
\end{definition}

When $R = \Z_2$, Definition \ref{def advantage} relates to the more popular notion of fraction of satisfied constraints by an assignment; indeed $\text{adv}_{\I}(\bm{x})$ is simply the fraction of satisfied minus the fraction of violated constraints. 

When $R=\Z_q$, the advantage takes into account the `extent' to which a constraint is violated. For $q>2$, if $\bm{x}$ violates $(\bm{a},b)$ (i.e. $b\ne\innerproduct{\bm{a}}{\bm{x}}$), $b-\innerproduct{\bm{a}}{\bm{x}}$ can take any value in $\Z_q^*$. For $i\in[m]$, let $b_i = \innerproduct{\bm{a}_i}{\bm{x}} + e_i$ (i.e. $e_i$ is the error term for $\bm{x}$ with respect to $i\textsuperscript{th}$ constraint in the linear equation version; w.l.o.g., $-q/2\leq e_i\leq q/2$). We can rewrite the advantage as follows:
\begin{align}
    \texttt{adv}_{\I}(\bm{x}) = \frac{1}{m}\sum_{i\in[m]}\cos{\left(\frac{2\pi\abs{e_i}}{q}\right)} = 1 - \frac{2}{m}\left(\sum_{i\in[m]}\sin^2{\left(\frac{\pi\abs{e_i}}{q}\right)}\right) \label{advantage interpretation}
\end{align}
The $\sin(\cdot)$ function increases in the range of $\abs{e_i}$, so in some sense, each term in the summand measures the `extent' to which $\bm{x}$ violates $(\bm{a}_i,b_i)$.\\

Now we will define random and planted versions of $\LIN{k}{R}$ and $\ALIN{k}{R}$. To do this for $\ALIN{k}{R}$ we first define a distribution for choosing left-hand sides, generalizing the distribution described in \cite{sparse_LWE}.

\begin{definition} \label{defn sparse coefficients} (analogous to Definition 4.1 in \cite{sparse_LWE})
For a dimension \( n \in \mathbb{N} \), a sparsity parameter \(k\), a ring $R$, we define the distribution \( D_{\coeff, n, k, R} \) that first samples a uniformly random set \( S \in \binom{[n]}{k} \). Then, the distribution samples a random vector as follows: Sample \( \bm{a} \in R^n \) so that each coordinate \( a_j \) is set to $0$ if \( j \notin S \). Otherwise, \( a_j \) is chosen at random from \(R\).
\end{definition}

\begin{obs}\label{obs:pt}
    While sampling a vector from the distribution \( D_{\coeff, n, k, R} \), the probability of sampling $\bm{0}\in R^n$ is $p_0 = 1/\abs{R}^k$.
\end{obs}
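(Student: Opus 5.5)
The claim is that a draw from $D_{\coeff, n, k, R}$ equals $\bm{0}$ with probability $p_0 = 1/\abs{R}^k$. I will condition on the support set $S$ chosen in the first stage of Definition~\ref{defn sparse coefficients}, compute the probability of $\bm{0}$ for each fixed $S$, and then average over $S$.

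Fix any $S \in \binom{[n]}{k}$. Given $S$, every coordinate $a_j$ with $j \notin S$ is deterministically $0$. So $\bm{a} = \bm{0}$ exactly when $a_j = 0$ for all $j \in S$. The definition says each such $a_j$ is chosen at random from $R$. I read this as independently and uniformly, which matches \cite{sparse_LWE}; this reading is the only point that needs care. Under it, each $a_j$ with $j\in S$ equals $0$ with probability $1/\abs{R}$, and independence across the $k$ coordinates of $S$ gives
$$\Pr[\bm{a}=\bm{0}\mid S] = \abs{R}^{-k}.$$

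This conditional probability does not depend on $S$. By the law of total probability over the uniform choice of $S\in\binom{[n]}{k}$,
$$p_0 = \sum_{S} \Pr[S]\cdot \abs{R}^{-k} = \abs{R}^{-k}.$$

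More generally, for any fixed $t$, the number of nonzero coordinates of $\bm{a}$ is distributed as $\mathrm{Bin}(k, 1-1/\abs{R})$, because the $k$ coordinates in $S$ are independent and each is nonzero with probability $1 - 1/\abs{R}$. The observation is the case $t=0$ of this fact, and the same computation is probably what the later analysis of $\ALIN{k}{R}$ will use.
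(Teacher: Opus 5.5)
Your proof is correct and is exactly the reasoning the paper leaves implicit: conditioned on the support $S$, $\bm{a}=\bm{0}$ if and only if the $k$ independent uniform coordinates on $S$ all vanish, which gives $\abs{R}^{-k}$ for every $S$. Your reading of ``chosen at random'' as independent and uniform is the intended one, and your binomial remark is a correct, harmless generalization.
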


\begin{definition}
    (Random instance) A $\LIN{k}{R}$ (resp. $\ALIN{k}{R}$) instance $\I = \{(\bm{a}_i,b_i)_{i\in[m]}\}$ is said to be a random instance if $\bm{a}_i\xleftarrow{\text{iid}} \text{Unif}(R^n_k)$ (resp. $\bm{a}_i\xleftarrow{\text{iid}} D_{\coeff, n, k, R}$) and $b_i \xleftarrow{\text{iid}} \text{Unif}(R)$.
\end{definition}

\begin{definition}
    (Planted instance) A $\LIN{k}{R}$ (resp. $\ALIN{k}{R}$) instance $\I = \{(\bm{a}_i,b_i)_{i\in[m]}\}$ is said to be a planted instance with respect to secret vector $\bm{s}\in R^n$ and noise distribution $\Psi$ on $R$ if $\bm{a}_i\xleftarrow{\text{iid}} \text{Unif}(R^n_k)$ (resp. $\bm{a}_i\xleftarrow{\text{iid}} D_{\coeff, n, k, R}$) and for each $i\in[m]$, $b_i = \innerproduct{\bm{a_i}}{\bm{s}}+e_i$ where $e_i\xleftarrow{iid} \Psi$.
\end{definition}

\subsubsection{Sparse LWE and Sparse LPN.} 

We first recall the discrete Gaussian distribution on the integers and $\Z_q$.

\begin{definition}\label{discrete gaussian on integers}
The discrete Gaussian distribution on the integers with width $r$, denoted $\Psi_{\Z,r}$, is the distribution having probability mass at $x\in\Z$ proportional to $\exp\left(-\pi x^2/r^2\right)$.
\end{definition}

\begin{definition}\label{discrete gaussian on Zq}
The discrete Gaussian distribution on $\Z_q$ with width $r$, denoted $\Psi_{\Z_q,r}$ is the distribution having probability mass at $j\in\Z_q$ proportional to $\sum\limits_{x\in q\Z+j}e^{-\pi x^2/r^2}$.
\end{definition}

Let us state a couple of observations related to the discrete Gaussian, which shall be useful later.
\begin{obs}\label{obs:psi0 discrete gaussian}
    For $1<r<q$, we have $\Psi_{\Z_q,r}(0) = \Theta(1/r)$ (see Appendix~\ref{app:psi0 for DG}).
\end{obs}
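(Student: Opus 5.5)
We need to show that $\Psi_{\Z_q,r}(0)=\Theta(1/r)$ for $1<r<q$. Write
$$\Psi_{\Z_q,r}(0)=\frac{\sum_{x\in q\Z}e^{-\pi x^2/r^2}}{\sum_{x\in\Z}e^{-\pi x^2/r^2}},$$
since the normalizing constant for $\Psi_{\Z_q,r}$ is the total mass over all residues, which is the full sum over $\Z$. The plan is to show that the numerator is $\Theta(1)$ and the denominator is $\Theta(r)$.

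\textbf{Denominator.} Let $\rho_r(\Z):=\sum_{x\in\Z}e^{-\pi x^2/r^2}$. Comparing the monotone tails with integrals gives
$$\int_{-\infty}^{\infty}e^{-\pi t^2/r^2}\,dt-1\le \rho_r(\Z)\le \int_{-\infty}^{\infty}e^{-\pi t^2/r^2}\,dt+1.$$
The integral equals $r$, so $r-1\le\rho_r(\Z)\le r+1$. The upper bound gives $\rho_r(\Z)=O(r)$ directly.

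For the lower bound, $r-1$ degenerates as $r\to1$. We handle this separately: for every $r>1$, $\rho_r(\Z)\ge 1+2e^{-\pi/r^2}\ge 1+2e^{-\pi}$. Combining the two, $\rho_r(\Z)\ge\max(r-1,\,1+2e^{-\pi})=\Omega(r)$. Alternatively, Poisson summation gives $\rho_r(\Z)=r\sum_{y}e^{-\pi r^2y^2}\ge r$ immediately.

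\textbf{Numerator.} The $x=0$ term contributes $1$, so the numerator is at least $1$. For the upper bound, substitute $x=qy$ to get
$$\sum_{y\in\Z}e^{-\pi q^2y^2/r^2}\le \sum_{y\in\Z}e^{-\pi y^2},$$
using $q>r$. The right-hand side is an absolute constant, roughly $1.09$. Hence the numerator is $\Theta(1)$.

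Dividing the two estimates yields $\Psi_{\Z_q,r}(0)=\Theta(1/r)$. The only delicate point is the lower bound on $\rho_r(\Z)$ for $r$ close to $1$, which is covered either by the term-counting bound $1+2e^{-\pi}$ or by Poisson summation. The rest is routine.
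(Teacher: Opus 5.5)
Your proof is correct and follows the paper's argument. The paper also rewrites the numerator as $\sum_{x\in\Z}e^{-\pi(q/r)^2x^2}$, applies Poisson summation to the denominator to get $r\sum_{x\in\Z}e^{-\pi r^2x^2}$, and notes that both sums are $\Theta(1)$ for $1<r<q$; this is exactly your Poisson alternative, and your integral-comparison bound $r-1\le\rho_r(\Z)\le r+1$ (patched near $r=1$ by the $x=0,\pm1$ terms) is a valid Fourier-free substitute that also supplies the explicit constants the paper leaves as ``bounded by constants.''
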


\begin{obs}\label{obs: rho for discrete gaussian}
    Let $\chi:\Z_q\to\mathbb{C}$ be the canonical character of $\Z_q$. For $1<r<q$, we have $\E[\chi(\Psi_{\Z_q,r})] \geq e^{-\pi r^2/q^2}$ (see Appendix~\ref{app:rho for DG}).
\end{obs}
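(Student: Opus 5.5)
The approach is to compute $\E[\chi(\Psi_{\Z_q,r})]$ exactly with Poisson summation and then compare it term by term with the normalizing constant. Write $\rho(x) := e^{-\pi x^2/r^2}$ for $x\in\Z$. By Definition~\ref{discrete gaussian on Zq}, $\Psi_{\Z_q,r}$ is the image of $\Psi_{\Z,r}$ under reduction mod $q$. For $\Z_q$ the canonical character is $\chi(j)=\omega_q^{j}$, and $\omega_q^{x}$ depends only on $x \bmod q$. Hence
\begin{align*}
\E[\chi(\Psi_{\Z_q,r})] = \frac{\sum_{x\in\Z}\rho(x)\,e^{2\pi\iota x/q}}{\sum_{x\in\Z}\rho(x)}.
\end{align*}
This rewriting removes the periodization and reduces everything to sums over $\Z$.

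Next I will apply Poisson summation. The Fourier transform of $\rho$ is $\hat\rho(\xi)=r\,e^{-\pi r^2\xi^2}$. For the shifted version, $\sum_{x\in\Z}\rho(x)e^{2\pi\iota x t} = r\sum_{y\in\Z}e^{-\pi r^2(y+t)^2}$. Taking $t=1/q$ in the numerator and $t=0$ in the denominator, the factors of $r$ cancel:
\begin{align*}
\E[\chi(\Psi_{\Z_q,r})] = \frac{\sum_{y\in\Z}e^{-\pi r^2(y+1/q)^2}}{\sum_{y\in\Z}e^{-\pi r^2y^2}}.
\end{align*}
In particular the expectation is real and positive. This is also visible directly, since $\rho$ is symmetric and so the imaginary parts cancel.

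The final step expands $(y+1/q)^2=y^2+2y/q+1/q^2$ and pulls out the constant factor $e^{-\pi r^2/q^2}$. The numerator becomes
\begin{align*}
e^{-\pi r^2/q^2}\sum_{y\in\Z}e^{-\pi r^2y^2}e^{-2\pi r^2y/q}.
\end{align*}
I then pair the terms $y$ and $-y$ for $y\geq1$:
\begin{align*}
e^{-\pi r^2y^2}\left(e^{-2\pi r^2y/q}+e^{2\pi r^2y/q}\right)=2e^{-\pi r^2y^2}\cosh(2\pi r^2y/q)\geq 2e^{-\pi r^2y^2}.
\end{align*}
Summing over $y$, the remaining sum is at least the denominator, which gives the bound $\geq e^{-\pi r^2/q^2}$.

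The only step needing care is the Poisson summation with the correct normalization of $\hat\rho$ and the sign of the shift. I expect no real obstacle, since the Gaussian is Schwartz and Poisson summation applies directly. The hypothesis $1<r<q$ is not actually needed for this inequality.
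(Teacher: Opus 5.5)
Your proof is correct and follows essentially the same route as the paper. Both arguments rewrite the expectation as a ratio of sums over $\Z$, apply Poisson summation, and compare the shifted Gaussian sum with the unshifted one via $\cosh\geq 1$. The only cosmetic difference is that the paper first takes real parts and transforms $\cos(2\pi x/q)e^{-\pi x^2/r^2}$, which produces the two shifts $\pm 1/q$ directly. You instead transform the complex exponential with a single shift and then pair $y$ with $-y$. You are also right that the hypothesis $1<r<q$ is not needed.
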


We now recall the $k$-sparse LWE problem from \cite{sparse_LWE}.

\begin{definition} \label{defn sparse LWE} (Definition 4.3 in \cite{sparse_LWE}) 
For an integer dimension \( n \), sparsity parameter \( k \), sample complexity \( m \), modulus \( q \), and noise parameter \( r \), we define the \(k\mbox{-}\text{LWE}_{n,m,q,r}\) (resp. \(k\mbox{-}\text{ALWE}_{n,m,q,r}\)) problem as a distinguishing problem that is given $m$ samples from either of two $\LIN{k}{\Z_q}$ distributions, and is required to guess the distribution. The first distribution is a planted $\LIN{k}{\Z_q}$ (resp. $\ALIN{k}{\Z_q}$) distribution with noise distribution $\Psi_{\Z_q,r}$ and the second distribution is a random $\LIN{k}{\Z_q}$ (resp. $\ALIN{k}{\Z_q}$) distribution.
\end{definition}

The standard decisional LWE problem is identical to \(k\mbox{-}\text{ALWE}_{n,m,q,r}\)
problem for $k=n$.

We also define a sparse version of the Learning Parity with Noise(LPN) problem for higher modulus, which (the non-sparse version) was defined in \cite{LPN_for_higher_fields}.

\begin{definition} \label{defn sparse LPN} 
For an integer dimension \( n \), sparsity parameter \( k \), sample complexity \( m \), modulus \( q \), and noise parameter \(\mu\), we define the \(k\mbox{-}\text{LPN}_{n,m,q,\mu}\) (resp. \(k\mbox{-}\text{ALPN}_{n,m,q,\mu}\)) problem as a distinguishing problem that is given $m$ samples from either of two $\LIN{k}{\Z_q}$ distributions, and is required to guess the distribution. The first distribution is a planted $\LIN{k}{\Z_q}$ (resp. $\ALIN{k}{\Z_q}$) distribution with noise distribution $\zeta_{\Z_q,\mu}$ on $\Z_q$ where $\zeta_{\Z_q,\mu}(0) = 1-\mu$, $\forall j\in\Z_q\setminus\{0\}\ \zeta_{\Z_q,\mu}(j) = \mu/(q-1)$. The second distribution is a random $\LIN{k}{\Z_q}$ (resp. $\ALIN{k}{\Z_q}$) distribution.
\end{definition}

Let us compute $\E[\chi(\zeta_{\Z_q,\mu})]$ which will be useful for us.
\begin{obs}\label{obs: rho for lpn distribution}
  Let $\chi:\Z_q\to\mathbb{C}$ be the canonical character of $\Z_q$. Then $\E[\chi(\zeta_{\Z_q,\mu})] = 1-\frac{q\mu}{(q-1)}$.
\end{obs}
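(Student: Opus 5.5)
The expectation is a finite sum over the support of $\zeta_{\Z_q,\mu}$, so I will write it out directly from the definition and then collapse the nonzero part using Observation~\ref{obs:sum of chi}.

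\emph{Expanding.} By the definition of $\zeta_{\Z_q,\mu}$,
\begin{align*}
\E[\chi(\zeta_{\Z_q,\mu})] = (1-\mu)\chi(0) + \frac{\mu}{q-1}\sum_{j\in\Z_q\setminus\{0\}}\chi(j).
\end{align*}
Since $\chi$ is a group homomorphism into $(\mathbb{C},\cdot)$, we have $\chi(0)=1$.

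\emph{Collapsing the sum.} Observation~\ref{obs:sum of chi} applied to $R=\Z_q$ gives $\sum_{j\in\Z_q}\chi(j)=0$, hence $\sum_{j\neq 0}\chi(j) = -\chi(0) = -1$. This is just the vanishing of the sum of the $q$\textsuperscript{th} roots of unity $\omega_q^j$, which holds because $q\geq 2$ and $\omega_q\neq 1$.

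\emph{Simplifying.} Substituting,
\begin{align*}
\E[\chi(\zeta_{\Z_q,\mu})] = 1-\mu-\frac{\mu}{q-1} = 1-\frac{q\mu}{q-1},
\end{align*}
which is the claimed value. There is no genuine obstacle here; the only point needing care is the sign bookkeeping, namely that the nonzero characters sum to $-1$ rather than $0$.
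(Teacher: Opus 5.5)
Your proof is correct and follows the paper's argument: expand the expectation over the support of $\zeta_{\Z_q,\mu}$, use $\chi(0)=1$, and apply Observation~\ref{obs:sum of chi} to replace the sum of $\chi$ over the nonzero elements by $-1$. Restricting that sum explicitly to $j\neq 0$ is the careful form of the paper's middle step, which as printed sums over all of $\Z_q$ even though it arrives at the same final value $1-\mu-\frac{\mu}{q-1}$.
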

\begin{proof}
    \begin{align*}
        \E[\chi(\zeta_{\Z_q,\mu})] = \sum_{j\in\Z_q}\zeta_{\Z_q,\mu}(j)\chi(j)\ = 1-\mu + \frac{\mu}{q-1}\sum_{j\in\Z_q}\chi(j)\ = 1-\mu -\frac{\mu}{(q-1)}
    \end{align*}
\end{proof}

\section{A Kikuchi Graph Construction}\label{sec:Kikuchi graph}
In this section, our objective is to distinguish a random instance of $\LIN{k}{R}$ from a planted such instance. 

We will develop the insight of viewing a $k$-sparse $R$-linear equation as the edges of a graph in a large vertex space. Given a $\LIN{k}{R}$ instance $\I$, we wish to produce a directed graph, namely $\K_l(\I)$\ (for some parameter $l$). This technique was introduced for the $k$-XOR problem by Wein. et. al. \cite{kikuchi_ref1} (and implicitly by Hastings \cite{kikuchi_ref2}). Having a graph representative of the instance allows us to use spectral properties of the graph for the distinguishing problem. 

For a head start, here is some intuition on how this is done: Let us consider a transformed space whose dimensions are indexed by $R^n_{k/2}$ (assume for simplicity for now that $k$ is even). Then an equation over indeterminates $\{x_i\}_{i\in[n]}$
$$C: {a_{i_1}}x_{i_1} + \ldots + {a_{i_k}}x_{i_k} = b_i$$ can be converted to an equation over indeterminates $\{y_{\bm{u}}\}_{\bm{u}\in R^{n}_{k/2}}$
$$C': y_{\bm{v}} - y_{\bm{u}} = b_i$$ where $\bm{u},\bm{v}$ satisfy
$$u_{j} = \begin{cases}
    -a_{j} &\text{ for $j=i_1,\ldots,i_{k/2}$ } \\
    0&\text{ otherwise}
\end{cases}\ \ \ v_{j} = \begin{cases}
    a_{j} &\text{ for $j=i_{k/2+1},\ldots,i_{k}$ } \\
    0&\text{ otherwise}
\end{cases}$$ The rationale behind this conversion is this: For $\bm{s}\in R^n$, consider $\bm{s}^{\odot k/2}\in R^{R^n_{k/2}}$. As mentioned in Section~\ref{sec:preliminaries}, this  is defined as follows: $s^{\odot k/2}_{\bm{u}} := \sum_{j\in[n]}u_js_j$. Then $\bm{s}$ satisfies equation $C$ iff $\bm{s}^{\odot k/2}$ satisfies $C'$. 

Now we can define a graph over the vertex set $R^n_{k/2}$, where $C'$ can be viewed as an edge $\bm{u}\rightarrow \bm{v}$ with a sign that can be a function of $b_i$. We will define this sign to be $\chi(b_i)$; this will make the graph's adjacency matrix complex valued and will enable us to utilize the well-studied spectral theory of Hermitian matrices.

The above choice of non-zero indices for $\bm{u},\bm{v}$ is only one of $\binom{k}{k/2}$ possible choices; we can choose any of the $k/2$ indices for $\bm{u}$. Furthermore, note that the following generalizations can also be made:
\begin{itemize}
    \item We need not restrict ourselves to $(k/2)$ non-zero indices, we can have $l$ non-zero indices in $\bm{u},\bm{v}$ for an appropriate $l\geq k/2$.
    \item We also have some freedom to choose the values $u_j, v_j$, for instance we could have defined $u_{i_1} = 1-a_{i_1}, v_{i_1} = 1$ and still maintained that $\bm{s}$ satisfies $C$ iff $\bm{s}^{\odot k/2}$ satisfies $C'$. 
\end{itemize}

\begin{construction}\label{kLIN to 2LIN}
Let $\I$ be a $\LIN{k}{R}$ instance, and $\ceil{k/2}\leq l\leq n$ be a parameter. $\K_l(\I)$ is a directed graph over vertices indexed by $R^n_l$. For every constraint $(\bm{a},b) \in \I$
, $\K_l(\I)$ has edges $\bm{u}\xrightarrow{\chi(b)} \bm{v}$ and $\bm{v}\xrightarrow{\chi(-b)} \bm{u}$ for all $\bm{u},\bm{v}\in R^n_l$  which satisfy $\bm{v}-\bm{u} = \bm{a}$.
\end{construction}

\begin{remark}\label{rem:comparing construction with KM25}
    Our Kikuchi construction differs from that of \cite{kocurek2025spectral} (their case of $k$ even, $R$ field) in two aspects. Firstly, they had edges $\bm{u}\xrightarrow{\chi(\beta b)} \bm{v}$ for $\bm{v}-\bm{u}=\beta\bm{a}$ for all $\beta\in R\setminus\{0\}$. Secondly, they imposed an additional restriction on these edges that the symmetric difference of supports of $\bm{u}$ and $\bm{v}$ should be equal to the support of $\bm{a}$. When we compare our results with theirs in Subsection~\ref{subsec:compare with KM25}, we elaborate on how differing in the first aspect allows us to  distinguish random instances from planted ones for a large class of noise distributions, and on how differing in the second aspect allows us to have a unified and simplified construction for odd $k$ and $R$ being any ring when $\abs{R}>2$.
\end{remark}

For a constraint $(\bm{a},b)\in\I$, how many corresponding edges does $\K_l(\I)$ have? Before venturing to compute this, let us rewrite Construction~\ref{kLIN to 2LIN} in a way that is more amenable to computing this. 

\begin{construction}\label{kLIN to 2LIN alternate}
(Equivalent to Construction~\ref{kLIN to 2LIN}) For every constraint $(\bm{a},b) \in \I$, $\K_l(\I)$ has all edges $(\bm{u},\bm{v},\chi(b))$, $(\bm{v},\bm{u},\chi(-b))$ which satisfy the following:
\begin{enumerate}
    \item For every $j\in[n]$ such that $a_j=0$, exactly one of the following holds:
    \begin{enumerate}
        \item $u_j = v_j = 0$. 
        \item $u_j = v_j \ne 0$.  
    \end{enumerate}
    \item For every $j\in[n]$ such that $a_j\in R\setminus\{0\}$, exactly one of the following holds: 
    \begin{enumerate}
        \item $u_j = -a_j, v_j = 0$. 
        \item $u_j = 0, v_j = a_j$.  
        \item $u_j,v_j\in R\setminus\{0\}$ such that $v_j - u_j = a_j$.
    \end{enumerate}
\end{enumerate}    
\end{construction}

We will now compute $\theta^{(n,l,k)}$ which is the number of edges added in $\K_l(\I)$ corresponding to the constraint $(\bm{a},b)\in\I$.

Let $r_{1a}, r_{1b}, r_{2a}, r_{2b}, r_{2c}$ denote the number of indices on which conditions 1(a), 1(b), 2(a), 2(b), 2(c) are applied respectively. It can be seen that the following must be satisfied:
\begin{align}
    &r_{1a} + r_{1b} + r_{2a} + r_{2b} + r_{2c} = n\ \ \ (\text{total indices}) \label{total indices} \\
    &r_{1b} + r_{2a} + r_{2c} = l\ \ \ (\text{non-zero entries of }\bm{u}) \label{non-zero entries of u}\\
    &r_{1b} + r_{2b} + r_{2c} = l\ \ \ (\text{non-zero entries of }\bm{v}) \label{non-zero entries of v}\\
    &r_{2a} + r_{2b} + r_{2c} = k\ \ \  (\text{non-zero entries of }\bm{a})\label{non-zero entries of a}
\end{align}

If $r_{2c} = r \in \{0,1,\ldots,k\}$, we obtain $r_{2a}=r_{2b}=(k-r)/2$ from \eqref{non-zero entries of u}, \eqref{non-zero entries of v}, \eqref{non-zero entries of a}, $r_{1b} = l-(k+r)/2$ from \eqref{non-zero entries of u}, $r_{1a} = n-l-(k-r)/2$ from \eqref{total indices}. For indices $j$ of condition 1(b), we have $(\abs{R}-1)=\abs{R}-1$ choices of $(u_j,v_j)$, and for indices $j$ of condition 2(c), we have $\abs{R}-2$ choices of $(u_j,v_j)$.
 
Thus we get the number of constraints for each original constraint as: \footnote{in this equation we use the convention that if $s$ is not a non-negative integer or $s>t$, then $\binom{t}{s}=0$}
\begin{align}
    \theta^{(n,l,k)} = 2\sum_{r=0}^t\binom{k}{r}(\abs{R}-2)^r\binom{k-r}{(k-r)/2}\binom{n-k}{l-(k+r)/2}(\abs{R}-1)^{l-(k+r)/2} \label{theta}
\end{align}

\begin{example}
    Let us consider a couple of toy examples to get a better idea on the construction. Let $R=\Z_5$, dimension $n=4$ and $l=3$.

    \begin{enumerate}
        \item Consider $\bm{a}=(1,2,0,0)\in R^n_k$ for $k=2$. Two vectors $\bm{u},\bm{v}\in R^n_l$ corresponding to $\bm{a}$ (satisfying $\bm{v}-\bm{u}=\bm{a}$) are:
        \begin{enumerate}
            \item $\bm{u}=(-1,0,3,4), \bm{v}=(0,2,3,4)$. This corresponds to the case $r_{2c}=0$.
            \item $\bm{u}=(2,2,1,0), \bm{v}=(3,4,1,0)$. This corresponds to the case $r_{2c}=2$.
        \end{enumerate}
        \item Consider $\bm{a}=(1,2,3,0)\in R^n_{k}$ for $k=3$. Two vectors $\bm{u},\bm{v}\in R^n_l$ corresponding to $\bm{a}$ (satisfying $\bm{v}-\bm{u}=\bm{a}$) are:
        \begin{enumerate}
            \item $\bm{u}=(2,0,-3,4), \bm{v}=(3,2,0,4)$. This corresponds to the case $r_{2c}=1$.
            \item $\bm{u}=(1,1,1,0), \bm{v}=(2,3,4,0)$. This corresponds to the case $r_{2c}=3$.
        \end{enumerate}
    \end{enumerate}

\end{example}

Note that in \eqref{theta}, only the terms where $r$ has same parity as $k$ are positive. 

We can obtain two lower bounds for $\theta^{(n,l,k)}$. One of them will come from the summand of $r=0$ or $r=1$, depending on whether $k$ is even or odd, respectively.
 
\begin{align}
    \text{Assuming $\ceil{k/2}\leq l\leq n-\floor{k/2}$, }\ \ \theta^{(n,l,k)} \geq \frac{2^{k-1}}{(k+1)}\binom{n-k}{l-\ceil{k/2}}(\abs{R}-1)^{l-\floor{k/2}} \label{theta lower bound_r=0}
\end{align}

The other bound will come from the $r=k$ summand term.

\begin{align}
    \text{Assuming $k\leq l\leq n$, }\ \ \theta^{(n,l,k)} \geq 2^{1-k}\binom{n-k}{l-k}(\abs{R}-1)^{l} \label{theta lower bound r=k}
\end{align}

Depending on whether $\abs{R}$ is small or large, we will use these bounds for our analysis. 

\section{Distinguishing by spectral method over Kikuchi Graphs}\label{sec:Kikuchi spectral}

The Kikuchi Graph $\K_l(\I)$ can be naturally associated with its adjacency matrix (for simplicity, we use the same notation to denote the adjacency matrix)\footnote{if there are parallel edges, we add their signs in the corresponding matrix entry}. By Construction~\ref{kLIN to 2LIN}, for every edge $\bm{u}\xrightarrow{\chi(b)}\bm{v}$, we have the edge $\bm{v}\xrightarrow{\chi(-b)}\bm{u}$ present in the graph. Therefore $\K_l(\I)$ is Hermitian. 

Our certificate to determine whether $\I$ is planted or random will be the \emph{spectral norm} of $\K_l(\I)$. To give some intuition for this, let us relate the optimal advantage of $\I$ with the spectral norm of $\K_l(\I)$.
\begin{proposition}\label{prop:justifying Kikuchi construction}
    Let $\I = \{(\bm{a}_i,b_i)\}_{i\in[m]}$ be a $\LIN{k}{R}$ instance. For every assignment $\bm{x}\in R^n$, define $\bm{y}\in\mathbb{C}^{R^n_{l}}$ as $y_{\bm{u}} := \chi\left(-\sum_{j\in[n]}u_jx_j\right)$. Then $\text{adv}_{\I}(\bm{x}) = \frac{1}{m\theta^{(n,l,k)}}\bm{y}^{\dagger} \K_l(\I)\bm{y}$. Further, $\texttt{adv}(\I) \leq \frac{N}{m\theta^{(n,l,k)}}\norm{\K_l(\I)}$ for $N = \binom{n}{l}(\abs{R}-1)^l$.
\end{proposition}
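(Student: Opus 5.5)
We expand the quadratic form edge by edge. By definition of the adjacency matrix (summing signs over parallel edges),
\[
\bm{y}^{\dagger}\K_l(\I)\bm{y} \;=\; \sum_{(\bm{a},b)\in\I}\ \sum_{\substack{\bm{u},\bm{v}\in R^n_l\\ \bm{v}-\bm{u}=\bm{a}}}\Big( y_{\bm{u}}^*\,\chi(b)\,y_{\bm{v}} + y_{\bm{v}}^*\,\chi(-b)\,y_{\bm{u}}\Big).
\]
We will first evaluate a single term. Since $\chi$ is a homomorphism to the unit circle, $y_{\bm{u}}^* = \chi(\innerproduct{\bm{u}}{\bm{x}})$. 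Hence
\[
y_{\bm{u}}^*\chi(b)y_{\bm{v}} = \chi\left(b - \innerproduct{\bm{v}-\bm{u}}{\bm{x}}\right) = \chi\left(b-\innerproduct{\bm{a}}{\bm{x}}\right).
\]
This uses that $\innerproduct{\cdot}{\bm{x}}$ is additive in its first argument, which holds since $R$ is commutative. The reverse edge contributes the complex conjugate, $\chi(-(b-\innerproduct{\bm{a}}{\bm{x}}))$. So each pair of edges contributes $2\,\text{Re}\,\chi(b-\innerproduct{\bm{a}}{\bm{x}})$, and this value does not depend on the particular pair $(\bm{u},\bm{v})$.

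Next we count. For a fixed constraint, the number of pairs $(\bm{u},\bm{v})$ with $\bm{v}-\bm{u}=\bm{a}$ is exactly $\theta^{(n,l,k)}/2$, because \eqref{theta} includes the factor $2$ for the two directed edges. This count depends only on $k$ (and $n$, $l$, $\abs{R}$), not on $\bm{a}$, as the enumeration via Construction~\ref{kLIN to 2LIN alternate} shows. Therefore
\[
\bm{y}^\dagger\K_l(\I)\bm{y} = \theta^{(n,l,k)}\sum_{(\bm{a},b)\in\I}\text{Re}\,\chi\left(b-\innerproduct{\bm{a}}{\bm{x}}\right) = m\,\theta^{(n,l,k)}\,\texttt{adv}_\I(\bm{x}),
\]
which is the claimed identity.

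For the norm bound, note that every coordinate of $\bm{y}$ has modulus $1$. Hence $\norm{\bm{y}}^2 = \abs{R^n_l} = \binom{n}{l}(\abs{R}-1)^l = N$. Since $\K_l(\I)$ is Hermitian, $\abs{\bm{y}^\dagger\K_l(\I)\bm{y}} \le \norm{\K_l(\I)}\norm{\bm{y}}^2 = N\norm{\K_l(\I)}$. Dividing by $m\theta^{(n,l,k)}$ and maximizing over $\bm{x}$ gives $\texttt{adv}(\I)\le \frac{N}{m\theta^{(n,l,k)}}\norm{\K_l(\I)}$.

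The only delicate point is the bookkeeping of the factor $2$ in $\theta$. We must check that the count in \eqref{theta} counts directed edges, i.e.\ that it is twice the number of unordered solution pairs. Everything else is direct.
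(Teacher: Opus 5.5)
Your proof is correct and follows the paper's argument essentially step by step. You expand the quadratic form over the $\theta^{(n,l,k)}/2$ edge pairs per constraint, note that each pair contributes $2\,\text{Re}\,\chi(b-\innerproduct{\bm{a}}{\bm{x}})$, and then bound $\abs{\bm{y}^\dagger\K_l(\I)\bm{y}}\le\norm{\bm{y}}^2\norm{\K_l(\I)}=N\norm{\K_l(\I)}$. Two small wording points, neither affecting the argument: $\theta^{(n,l,k)}$ is twice the number of \emph{ordered} pairs $(\bm{u},\bm{v})$ with $\bm{v}-\bm{u}=\bm{a}$, as you correctly use in the body, not twice the number of unordered pairs (the two differ when $2\bm{a}=\bm{0}$). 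Also, additivity of $\innerproduct{\cdot}{\bm{x}}$ follows from distributivity alone and does not need commutativity.
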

\begin{proof} 
    \begin{align*}
        \bm{y}^{\dagger} \K_l(\I)\bm{y} &= \sum_{\bm{u},\bm{v}\in R^n_l}y_{\bm{u}}^*\cdot\left(\K_l(\I)\right)_{\bm{u}\bm{v}}\cdot y_{\bm{v}} \\
        &= \sum_{i\in[m]}\sum_{\substack{\bm{v}-\bm{u}=\bm{a}_i\\\bm{u}-\bm{v}=\bm{a}_i}}y_{\bm{u}}^*\cdot\left(\K_l(\I)\right)_{\bm{u}\bm{v}}\cdot y_{\bm{v}} \\
        &= \sum_{i\in[m]}\sum_{\bm{v}-\bm{u}=\bm{a}_i}\chi\left(\sum_{j\in[n]}u_jx_j\right)\cdot\chi\left(b_i\right)\cdot \chi\left(-\sum_{j\in[n]}v_jx_j\right) +\\ &\ \ \ \sum_{i\in[m]}\sum_{\bm{u}-\bm{v}=\bm{a}_i}\chi\left(\sum_{j\in[n]}u_jx_j\right)\cdot\chi\left(-b_i\right)\cdot \chi\left(-\sum_{j\in[n]}v_jx_j\right)  \\
        &= \sum_{i\in[m]}\left(\sum_{\bm{v}-\bm{u}=\bm{a}_i}\chi\left(b_i - \innerproduct{\bm{v}-\bm{u}}{\bm{x}}\right) + \sum_{\bm{u}-\bm{v}=\bm{a}_i}\chi\left(-b_i+ \innerproduct{\bm{u}-\bm{v}}{\bm{x}}\right)\right) \\
        &= \sum_{i\in[m]}\frac{\theta^{(n,l,k)}}{2}\cdotp\left(\chi\left(b_i-\innerproduct{\bm{a}_i}{\bm{x}}\right)+\chi\left(-b_i+\innerproduct{\bm{a}_i}{\bm{x}}\right)\right)\ \ (\text{each $(\bm{a}_i,b_i)$ had }\theta^{(n,l,k)}/2\text{ edge pairs})\\
        &= \theta^{(n,l,k)}\sum_{i\in[m]}\text{Re}\left(\chi\left(b_i-\innerproduct{\bm{a}_i}{\bm{x}}\right)\right)\\
        &= m\theta^{(n,l,k)}\texttt{adv}_{\I}(\bm{x})
    \end{align*}
    Further, note that for every $\bm{y}\in\mathbb{C}^{R^n_l}$ with $ \abs{y_{\bm{u}}}=1\ \forall\ \bm{u}\in R^n_l$ we have $\norm{\bm{y}}=\sqrt{N}$ and thus 
    \begin{align*}
        \abs{\bm{y}^{\dagger} \K_l(\I)\bm{y}} = N\cdot\abs{\left(\frac{\bm{y}^{\dagger}}{\norm{\bm{y}^{\dagger}}}\right)\K_l(\I)\left(\frac{\bm{y}}{\norm{\bm{y}}}\right)} \leq N\norm{\K_l(\I)}
    \end{align*}
    and the second part follows.
\end{proof}

Proposition~\ref{prop:justifying Kikuchi construction} gives some evidence that estimating the spectral norm of $\norm{\K_l(\I)}$ may be helpful for distinguishing a random vs planted instance. One consequence of this proposition is that we can hope $\norm{\K_l(\I)}$ to be high when $\I$ is planted. We will show that this is indeed the case with high probability. Moreover, to separate the random instance using this quantity, we also need to show that $\norm{\K_l(\I)}$ is low with high probability when $\I$ is random. The rest of the argument in this section is about proving tail bounds for $\norm{\K_l(\I)}$ in both cases.

\subsection{Independent Summands of $\K_l(\I)$}

Firstly for a $\LIN{k}{R}$ instance $\I = \{(\bm{a}_i,b_i)\}_{i\in[m]}$ we write $\K_l(\I)$ as a sum of independent Hermitian matrices:
\begin{align}
    \K_l(\I) &= \sum_{i\in[m]}\bm{K}_i \\
    \bm{K}_i(\bm{u},\bm{v}) &:= \begin{cases}
    \chi(b_i) & \text{if }\bm{v}-\bm{u} = \bm{a}_i \\
    \chi(-b_i) & \text{if }\bm{u}-\bm{v} = \bm{a}_i \\
    0 & \text{otherwise}
\end{cases} \label{eqn:Xi}
\end{align}

By definition, $\bm{K}_i$ are Hermitian, and independent as the constraints in $\I$ are independent. 

We now prepare to establish some key properties of $\bm{K}_i$. Observe that the graph corresponding to $\bm{K}_i$ is a collection of disjoint \emph{chains} (see illustration in Figure~\ref{fig:chain}). This is because the graph corresponding to $\bm{K}_i$ never contains constraints $(\bm{u},\bm{v},\chi(b_i))$ and $(\bm{u}',\bm{v}',\chi(b_i))$ with  ($\bm{u} = \bm{u}'\ \land\ \bm{v} \neq \bm{v}'$), or ($\bm{u} \neq \bm{u}'\ \land\ \bm{v} = \bm{v}'$). 

\begin{figure}[h]
    \centering
\[
    \begin{tikzcd}
    \dots 
    %\bm{u}_1
    \arrow[r, "\chi(b_i)", shift left=1ex] 
    & 
    \bm{u}
    \arrow[l, "\chi(-b_i)", shift left=1ex] 
    \arrow[r, "\chi(b_i)", shift left=1ex] 
    & 
    \bm{v} 
    \arrow[l, "\chi(-b_i)", shift left=1ex] 
    \arrow[r, "\chi(b_i)", shift left=1ex]
    & 
    \dots 
    \arrow[l, "\chi(-b_i)", shift left=1ex]
\end{tikzcd}
\]
    \caption{A chain in the graph of $\bm{K}_i$}
    \label{fig:chain}
\end{figure}

\begin{obs}\label{obs:Ki degree 2}
    The graph corresponding to $\bm{K}_i$ has maximum out-degree 2.
\end{obs}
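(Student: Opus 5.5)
The plan is to bound, for a fixed vertex $\bm{u}\in R^n_l$, the number of nonzero entries in row $\bm{u}$ of $\bm{K}_i$, since the out-edges of $\bm{u}$ in the graph of $\bm{K}_i$ are exactly these entries. By the definition \eqref{eqn:Xi}, an entry $\bm{K}_i(\bm{u},\bm{v})$ is nonzero only if $\bm{v}-\bm{u}=\bm{a}_i$ or $\bm{u}-\bm{v}=\bm{a}_i$. The first condition forces $\bm{v}=\bm{u}+\bm{a}_i$ and the second forces $\bm{v}=\bm{u}-\bm{a}_i$. So for each $\bm{u}$ there are at most two candidates for $\bm{v}$, namely $\bm{u}\pm\bm{a}_i$, and each is an actual out-neighbour only when it lies in $R^n_l$.

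One subtlety concerns multi-edges. The two candidates $\bm{u}+\bm{a}_i$ and $\bm{u}-\bm{a}_i$ coincide exactly when $2\bm{a}_i=\bm{0}$, which can happen, for example when $R=\Z_2$ or when $R$ has characteristic $2$. In that case the same $\bm{v}$ satisfies both conditions, and the paper's footnote convention adds the signs of the parallel edges, giving $\chi(b_i)+\chi(-b_i)$ in a single entry. Either way we are counting edges with multiplicity, and there are at most two of them: one arising from each of the two defining conditions. The case split in \eqref{eqn:Xi} is written as if only one case applies, so I will note explicitly that the count of edges (with multiplicity) is still at most $2$.

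Combining these, every vertex has out-degree at most $2$, so $D\le 2$. By symmetry of the two conditions the in-degree is also at most $2$, which is consistent with the chain picture in Figure~\ref{fig:chain}. There is no real obstacle here; the only point requiring care is the characteristic-$2$ coincidence just discussed.
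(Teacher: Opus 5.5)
Your argument is correct and is essentially the paper's: the paper notes that no two $\chi(b_i)$-edges share a tail or a head (and likewise for the $\chi(-b_i)$-edges), so $\bm{u}$ has at most one out-edge of each sign, namely to $\bm{u}+\bm{a}_i$ and to $\bm{u}-\bm{a}_i$, which is exactly your two-candidate count. Your remark on the case $2\bm{a}_i=\bm{0}$ is extra care that the paper's proof leaves implicit: there the two candidates coincide and the two parallel edges, counted with multiplicity, still number at most $2$.
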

\begin{proof}
$\bm{K}_i$ never contains constraints $(\bm{u},\bm{v},\chi(b_i))$ and $(\bm{u}',\bm{v}',\chi(b_i))$ with  ($\bm{u} = \bm{u}'\ \land\ \bm{v} \neq \bm{v}'$), or ($\bm{u} \neq \bm{u}'\ \land\ \bm{v} = \bm{v}'$). So, for a vertex $\bm{u}$, there are only 4 possible edges incident to $\bm{u}$: $$(\bm{u},\bm{v},\chi(b_i)), (\bm{v},\bm{u},\chi(-b_i)), (\bm{v}',\bm{u},\chi(b_i)), (\bm{u},\bm{v}',\chi(-b_i))$$ In particular, $\bm{u}$ has an out-degree of at most 2. 
\end{proof}

Let us prove an upper bound on the spectral norm of the adjacency matrix of a signed graph. This is a generalization of the well known spectral norm bound of an unsigned graph.

\begin{lemma}\label{spectral norm bounded by max degree}
    Let $\G$ be a directed graph (possibly with multiple parallel edges and self loops) where each edge is of the form $(I,J,\beta)$ with $\beta\in\mathbb{C}, \abs{\beta}\leq 1$. Then $\norm{\G} \leq D(\G)$. 
\end{lemma}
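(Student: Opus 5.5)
The plan is to bound the spectral norm of the adjacency matrix $A$ of $\G$ through its eigenvalues, using a max-coordinate (Gershgorin-type) argument. The adjacency matrix is formed as in Section~\ref{sec:Kikuchi spectral}: parallel edges from $I$ to $J$ have their signs added into the entry $A_{IJ}$. The first step is entrywise bookkeeping. For each row $I$ the triangle inequality and $\abs{\beta}\le 1$ give
\[
\sum_J \abs{A_{IJ}} \;\le\; \sum_{(I,J,\beta)\in\G}\abs{\beta} \;\le\; \outdeg(I) \;\le\; D(\G).
\]
Thus every row of $A$ has absolute sum at most $D(\G)$, so $\norm{A}_{\infty\to\infty}\le D(\G)$. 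Self loops contribute to the diagonal entry and are counted in the out-degree, so they cause no trouble.

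The second step bounds every eigenvalue. Let $A\bm{x}=\lambda\bm{x}$ with $\bm{x}\ne\bm{0}$, and choose $I$ with $\abs{x_I}=\max_J\abs{x_J}>0$. Then
\[
\abs{\lambda}\,\abs{x_I} = \Big|\sum_J A_{IJ}x_J\Big| \le \sum_J\abs{A_{IJ}}\,\abs{x_J} \le D(\G)\,\abs{x_I},
\]
and dividing by $\abs{x_I}$ gives $\abs{\lambda}\le D(\G)$. Hence the spectral radius of $A$ is at most $D(\G)$.

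The third step, and the main obstacle, is to pass from the spectral radius to the spectral norm $\norm{\G}=\norm{A}$. This identification holds exactly when $A$ is normal. I will invoke it for the graphs to which the lemma is applied in the paper: $\K_l(\I)$ and the summands $\bm{K}_i$ of \eqref{eqn:Xi}. Both are shown to be Hermitian right after Construction~\ref{kLIN to 2LIN}, because every edge $\bm{u}\xrightarrow{\chi(b)}\bm{v}$ comes paired with $\bm{v}\xrightarrow{\chi(-b)}\bm{u}$ and $\chi(-b)=\chi(b)^*$. For Hermitian $A$ the spectral theorem gives $\norm{A}=\max\abs{\lambda}\le D(\G)$, which is the claim.

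This normality is genuinely needed for the argument, not a technicality. For a non-normal matrix the max-coordinate bound controls only the eigenvalues, and the spectral norm can be strictly larger. For example, a directed star with $t$ leaves all pointing to one centre has $D=1$ but norm $\sqrt t$. The proof therefore reads $\norm{\G}$ in the sense the paper uses it, for these Hermitian adjacency matrices, and in that setting the bound $\norm{\G}\le D(\G)$ follows.

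A route that stays closer to an arbitrary directed graph is the Schur test, $\norm{A}\le\sqrt{\norm{A}_{1\to1}\,\norm{A}_{\infty\to\infty}}$. Here $\norm{A}_{\infty\to\infty}$ is the maximum absolute row sum, bounded by the out-degree as above, and $\norm{A}_{1\to1}$ is the maximum absolute column sum, bounded by the in-degree. For a Hermitian matrix the row and column absolute sums coincide, so this route also yields $\norm{A}\le D(\G)$. In the write-up I will present the eigenvalue argument and record the Schur-test bound as a cross-check.
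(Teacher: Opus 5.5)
Your max-coordinate eigenvector argument is the paper's proof, and your caveat is correct. The paper starts from ``an eigenvector corresponding to the eigenvalue of $\G$ whose absolute value is $\norm{\G}$,'' which silently assumes normality, and your directed star ($D(\G)=1$, $\norm{\G}=\sqrt{t}$, all eigenvalues $0$) shows the lemma is false as literally stated for general directed graphs. Every use in the paper is Hermitian, so your restricted version suffices: the lemma is applied to $\bm{K}_i$, $A_{\bm{s},\bm{a}_i}$, $\sum_i\bm{K}_i^2$, $\sum_i\E[\bm{K}_i^2]$ and $\sum_i A_{\bm{s},\bm{a}_i}^2$, never to $\K_l(\I)$ itself. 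One small fix: normality implies $\norm{A}=\max\abs{\lambda}$ but is not equivalent to it, so ``exactly when'' should read ``whenever.''
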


\begin{proof}
    Let $\bm{\phi}$ be an eigenvector corresponding to the eigenvalue of $\G$ whose absolute value is $\norm{\G}$. Let $I$ be a vertex in $\G$ such that $\abs{\phi_I} \geq \abs{\phi_J}$ for all vertices $J$ of $\G$. Since $\bm{\phi}\ne\bm{0}$, we have $\abs{\phi_I}\ne0$. Now
    \begin{align*}
        \norm{\G} &= \frac{\abs{(\G\phi)_I}}{\abs{\phi_I}} \leq \frac{\sum_{(I,J,\beta)\in E(\G)}\abs{\beta\cdot\phi_J}}{\abs{\phi_I}} \\
        &\leq \frac{\sum_{(I,J,\beta)\in E(\G)}{\abs{\phi_J}}}{\abs{\phi_I}} \leq \text{out-deg}(I) \leq D(\G)
    \end{align*}
\end{proof}

The following upper bound for $\norm{\bm{K}_i}$ follows as a corollary of Observation~\ref{obs:Ki degree 2} and Lemma~\ref{spectral norm bounded by max degree}. 

\begin{corollary}\label{Xi norm bound}
    Let $\I = \{(\bm{a}_i,b_i)\}_{i\in[m]}$ be any $\LIN{k}{R}$ instance, and $\bm{K}_i$ be as defined in \eqref{eqn:Xi}. Then $\norm{\bm{K}_i}\leq 2$.
\end{corollary}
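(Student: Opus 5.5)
The plan is to apply Lemma~\ref{spectral norm bounded by max degree} directly to the graph $\G_i$ whose adjacency matrix is $\bm{K}_i$. This gives $\norm{\bm{K}_i}\leq D(\G_i)$, and Observation~\ref{obs:Ki degree 2} then gives $D(\G_i)\leq 2$. To do this, two hypotheses of the lemma have to be checked.

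\textbf{First hypothesis: edge weights have modulus at most $1$.} Every edge of $\G_i$ carries weight $\chi(b_i)$ or $\chi(-b_i)$. By definition, $\chi$ takes values in products of roots of unity $\omega_{q_1}^{\psi(x)_1}\cdots\omega_{q_t}^{\psi(x)_t}$. Hence $\abs{\chi(\pm b_i)}=1\leq 1$, as required.

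\textbf{Second hypothesis: parallel entries do not inflate the count.} The matrix $\bm{K}_i$ in \eqref{eqn:Xi} is defined entrywise, and each entry is a single value. The one thing to watch is whether both conditions $\bm{v}-\bm{u}=\bm{a}_i$ and $\bm{u}-\bm{v}=\bm{a}_i$ could hold at once, which happens exactly when $2\bm{a}_i=\bm{0}$ (for example in characteristic $2$).

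\begin{itemize}
\item If we read the case definition literally (first case taking precedence), each entry has modulus at most $1$. Then $\G_i$ has, for each vertex $\bm{u}$, at most the two out-neighbours $\bm{u}+\bm{a}_i$ and $\bm{u}-\bm{a}_i$, so the out-degree is at most $2$, as in Observation~\ref{obs:Ki degree 2}.
\item If instead we view $\bm{K}_i$ as the sum of the edges of $\K_l(\I)$ coming from constraint $i$, then the entry is a sum of two unit-modulus terms. Lemma~\ref{spectral norm bounded by max degree} explicitly allows parallel edges, and in this case $\bm{u}+\bm{a}_i=\bm{u}-\bm{a}_i$, so each vertex still has at most two outgoing edges counted with multiplicity.
\end{itemize}

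Either way, the lemma applies with $D(\G_i)\leq 2$, which yields $\norm{\bm{K}_i}\leq 2$.

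This is essentially immediate from the two earlier results. The only mild obstacle is making sure the out-degree bound counts parallel edges correctly, and the case split above handles that.
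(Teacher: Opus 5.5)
Your proposal is correct and follows the paper, which obtains the bound in one line from Observation~\ref{obs:Ki degree 2} and Lemma~\ref{spectral norm bounded by max degree}; your checks of unit-modulus weights and of the degenerate case $2\bm{a}_i=\bm{0}$ are extra care the paper skips. One caveat comes from the paper itself: the lemma's eigenvector argument needs $\norm{\G}$ to be attained by an eigenvalue, which is guaranteed for normal matrices but not for general directed graphs (the in-star on $N$ vertices with edges $j\to 1$ has $D(\G)=1$ but norm $\sqrt{N-1}$), so the hypothesis that actually matters is that $\bm{K}_i$ is Hermitian or normal---true in every case you consider---or you can bypass the issue via $\norm{\bm{K}_i}\leq\sqrt{\norm{\bm{K}_i}_1\norm{\bm{K}_i}_\infty}\leq 2$, since in-degrees are also at most $2$.
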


Next, we state a result regarding sum of squares of $\{\bm{K}_i\}_{i\in[m]}$. Looking ahead, this will be useful to compute the variance statistic corresponding to $\K_l(\I)$. The key observation is that $\bm{K}_i^2$ can be viewed as an adjacency matrix of a graph that captures ``2-hops'' of the original graph of $\bm{K}_i$. 

\begin{lemma}\label{variance bound}
    Let $\I = \{(\bm{a}_i,b_i)\}_{i\in[m]}$ be any $\LIN{k}{R}$ instance, and $\bm{K}_i$ be as defined in \eqref{eqn:Xi}. Then $\norm{\sum_{i\in[m]}\bm{K}_i^2} \leq 2D(\K_l(\I))$, $\norm{\sum_{i\in[m]}\E\left[\bm{K}_i^2\right]} \leq 2D(\K_l(\I))$ (where the expectation is over the randomness of $\{b_i\}_{i}$)\footnote{$D(\K_l(\I))$ does not depend on $\{b_i\}_{i}$ as the latter only contribute to signs of the edges}. 
\end{lemma}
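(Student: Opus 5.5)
The plan is to compute the entries of $\bm{K}_i^2$ explicitly, view $\sum_i \bm{K}_i^2$ as the adjacency matrix of an auxiliary signed multigraph of ``2-hops'', and then apply Lemma~\ref{spectral norm bounded by max degree} to it. By \eqref{eqn:Xi}, $(\bm{K}_i^2)(\bm{u},\bm{w}) = \sum_{\bm{v}}\bm{K}_i(\bm{u},\bm{v})\bm{K}_i(\bm{v},\bm{w})$, and each nonzero summand corresponds to a path $\bm{u}\to\bm{v}\to\bm{w}$ of length two in the graph of $\bm{K}_i$. Its weight is a product of two values in $\{\chi(b_i),\chi(-b_i)\}$, so it has modulus exactly $1$. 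Thus $\sum_i\bm{K}_i^2$ is the adjacency matrix of a directed multigraph $\G_2$ on $R^n_l$: for each $i$ and each 2-step walk $\bm{u}\to\bm{v}\to\bm{w}$ in the graph of $\bm{K}_i$, $\G_2$ has an edge $(\bm{u},\bm{w},\beta)$ with $\abs{\beta}= 1$. Self-loops $\bm{u}\to\bm{v}\to\bm{u}$ are allowed, and the lemma permits them.

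Next I bound the out-degree of $\G_2$. A 2-step walk from $\bm{u}$ within the graph of $\bm{K}_i$ is determined by its first edge $\bm{u}\to\bm{v}$ together with an out-edge of $\bm{v}$. By Observation~\ref{obs:Ki degree 2}, $\bm{v}$ has at most $2$ out-edges in the graph of $\bm{K}_i$. Hence
\[
\outdeg_{\G_2}(\bm{u}) \leq \sum_i 2\cdot\outdeg_{\bm{K}_i}(\bm{u}) = 2\,\outdeg_{\K_l(\I)}(\bm{u}) \leq 2D(\K_l(\I)).
\]
The middle equality holds because the out-edges of $\bm{u}$ in $\K_l(\I)$ are exactly the disjoint union over $i$ of its out-edges in the graphs of the $\bm{K}_i$, with parallel edges counted as the footnote convention prescribes. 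Lemma~\ref{spectral norm bounded by max degree} then gives the first claim, $\norm{\sum_i\bm{K}_i^2}\leq 2D(\K_l(\I))$.

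For the expectation, I fix the $\bm{a}_i$ and average over the $b_i$. The support pattern of each $\bm{K}_i$, and hence of each walk, does not depend on $b_i$. So $\sum_i\E[\bm{K}_i^2]$ is the adjacency matrix of the same multigraph $\G_2$, with each weight $\beta$ replaced by $\E[\beta]$. Since $\abs{\E[\beta]}\leq\E\abs{\beta}=1$, Lemma~\ref{spectral norm bounded by max degree} applies again with the same degree bound, giving $\norm{\sum_i\E[\bm{K}_i^2]}\leq 2D(\K_l(\I))$.

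The main point needing care is the degree accounting. Summing entries within a fixed $i$ could merge several walks between the same pair of endpoints, and the lemma needs weights of modulus at most $1$. Keeping each walk as a separate parallel edge, rather than summing entries, avoids this, since the lemma allows parallel edges. Alternatively, one can note that within a chain, two walks from $\bm{u}$ to the same $\bm{w}$ can only both be self-loops, and that merged case still fits the out-degree count. I expect the rest to be routine.
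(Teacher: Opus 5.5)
Your proposal is correct and follows the paper's proof essentially verbatim. You view $\sum_i \bm{K}_i^2$ as a 2-hop multigraph, bound its out-degrees by $2\,\outdeg_{\K_l(\I)}(\bm{u})$ using Observation~\ref{obs:Ki degree 2}, apply Lemma~\ref{spectral norm bounded by max degree}, and use $\abs{\E[\beta]}\leq\E\abs{\beta}=1$ for the expectation.

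Keeping each 2-step walk as its own parallel edge is the right bookkeeping. Your alternative remark that merged walks can only be self-loops is not quite true, since a component of the graph of $\bm{K}_i$ may be a cycle. For example, over $\Z_4$ one has $\bm{u}+2\bm{a}_i=\bm{u}-2\bm{a}_i$, so two non-loop walks can share an endpoint and a merged entry can have modulus $2$. This does not affect your main argument.
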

\begin{proof}
In the graph of $\bm{K}_i$, consider any edge pair
\[
    \begin{tikzcd}
    \dots\  
    \bm{u} 
    \arrow[r, "\chi(b_i)", shift left=1ex] 
    & 
    \bm{v} 
    \arrow[l, "\chi(-b_i)", shift left=1ex] 
    \dots 
\end{tikzcd}
\]
which is part of a chain. By Observation~\ref{obs:Ki degree 2}, each vertex in the graph of $\bm{K}_i$ has out-degree at most 2, so the edge $(\bm{u},\bm{v},\chi(b_i))$ can contribute to at most 2 outgoing edges of $\bm{u}$ in the graph corresponding to $\bm{K}_i^2$ (One is $\bm{u}\xrightarrow{\chi(b_i)}\bm{v}\xrightarrow{\chi(-b_i)}\bm{u}$ and the other is possibly $\bm{u}\xrightarrow{\chi(b_i)}\bm{v}\xrightarrow{\chi(b_i)}\bm{w}$). 

Let $\G$ be the graph corresponding to $\sum_{i\in[m]}\bm{K}_i^2$. We then have $\text{out-deg}_{\G}(\bm{u})\leq 2\cdotp\text{out-deg}_{\K_l(\I)}(\bm{u})$ for each vertex $\bm{u}$. By applying Lemma~\ref{spectral norm bounded by max degree} on $\G$, we obtain $\norm{\sum_{i\in[m]}\bm{K}_i^2} \leq 2D(\K_l(\I))$.

Further, for any edge $\bm{u}\xrightarrow{\beta}\bm{v}$ of $\bm{K}_i^2$, $\abs{\E[\beta]}\leq \E[\abs{\beta}]=1$ by Jensen's inequality. Thus the graph $\E[\G] = \sum_{i\in[m]}\E\left[\bm{K}_i^2\right]$ also satisfies the premise of Lemma~\ref{spectral norm bounded by max degree} with $\text{out-deg}_{\E\left[\G\right]}(\bm{u}) = \text{out-deg}_{\G}(\bm{u})\leq 2\cdotp\text{out-deg}_{\K_l(\I)}(\bm{u})$ for each vertex $\bm{u}$, and the result follows.
\end{proof}

Let us now obtain an upper bound on $D(\K_l(\I))$, for which we use the distribution of the `left-hand sides', i.e. the fact that $\bm{a}_i\xleftarrow{\text{iid}}\text{Unif}(R^n_k)$. Recall that this is the setting in both the random and planted instances. 

\begin{lemma}\label{app:expected degree in one step}
    Let $\I = \{(\bm{a}_i,b_i)\}_{i\in[m]}$ be a $\LIN{k}{R}$ instance where $\bm{a}_i\xleftarrow{\text{iid}}\text{Unif}(R^n_k)$. Let $N = \binom{n}{l}(\abs{R}-1)^l$, and $\theta^{(n,l,k)}$ be as defined in \eqref{theta}. For any $\bm{u}\in R^n_l$, let $d_{\bm{u}}^{(i)}$ be the random variable denoting the number of edges (at most $2$) corresponding to $(\bm{a}_i,b_i)$ outgoing from $\bm{u}$ in $\K_l(\I)$. Then $\mathbb{E}\left[d_{\bm{u}}^{(i)}\right] = \frac{\theta^{(n,l,k)}}{N}$. 
\end{lemma}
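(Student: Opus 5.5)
The plan is a double-counting argument combined with a symmetry argument. Fix a single constraint $(\bm{a},b)$ with $\bm{a}\in R^n_k$ and let $E(\bm{a})$ be the set of directed edges it contributes to $\K_l(\I)$. By the definition of $\theta^{(n,l,k)}$ via Construction~\ref{kLIN to 2LIN alternate}, $\abs{E(\bm{a})}=\theta^{(n,l,k)}$ for every $\bm{a}\in R^n_k$. Every edge has exactly one tail, so
\[
\sum_{\bm{u}\in R^n_l} d_{\bm{u}}(\bm{a}) = \theta^{(n,l,k)},
\]
where $d_{\bm{u}}(\bm{a})$ is the out-degree of $\bm{u}$ due to $\bm{a}$. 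Averaging over $\bm{a}\leftarrow\text{Unif}(R^n_k)$ gives $\sum_{\bm{u}}\E[d^{(i)}_{\bm{u}}]=\theta^{(n,l,k)}$. Since $\abs{R^n_l}=\binom{n}{l}(\abs{R}-1)^l=N$, it then suffices to show that $\E[d^{(i)}_{\bm{u}}]$ does not depend on $\bm{u}$.

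For the independence from $\bm{u}$, I would exhibit a group $G$ of bijections of $R^n$ that preserves $R^n_l$ and $R^n_k$, acts transitively on $R^n_l$, and preserves the edge relation. The relation is $\bm{u}\to\bm{v}$ iff $\bm{v}-\bm{u}=\bm{a}$, with a reverse edge iff $\bm{u}-\bm{v}=\bm{a}$. Coordinate permutations are linear and preserve sparsity, so they qualify, but coordinatewise multiplication by units does not reach all nonzero values in a non-field ring. The key point is that the map $g$ need not be linear: it is enough that it commutes with the relation $\bm{v}-\bm{u}=\bm{a}$ in the sense that $d_{g\bm{u}}(g'\bm{a})=d_{\bm{u}}(\bm{a})$ for some bijection $g'$ of $R^n_k$. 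Then the uniformity of $\bm{a}$ transfers $\E[d_{\bm{u}}]$ to $\E[d_{g\bm{u}}]$.

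The step I expect to be the main obstacle is finding enough such maps, and I would handle it by avoiding symmetry and computing directly. Fix $\bm{u}$ with support $T$, $\abs{T}=l$. A uniform $\bm{a}\in R^n_k$ is a uniform $k$-set $S$ with i.i.d.\ uniform nonzero entries on $S$. There is an out-edge $\bm{u}\to\bm{u}+\bm{a}$ iff $\bm{u}+\bm{a}\in R^n_l$, and an out-edge $\bm{u}\to\bm{u}-\bm{a}$ iff $\bm{u}-\bm{a}\in R^n_l$. The support of $\bm{u}+\bm{a}$ contains $S\setminus T$ and $T\setminus S$, while each $j\in S\cap T$ is in the support iff $a_j\neq -u_j$. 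Because $a_j$ is uniform on $R\setminus\{0\}$ and $u_j\neq0$, the event $a_j=-u_j$ has probability $1/(\abs{R}-1)$ regardless of the value of $u_j$. Hence $\Pr[\bm{u}+\bm{a}\in R^n_l]$ depends only on $(n,l,k,\abs{R})$. The same holds for $\bm{u}-\bm{a}$, since $-\bm{a}$ is again uniform. So $\E[d^{(i)}_{\bm{u}}]$ is independent of $\bm{u}$, and combining this with the double-counting identity gives $\E[d^{(i)}_{\bm{u}}]=\theta^{(n,l,k)}/N$.

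As a consistency check, one can expand this probability as a sum over $\abs{S\cap T}$ and over how many entries of $S\cap T$ cancel, and match it term by term with \eqref{theta} divided by $N$. The argument above makes that computation unnecessary.
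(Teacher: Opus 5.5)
Your proof is correct and follows the paper's route. Each constraint contributes exactly $\theta^{(n,l,k)}$ directed edges, so the expected out-degrees sum to $\theta^{(n,l,k)}$ over the $N$ vertices. It then remains to show that the number of $\bm{a}\in R^n_k$ with $\bm{u}\pm\bm{a}\in R^n_l$ does not depend on $\bm{u}$. The paper only asserts this ``by symmetry,'' whereas your direct computation proves it: each $j\in S\cap T$ cancels with probability exactly $1/(\abs{R}-1)$, whatever the nonzero value $u_j$ is. That matters for non-field rings such as $\mathbb{Z}_4$, where no additive automorphism sends $1$ to $2$, so the naive symmetry argument you abandoned would indeed fail.
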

\begin{proof}
    $\bm{a}_i$ is equally likely to be any vector in $R^n_k$. By symmetry, for every $\bm{u}'\in R^n_l$, there is an equal number of vertex pairs $(\bm{u}',\bm{v})$ such that $\bm{v}-\bm{u}'\in R^n_k$. Since there are (deterministic) $\theta^{(n,l,k)}$ edges added in $\K_l(\I)$ for $(\bm{a}_i,b_i)$, we get from the above two observations that $\E\left[d_{\bm{u}}^{(i)}\right] = \frac{\theta^{(n,l,k)}}{N}$. 
\end{proof}

\begin{proposition}\label{upper bound on max degree}
    Let $\I = \{(\bm{a}_i,b_i)\}_{i\in[m]}$ be a $\LIN{k}{R}$ instance where $\bm{a}_i\xleftarrow{\text{iid}}\text{Unif}(R^n_k)$. Let $N = \binom{n}{l}(\abs{R}-1)^l$, $\Delta = \frac{m\theta^{(n,l,k)}}{N}$. Then for any $\delta > 0$, we have $D(\K_l(\I)) \;\leq\; (1+\delta)\Delta$
except with probability at most $N \cdot \exp\!\left(-\,\frac{\delta^{2}\Delta}{2(2+\delta)}\, \right)$.
\end{proposition}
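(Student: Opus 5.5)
The plan is a per-vertex Chernoff bound followed by a union bound over all $N$ vertices. Fix a vertex $\bm{u}\in R^n_l$. Its out-degree in $\K_l(\I)$ is $\outdeg(\bm{u}) = \sum_{i\in[m]} d_{\bm{u}}^{(i)}$, where $d_{\bm{u}}^{(i)}$ is as in Lemma~\ref{app:expected degree in one step}. Since the $\bm{a}_i$ are drawn i.i.d. from $\text{Unif}(R^n_k)$ and $d_{\bm{u}}^{(i)}$ depends only on $\bm{a}_i$, these summands are independent. By Observation~\ref{obs:Ki degree 2} each lies in $\{0,1,2\}$. By Lemma~\ref{app:expected degree in one step}, $\E[d_{\bm{u}}^{(i)}] = \theta^{(n,l,k)}/N$, so $\E[\outdeg(\bm{u})] = m\theta^{(n,l,k)}/N = \Delta$.

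To apply the Chernoff bound (Theorem~\ref{thm:chernoff}), whose summands must lie in $[0,1]$, rescale: set $X_i := d_{\bm{u}}^{(i)}/2 \in [0,1]$. Then $X=\sum_i X_i$ has mean $\mu = \Delta/2$. The upper tail gives
\[
\Pr\left[\outdeg(\bm{u}) \geq (1+\delta)\Delta\right] = \Pr\left[X \geq (1+\delta)\mu\right] \leq \exp\left(-\frac{\delta^2 \Delta}{2(2+\delta)}\right),
\]
which is exactly the exponent in the statement. The factor $2$ in the denominator is the price of rescaling by the maximum out-degree $2$.

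Finally, $D(\K_l(\I)) = \max_{\bm{u}} \outdeg(\bm{u})$ and there are $|R^n_l| = \binom{n}{l}(|R|-1)^l = N$ vertices. A union bound over all $\bm{u}\in R^n_l$ therefore yields $D(\K_l(\I)) \le (1+\delta)\Delta$ except with probability at most $N\exp\left(-\frac{\delta^2\Delta}{2(2+\delta)}\right)$.

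There is no real obstacle. The only points needing care are, first, that independence of the $d_{\bm{u}}^{(i)}$ across $i$ holds because each depends only on $\bm{a}_i$ (and not on the $b_i$, which only affect edge signs). Second, the bound $d_{\bm{u}}^{(i)}\le 2$ is what permits the rescaling and so determines the constant in the exponent.
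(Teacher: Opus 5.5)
Your proposal is correct and matches the paper's proof. Both rescale the independent per-constraint contributions $d_{\bm{u}}^{(i)}\in\{0,1,2\}$ by $1/2$, apply the Chernoff upper tail with $\mu=\Delta/2$ to get the exponent $\delta^{2}\Delta/(2(2+\delta))$, and finish with a union bound over the $N$ vertices of $\K_l(\I)$.
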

\begin{proof}
As a consequence of Lemma~\ref{app:expected degree in one step}, $\outdeg(\bm{u}) = \sum_{i\in[m]}d_{\bm{u}}^{(i)}$ satisfies $$\E[\outdeg(\bm{u})] = \frac{m\theta^{(n,l,k)}}{N} = \Delta$$ Further, $\outdeg(\bm{u})$ is a sum of independent random variables $\left\{d_{\bm{u}}^{(i)}\right\}_{i\in[m]}$ supported on $\{0,1,2\}$, and \\$D(\K_l(\I)) = \max\limits_{\bm{u}\in R^n_l}{\outdeg(\bm{u})}$. Thus we can apply a Chernoff bound $\left(\text{Theorem~\ref{thm:chernoff} on }\left\{d_{\bm{u}}^{(i)}/2\right\}_{i\in[m]}\right)$ followed by a union bound to prove the proposition.    
\end{proof}

\subsection{An upper bound for the random instance}

We will show that when $\I$ is a random $\LIN{k}{R}$ instance, an upper bound on $\norm{\K_l(\I)}$ holds with high probability. To establish this, let us apply the Matrix Bernstein inequality to our case.  

We start with analysing the summands $\bm{K}_i$'s further when $\I$ is a random instance.

\begin{lemma}\label{X_i expected value}
    Let $\I$ be a random $\LIN{k}{R}$ instance, and $\bm{K}_i$ be as defined in \eqref{eqn:Xi}. Then $\E[\bm{K}_i] = \bm{O}$.
\end{lemma}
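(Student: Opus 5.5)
The plan is to compute the expectation entrywise and condition on the left-hand side $\bm{a}_i$. In a random instance, $b_i \sim \text{Unif}(R)$ is drawn independently of $\bm{a}_i$. Fix any pair $(\bm{u},\bm{v})\in R^n_l\times R^n_l$. By \eqref{eqn:Xi}, the entry $\bm{K}_i(\bm{u},\bm{v})$ is $\chi(b_i)\mathbf{1}[\bm{v}-\bm{u}=\bm{a}_i] + \chi(-b_i)\mathbf{1}[\bm{u}-\bm{v}=\bm{a}_i]$. Could both indicators hold simultaneously? That would need $2\bm{a}_i = \bm{0}$ with $\bm{a}_i \ne \bm{0}$, which is possible in rings of characteristic $2$. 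This does not matter, because we will handle each term separately by linearity of expectation and never need the two cases to be disjoint.

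Conditioned on $\bm{a}_i$, the indicators are deterministic. We then get
\[
\E\left[\bm{K}_i(\bm{u},\bm{v}) \,\middle|\, \bm{a}_i\right] = \mathbf{1}[\bm{v}-\bm{u}=\bm{a}_i]\,\E[\chi(b_i)] + \mathbf{1}[\bm{u}-\bm{v}=\bm{a}_i]\,\E[\chi(-b_i)].
\]
Since $b_i$ is uniform on $R$, we have $\E[\chi(b_i)] = \frac{1}{\abs{R}}\sum_{a\in R}\chi(a) = 0$ by Observation~\ref{obs:sum of chi}. The map $b\mapsto -b$ is a bijection of $R$, so $-b_i$ is also uniform, and the same argument gives $\E[\chi(-b_i)]=0$.

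Taking expectation over $\bm{a}_i$ (tower rule) then gives $\E[\bm{K}_i(\bm{u},\bm{v})]=0$ for every entry, that is, $\E[\bm{K}_i]=\bm{O}$.

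There is no real obstacle here. The only point requiring care is to use the independence of $b_i$ from $\bm{a}_i$, so that the conditioning step is valid.
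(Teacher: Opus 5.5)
Your proof is correct and follows the same route as the paper: each entry of $\bm{K}_i$ is $\chi(b_i)$, $\chi(-b_i)$, or $0$, and both characters have mean zero under uniform $b_i$ by Observation~\ref{obs:sum of chi}. Conditioning explicitly on $\bm{a}_i$ and handling the characteristic-$2$ overlap are extra care the paper omits. Note that conditioning actually yields the stronger statement $\E[\bm{K}_i \mid \bm{a}_i]=\bm{O}$, which is the form implicitly needed when Matrix Bernstein is applied with $\sigma^2$ bounded via $D(\K_l(\I))$.
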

\begin{proof}
    An entry of $\bm{K}_i$ can be $\chi(b_i),\chi(-b_i)$ or $0$. Now $\E[\chi(b_i)] = \frac{1}{\abs{R}}\sum_{a\in R}\chi(a) = 0$ from Observation~\ref{obs:sum of chi}. Similarly $\E[\chi(-b_i)] = 0$ and the result follows.
\end{proof}

With this, the antecedent of Theorem \ref{matrix bernstein} is satisfied (taking $\bm{X}_i = \bm{K}_i$) with $\norm{\bm{K}_i} \leq 2$ (from Corollary~\ref{Xi norm bound}) for all $i\in[m]$. Also, from Lemma \ref{variance bound}, we have
\begin{align}\label{X_i sigma2}
    \sigma^2 = \left\| \sum_{i\in[m]} \mathbb{E}\left[\bm{K}_i^{2}\right] \right\| \leq 2 D(\K_l(\I))
\end{align}

Now we choose an appropriate value of $t$ to limit the probability in Theorem \ref{matrix bernstein} to be an inverse polynomial of $N = \abs{R^n_l} = \binom{n}{l}(\abs{R}-1)^l$, say $N^{-\epsilon}$ for some choice of $\epsilon > 0$. Thus we must satisfy
\begin{align}
    &N \cdot \exp\!\left( \frac{-t^{2}/2}{2D(\K_l(\I)) + 2t/3} \right) \leq N^{-\epsilon} \label{quadratic in t}
\end{align}

By simplifying \eqref{quadratic in t} we have the following upper bound for the maximum eigenvalue:

\begin{corollary}\label{upper bound on lambda_max wrt D(G)}
Let $\I = \{(\bm{a}_i,b_i)\}_{i\in[m]}$ be a random $\LIN{k}{R}$ instance of dimension $n$, let $N = \binom{n}{l}(\abs{R}-1)^l$. Then for any $\epsilon>0$, except with probability at most $N^{-\epsilon}$, we have
\begin{align}
    \norm{\K_l(\I)} \;\leq\; 
\frac{2\sqrt{(1+\epsilon)\ln N}}{3}\Big(\sqrt{(1+\epsilon)\ln N} + \sqrt{9D(\K_l(\I))+(1+\epsilon)\ln N}\Big) \label{lambda_max bound wrt D(G)}
\end{align}
\end{corollary}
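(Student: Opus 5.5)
We apply Theorem~\ref{matrix bernstein} with $\bm{X}_i = \bm{K}_i$, conditioning on the left-hand sides $\{\bm{a}_i\}_{i\in[m]}$. Once the $\bm{a}_i$ are fixed, the matrices $\bm{K}_i$ depend only on the independent uniform $b_i$, so they are independent and Hermitian. By Lemma~\ref{X_i expected value} each has $\E[\bm{K}_i]=\bm{O}$, and by Corollary~\ref{Xi norm bound} each satisfies $\norm{\bm{K}_i}\leq 2$, so we may take $B=2$. Lemma~\ref{variance bound} gives $\sigma^2\leq 2D$, where $D:=D(\K_l(\I))$ depends only on the $\bm{a}_i$ and is therefore fixed under the conditioning. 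The dimension is $N=\abs{R^n_l}$. Theorem~\ref{matrix bernstein} then gives, for every $t\geq 0$,
\[
\Pr\left[\norm{\K_l(\I)}\geq t \;\middle|\; \{\bm{a}_i\}\right]\leq N\exp\!\left(\frac{-t^2/2}{2D+2t/3}\right).
\]

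Next we pick $t$ so that the right-hand side is at most $N^{-\epsilon}$, which is condition \eqref{quadratic in t}. Write $L=(1+\epsilon)\ln N$. The requirement $\frac{t^2/2}{2D+2t/3}\geq L$ is equivalent to the quadratic inequality
\[
t^2-\tfrac{4L}{3}t-4DL\geq 0 .
\]
It holds for every $t$ at or above the positive root
\[
t^\ast=\tfrac{2L}{3}+\sqrt{\tfrac{4L^2}{9}+4DL}=\tfrac{2\sqrt{L}}{3}\left(\sqrt{L}+\sqrt{L+9D}\right),
\]
which is exactly the right-hand side of \eqref{lambda_max bound wrt D(G)}. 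Setting $t=t^\ast$ shows that, conditioned on any choice of the $\bm{a}_i$, the event $\norm{\K_l(\I)}> t^\ast$ has probability at most $N^{-\epsilon}$. Averaging over the $\bm{a}_i$ keeps the same unconditional bound, because the threshold is stated in terms of the random quantity $D(\K_l(\I))$ itself.

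The only delicate point is that $\sigma^2$ in Theorem~\ref{matrix bernstein} has to be deterministic, whereas $D$ is random. Conditioning on the $\bm{a}_i$ handles this, since Lemma~\ref{variance bound} takes its expectation only over the $b_i$. The remaining work is the algebra of solving the quadratic, which is routine.
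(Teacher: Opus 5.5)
Your proof is correct and follows the paper's own argument: apply Theorem~\ref{matrix bernstein} to the $\bm{K}_i$ with $B=2$ and $\sigma^2\leq 2D(\K_l(\I))$, then solve the quadratic condition \eqref{quadratic in t} in $t$ with $L=(1+\epsilon)\ln N$, which gives exactly the stated threshold. Your explicit conditioning on the left-hand sides $\{\bm{a}_i\}$ is a welcome clarification of a step the paper leaves implicit: it is what lets the random quantity $D(\K_l(\I))$ serve as a deterministic variance bound, whereas the paper only notes in a footnote that $D(\K_l(\I))$ does not depend on the $b_i$.
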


When the left-hand sides $\bm{a}_i\xleftarrow{\text{iid}}\text{Unif}(R^n_k)$ (which is the case for a random instance), Proposition \ref{upper bound on max degree} gives us that $D(\K_l(\I))$ is upper bounded by $(1+\delta)\Delta$ with high probability, where $\Delta = \frac{m\theta^{(n,l,k)}}{N}$ is the average out-degree of $\K_l(\I)$, and $\delta>0$ is a constant. We can now combine Corollary \ref{upper bound on lambda_max wrt D(G)} and Proposition \ref{upper bound on max degree} to get an upper bound on the spectral norm with respect to $\Delta$. We would like to choose $\Delta$ (which in turn means choosing the number of constraints $m$ in $\I$) so that we get $\norm{\K_l(\I)} \leq \delta \Delta$. Observe that in \eqref{lambda_max bound wrt D(G)} we can impose
\[
\Delta \geq \frac{16\delta + 12}{3\delta^2} (1+\epsilon)\ln N
\]
to get this desired bound for all $\delta>0$. Also, the probability of failure in Proposition \ref{upper bound on max degree} is upper bounded by $N^{-\epsilon}$ for this choice.

\begin{theorem}\label{delta' lower bound random case}
Let $\I = \{(\bm{a}_i,b_i)\}_{i\in[m]}$ be a random $\LIN{k}{R}$ instance of dimension $n$, let $N = \binom{n}{l}(\abs{R}-1)^l$. Suppose $\Delta := \frac{m\theta^{(n,l,k)}}{N} \geq \frac{16\delta + 12}{3\delta^2}(1+\epsilon)\ln N$ for some $\delta,\epsilon > 0$. Then we have $\norm{\K_l(\I)} \leq \delta\Delta$ except with probability at most $2N^{-\epsilon}$.
\end{theorem}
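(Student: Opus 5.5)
We combine Corollary~\ref{upper bound on lambda_max wrt D(G)} with Proposition~\ref{upper bound on max degree} through a union bound. Let $E_1$ be the event that $D(\K_l(\I)) \leq (1+\delta)\Delta$. Let $E_2$ be the event that inequality \eqref{lambda_max bound wrt D(G)} holds. The left-hand sides $\bm{a}_i$ of a random instance are i.i.d.\ uniform on $R^n_k$, so Proposition~\ref{upper bound on max degree} applies. It gives $\Pr[\neg E_1] \leq N\exp\!\left(-\frac{\delta^2\Delta}{2(2+\delta)}\right)$. Corollary~\ref{upper bound on lambda_max wrt D(G)} gives $\Pr[\neg E_2]\leq N^{-\epsilon}$. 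That corollary conditions on the $\bm{a}_i$ and uses only the randomness of the $b_i$, with $\sigma^2$ bounded via Lemma~\ref{variance bound}, so it holds for every realization of the left-hand sides. Once we show that each event fails with probability at most $N^{-\epsilon}$ and that together they force $\norm{\K_l(\I)}\leq\delta\Delta$, the theorem follows with failure probability $2N^{-\epsilon}$.

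\textbf{Step 1 (failure probability of $E_1$).} We need $\frac{\delta^2\Delta}{2(2+\delta)} \geq (1+\epsilon)\ln N$, i.e.\ $\Delta\geq \frac{2(2+\delta)}{\delta^2}(1+\epsilon)\ln N = \frac{12+6\delta}{3\delta^2}(1+\epsilon)\ln N$. The hypothesis provides $\frac{12+16\delta}{3\delta^2}$, which is larger, so this holds.

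\textbf{Step 2 (the norm bound on $E_1\cap E_2$).} Write $L := (1+\epsilon)\ln N$. On $E_1\cap E_2$,
\[
\norm{\K_l(\I)} \leq \tfrac{2}{3}\sqrt{L}\Big(\sqrt{L}+\sqrt{9(1+\delta)\Delta + L}\Big).
\]
We want the right side to be at most $\delta\Delta$. Rather than solving the quadratic exactly, we set $L = c\Delta$ with $c\le \frac{3\delta^2}{16\delta+12}$. This reduces the goal to $\frac{2}{3}\big(c + \sqrt{c(9(1+\delta)+c)}\big)\leq \delta$, i.e.\ $\sqrt{c(9+9\delta+c)}\le \frac{3\delta}{2}-c$. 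Squaring both sides (legitimate since $c<3\delta/2$) gives $c(9+9\delta+c)\le \frac{9\delta^2}{4} - 3\delta c + c^2$. This is equivalent to $c(9+12\delta)\le \frac{9\delta^2}{4}$, i.e.\ $c\le \frac{9\delta^2}{4(9+12\delta)} = \frac{3\delta^2}{12+16\delta}$. This is exactly the hypothesis, and it explains the constant $\frac{16\delta+12}{3\delta^2}$.

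\textbf{Obstacle.} The only delicate point is Step 2's algebra: checking that the constant in the hypothesis is exactly the threshold, and that monotonicity in $\Delta$ lets us treat $L\le c\Delta$ rather than equality. The right-hand side is increasing in $L$, so decreasing $L$ only helps. The rest is the union bound.
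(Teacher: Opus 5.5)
Your proposal is correct and follows essentially the same route as the paper: a union bound combining the maximum-degree Chernoff estimate (applied with the same $\delta$, so that $D(\K_l(\I))\le(1+\delta)\Delta$) with the Matrix Bernstein corollary, each failing with probability at most $N^{-\epsilon}$ under the hypothesis on $\Delta$. Your explicit algebra, showing that $\frac{3\delta^2}{12+16\delta}$ is exactly the threshold for $(1+\epsilon)\ln N/\Delta$, and your remark that the Bernstein step holds conditionally on the left-hand sides, fill in details that the paper only asserts.
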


Finally let us restate this result in terms of the parameters of $\I$. 

Here we will use the lower bounds of $\theta^{(n,l,k)}$ from \eqref{theta lower bound_r=0}, \eqref{theta lower bound r=k} depending on the value of $\abs{R}$ and $l$.

If $\abs{R} = O(n/l)$, then \eqref{theta lower bound_r=0} will give a better bound than \eqref{theta lower bound r=k}. On the other hand, if $\abs{R} = \omega(n/l)$, \eqref{theta lower bound r=k} gives a better bound.

\begin{corollary}\label{upper bound for random instance small q}
Let $k,n,l$ be positive integers such that $\ceil{k/2}\leq l\leq n-\floor{k/2}$. Let $0<\delta<1$. Consider a $\LIN{k}{R}$ instance $\I$ of dimension $n$ and $m$ constraints, \\ $m \geq \frac{l\ln((\abs{R}-1)n)}{\delta^2}\cdotp(\abs{R}-1)^{\floor{k/2}}\left(\frac{\alpha n}{l}\right)^{\ceil{k/2}}$ for an absolute constant $\alpha > 0$. Let $\Delta := \frac{m\theta^{(n,l,k)}}{\binom{n}{l}(\abs{R}-1)^l}$. Then if $\I$ is random, we get $\norm{\K_l(\I)} \leq \delta\Delta$ except with probability at most $n^{-\Omega(l)}$.
\end{corollary}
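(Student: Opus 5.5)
The plan is to reduce the statement to Theorem~\ref{delta' lower bound random case}, choosing $\epsilon$ to be a suitable absolute constant (say $\epsilon=1$). That theorem gives $\norm{\K_l(\I)}\le\delta\Delta$ except with probability $2N^{-\epsilon}$, provided
\[
\Delta\ge\frac{16\delta+12}{3\delta^2}(1+\epsilon)\ln N .
\]
So the whole task is to show that the stated lower bound on $m$ forces this inequality on $\Delta$. We then convert the failure probability $2N^{-\epsilon}$ into $n^{-\Omega(l)}$.

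\textbf{Failure probability.} Since $N=\binom{n}{l}(\abs{R}-1)^l\ge\binom{n}{l}$, we have $N\ge (n/l)^l$. For the range $l\le n-\floor{k/2}$ this is at least $n^{\Omega(l)}$ whenever $l$ is not too close to $n$; I would treat the regime $l$ close to $n$ via $\binom{n}{l}=\binom{n}{n-l}$ together with the factor $(\abs{R}-1)^l\ge 2^l$. This uses $\abs{R}>2$, the standing assumption of the construction; if $\abs{R}=2$ one falls back on the $\Z_2$ analysis. Hence $2N^{-\epsilon}\le n^{-\Omega(l)}$. Strictly, with $\abs{R}=3$ and $l$ near $n$ this gives only $2^{-\Omega(l)}$, so some care is needed; I would absorb the issue into the constant or note the mild restriction.

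\textbf{Lower bound on $\Delta$.} Using \eqref{theta lower bound_r=0}, which is valid exactly in the stated range of $l$, we get
\[
\Delta=\frac{m\theta^{(n,l,k)}}{N}\ \ge\ m\cdot\frac{2^{k-1}}{k+1}\cdot\frac{\binom{n-k}{l-\ceil{k/2}}}{\binom{n}{l}}\cdot(\abs{R}-1)^{-\floor{k/2}}.
\]
The key estimate is the binomial ratio. Writing it as a telescoping product of factors of the form $(l-i)/(n-i)$ and $(n-l-j)/(n-j)$, I would show
\[
\frac{\binom{n-k}{l-\ceil{k/2}}}{\binom{n}{l}}\ \ge\ \left(\frac{l}{\beta n}\right)^{\ceil{k/2}}\cdot c^{\floor{k/2}}
\]
for absolute constants $\beta,c$. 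The first factor comes from $\ceil{k/2}$ terms of size roughly $l/n$. The second comes from $\floor{k/2}$ terms of size roughly $(n-l)/n$, which may be small when $l$ is near $n$; there I would use the symmetric bound or allow $\alpha$ to absorb it. The prefactor $2^{k-1}/(k+1)$ is at least $c'^k$, so all the $k$-dependent constants merge into $\alpha^{-\ceil{k/2}}$. On the other side, $\ln N\le l\ln(n(\abs{R}-1))$, and $\frac{16\delta+12}{3\delta^2}(1+\epsilon)\le C/\delta^2$ for $\delta<1$. Putting these together, the hypothesis on $m$ with a large enough absolute $\alpha$ yields the required lower bound on $\Delta$.

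\textbf{Main obstacle.} The delicate step is the binomial ratio when $l$ is close to $n-\floor{k/2}$. There the factors $(n-l-j)/(n-j)$ can be as small as $1/n$, and a single absolute constant $\alpha$ seems to require an extra argument. I would first try bounding this ratio uniformly using $\binom{n}{l}\le\binom{n}{l-\ceil{k/2}}\cdot(n/l)^{\ceil{k/2}}$-type inequalities, and if needed restrict attention to the meaningful regime $l\le n/2$, where every factor is at least $1/2$.
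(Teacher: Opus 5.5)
Your route, plugging a lower bound on $\theta^{(n,l,k)}$ into Theorem~\ref{delta' lower bound random case}, is the one the paper intends; the paper gives no more detail than that. However, the two places where you hedge are genuine gaps. The patches you suggest (absorbing into constants, a ``symmetric bound'', restricting to $l\le n/2$) do not close them.

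\textbf{Failure probability.} With $\epsilon$ an absolute constant you need $N\ge n^{\Omega(l)}$. Your claim that $(n/l)^l\ge n^{\Omega(l)}$ unless $l$ is close to $n$ is false once $l=\Theta(n)$. For $l=n/4$ and $\abs{R}=3$ one has $N=2^{\Theta(n)}$, so $2N^{-1}$ is not $n^{-\Omega(l)}$, and no constant bridges $2^{-O(l)}$ and $n^{-\Omega(l)}$. The fix is to let $\epsilon$ depend on the parameters:
\[
1+\epsilon:=\frac{2l\ln((\abs{R}-1)n)}{\ln N}.
\]
Since $N\le((\abs{R}-1)n)^l$, this gives $\epsilon\ge1$ and $\epsilon\ln N\ge l\ln((\abs{R}-1)n)$. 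Hence the failure probability is $2N^{-\epsilon}\le 2n^{-l}$. The hypothesis of the theorem becomes $\Delta\ge\frac{2(16\delta+12)}{3\delta^2}\,l\ln((\abs{R}-1)n)$, which is exactly the shape of the assumed bound on $m$.

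\textbf{Large $l$.} The ratio you need is
\[
\prod_{i<\ceil{k/2}}\frac{l-i}{n-i}\cdot\prod_{j<\floor{k/2}}\frac{n-l-j}{n-\ceil{k/2}-j}.
\]
At $l=n-\floor{k/2}$ the second product is about $\floor{k/2}!\,n^{-\floor{k/2}}$. So for $k\ge2$, \eqref{theta lower bound_r=0} alone undershoots the required $\Delta$ by a factor polynomial in $n$, and no absolute $\alpha$ absorbs that. There is also no complement symmetry $R^n_l\leftrightarrow R^n_{n-l}$ for $\abs{R}>2$; the two vertex sets even have different sizes.

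The missing idea is to switch to the $r=k$ bound \eqref{theta lower bound r=k} when $l\ge n/2$. Then, for $n\ge 2k$, we have $l\ge k$ and
\[
\frac{\theta^{(n,l,k)}}{N}\ \ge\ 2^{1-k}\,\frac{\binom{n-k}{l-k}}{\binom{n}{l}}\ =\ 2^{1-k}\prod_{i<k}\frac{l-i}{n-i}\ \ge\ 2^{1-k}\left(\frac{l}{en}\right)^{k}\ \ge\ 2(4e)^{-k}.
\]
Meanwhile the hypothesis gives $m\ge\frac{l\ln((\abs{R}-1)n)}{\delta^2}\,\alpha^{\ceil{k/2}}$, so $\alpha\ge C(4e)^2$ for a suitable absolute $C$ absorbs the $(4e)^{k}$.

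For $l\le n/2$, every factor $(n-l-j)/(n-\ceil{k/2}-j)$ is bounded below by a constant, and your computation with \eqref{theta lower bound_r=0} goes through. The same holds in the corner case $n<2k$, where $n-l\ge\floor{k/2}$ is already a constant fraction of $n$.

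The assumption $\abs{R}\ge3$ is essential here, so falling back on the $\Z_2$ analysis does not help. For $\abs{R}=2$ the $r=k$ term vanishes, and the statement genuinely fails near $l=n-\floor{k/2}$. Take $k=4$ and $l=n-2$: then $\theta^{(n,l,k)}=12$ and $N=\binom{n}{2}$, so $\Delta\to0$ at the threshold $m$. Yet with high probability some entry of $\K_l(\I)$ is a nonzero integer, forcing $\norm{\K_l(\I)}\ge1$.
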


\begin{corollary}\label{upper bound for random instance large q}
Let $k,n,l$ be positive integers such that $k\leq l\leq n$. Let $0<\delta<1$. Consider a $\LIN{k}{R}$ instance $\I$ of dimension $n$ and $m$ constraints, 
$m \geq \frac{l\ln((\abs{R}-1)n)}{\delta^2}\cdotp\left(\frac{\alpha n}{l}\right)^{k}$ for an absolute constant $\alpha > 0$. Let $\Delta := \frac{m\theta^{(n,l,k)}}{\binom{n}{l}(\abs{R}-1)^l}$. Then if $\I$ is random, we get $\norm{\K_l(\I)} \leq \delta\Delta$ except with probability at most $n^{-\Omega(l)}$.
\end{corollary}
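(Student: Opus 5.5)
The plan is to derive the corollary from Theorem~\ref{delta' lower bound random case} with a constant $\epsilon$, say $\epsilon=1$. That theorem gives $\norm{\K_l(\I)}\leq\delta\Delta$ except with probability $2N^{-\epsilon}$, provided
\[
\Delta\geq\frac{16\delta+12}{3\delta^2}(1+\epsilon)\ln N .
\]
For $0<\delta<1$ we have $\frac{16\delta+12}{3\delta^2}\leq\frac{28}{3\delta^2}$. So it suffices to show $\Delta\geq \frac{C}{\delta^2}\ln N$ for some absolute constant $C$.

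First, upper bound $\ln N$. Using $\binom{n}{l}\leq n^l$ gives $N=\binom{n}{l}(\abs{R}-1)^l\leq ((\abs{R}-1)n)^l$, hence $\ln N\leq l\ln((\abs{R}-1)n)$.

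Second, lower bound $\Delta=m\theta^{(n,l,k)}/N$ using \eqref{theta lower bound r=k}, which applies because $k\leq l\leq n$. This gives
\[
\frac{\theta^{(n,l,k)}}{N}\geq 2^{1-k}\frac{\binom{n-k}{l-k}}{\binom{n}{l}}.
\]
The main computation is this binomial ratio. We have
\[
\frac{\binom{n-k}{l-k}}{\binom{n}{l}}=\prod_{j=0}^{k-1}\frac{l-j}{n-j}.
\]
Each factor $\frac{l-j}{n-j}$ is at least $\frac{l-j}{n}$, which is not uniformly comparable to $l/n$ when $j$ is close to $l$. The fix is the bound $\prod_{j<k}(l-j)\geq l!/(l-k)!\geq (l/e)^k$, valid for $k\leq l$. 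For $k=l$ this is $l!\geq(l/e)^l$. In general $l!/(l-k)!\geq l^k/k!\cdot\binom{l}{k}^{-1}\cdots$; more simply, $\frac{l!}{(l-k)!}=\binom{l}{k}k!\geq (l/k)^k (k/e)^k=(l/e)^k$. Hence $\theta/N\geq 2^{1-k}(l/(en))^k$.

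Combining, it suffices to have
\[
m\cdot 2\left(\frac{l}{2en}\right)^k\geq \frac{C}{\delta^2}\,l\ln((\abs{R}-1)n),
\]
which the hypothesis gives with $\alpha=2e\cdot\max(1,C)$, absorbing $C$ into $\alpha^k$ since $k\geq1$.

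Finally, the failure probability is $2N^{-1}$. Since $N\geq\binom{n}{l}$ and $N\geq(\abs{R}-1)^l$, I expect this to be $n^{-\Omega(l)}$ in the relevant regime. For $l\leq n/2$ we have $\binom{n}{l}\geq(n/l)^l$. When $l$ is a constant fraction of $n$, or $\abs{R}\geq3$, one uses $(\abs{R}-1)^l\cdot\binom{n}{l}$ directly. If this is too weak, I would instead take $\epsilon=\Theta(\ln n)$-scaled, i.e.\ choose $\epsilon$ so that $N^{-\epsilon}\leq n^{-l}$, i.e.\ $\epsilon=l\ln n/\ln N$. Then $(1+\epsilon)\ln N\leq \ln N+l\ln n\leq 2l\ln((\abs{R}-1)n)$, which keeps the sample bound of the same form. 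This last choice is the cleanest and is the one I would commit to. The main obstacle is only the careful bookkeeping of the binomial ratio above.
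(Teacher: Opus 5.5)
Your proposal is correct and takes the paper's intended route. The paper gives no separate proof; it obtains the corollary by substituting the $r=k$ bound \eqref{theta lower bound r=k} into Theorem~\ref{delta' lower bound random case}. Your bounds $\binom{n-k}{l-k}/\binom{n}{l}\geq (l/(en))^k$ and $\ln N\leq l\ln((\abs{R}-1)n)$ are the right bookkeeping. Your choice $\epsilon=l\ln n/\ln N$ is what actually yields failure probability $2n^{-l}$ in every regime: a constant $\epsilon$ would only give $2^{-\Theta(l)}$ when $\abs{R}$ is constant and $l=\Theta(n)$.

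One caveat, inherited from the paper: \eqref{theta lower bound r=k} relies on $\abs{R}-2\geq(\abs{R}-1)/2$. So your argument, like the corollary itself, tacitly assumes $\abs{R}\geq 3$.
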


\subsection{A lower bound for the planted instance}

Recall that the only difference in a planted instance from a random instance of $\LIN{k}{R}$ is that the `right-hand side' $b_i\in R$ of a constraint $(\bm{a}_i,b_i)$ is correlated with $\bm{a}_i$.  

We will impose a mild assumption on the noise distribution $\Psi$ for the planted instance: the quantity $\rho := \E[\chi(\Psi)]\in\R\setminus\{0\}$.\footnote{the noise for sparse LWE, sparse LPN and other nice distributions shall satisfy this assumption.} 

The heart of our analysis will still be the Matrix Bernstein inequality, but with a slightly different set of matrices. This is because Lemma \ref{X_i expected value} ($\E[\bm{K}_i] = \bm{O}$) will no longer hold for the same definition of $\bm{K}_i$ when $\I$ is planted. Let us calculate $\E[\bm{K}_i]$. 

\begin{notation}\label{not:A_z,ai} 
Let $\bm{a}\in R^n_{k}$ and $\bm{s}\in R^n$. We define a square matrix $A_{\bm{s},\bm{a}}$ indexed by $R^n_l$ as follows:
\begin{align}
    A_{\bm{s},\bm{a}}(\bm{u},\bm{v}) &:= \begin{cases}
    \chi(\innerproduct{\bm{a}}{\bm{s}}) & \text{if }\bm{v}-\bm{u} = \bm{a} \\
    \chi(-\innerproduct{\bm{a}}{\bm{s}}) & \text{if }\bm{u}-\bm{v} = \bm{a} \\
    0 & \text{otherwise}
\end{cases}    
\end{align}
\end{notation}

\begin{lemma}\label{X_i expected value for planted case}
    Let $\I = \{(\bm{a}_i,b_i)\}_{i\in[m]}$ be a planted $\LIN{k}{R}$ instance with respect to secret vector $\bm{s}\in R^n$ and noise distribution $\Psi$ on $R$ with $\rho = \E[\chi(\Psi)]\in\R$. Let $\bm{K}_i$'s be the same as defined in \eqref{eqn:Xi}. Then $\E[\bm{K}_i] = \rho A_{\bm{s},\bm{a}_i}$.
\end{lemma}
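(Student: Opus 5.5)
The plan is to compute the expectation entrywise, conditioning on $\bm{a}_i$. The statement should be read with $\bm{a}_i$ held fixed and the expectation taken over the noise $e_i$ only. This matches the footnote convention in Lemma~\ref{variance bound}, where expectations are over the randomness of the right-hand sides. It is also forced by the form of the answer: $\rho A_{\bm{s},\bm{a}_i}$ still depends on $\bm{a}_i$.

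Fix $\bm{u},\bm{v}\in R^n_l$ and look at the three cases in \eqref{eqn:Xi}; the support pattern of $\bm{K}_i$ depends only on $\bm{a}_i$, which is fixed.

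\textbf{Case $\bm{v}-\bm{u}=\bm{a}_i$.} Here $\bm{K}_i(\bm{u},\bm{v})=\chi(b_i)=\chi(\innerproduct{\bm{a}_i}{\bm{s}}+e_i)$. Since $\chi$ is a homomorphism from $(R,+)$ to $(\mathbb{C},\cdot)$, this factors as $\chi(\innerproduct{\bm{a}_i}{\bm{s}})\chi(e_i)$. Taking expectation over $e_i\sim\Psi$ gives $\chi(\innerproduct{\bm{a}_i}{\bm{s}})\rho$, which is $\rho A_{\bm{s},\bm{a}_i}(\bm{u},\bm{v})$.

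\textbf{Case $\bm{u}-\bm{v}=\bm{a}_i$.} The same factorization gives $\E[\chi(-b_i)]=\chi(-\innerproduct{\bm{a}_i}{\bm{s}})\E[\chi(-e_i)]$. Note that $\chi(-e)=\chi(e)^*$, since $\chi$ takes values in roots of unity. Hence $\E[\chi(-e_i)]=\rho^*$, and this equals $\rho$ precisely because of the standing assumption $\rho\in\R$. This is the only place that assumption is used, and it is the one step needing care.

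\textbf{Remaining entries.} All other entries of $\bm{K}_i$ are $0$, and so are the corresponding entries of $A_{\bm{s},\bm{a}_i}$.

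The two cases $\bm{v}-\bm{u}=\bm{a}_i$ and $\bm{u}-\bm{v}=\bm{a}_i$ cannot hold simultaneously. That would force $2\bm{a}_i=\bm{0}$, i.e.\ $\bm{a}_i=-\bm{a}_i$; in characteristic 2 the two cases then coincide and both formulas agree anyway, since $\chi(b)=\chi(-b)$ there. So the definition is consistent, and collecting the entries gives $\E[\bm{K}_i]=\rho A_{\bm{s},\bm{a}_i}$.

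There is no real obstacle beyond tracking the conjugate in the second case. If an unconditional statement over the randomness of $\bm{a}_i$ were intended, one would instead state the result conditionally and average afterwards; the planted analysis only needs the conditional version.
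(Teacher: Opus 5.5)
Your proof is correct and follows the paper's argument: factor $\chi(b_i)=\chi(\innerproduct{\bm{a}_i}{\bm{s}})\chi(e_i)$ and use $\E[\chi(-e_i)]=\rho^*=\rho$ because $\rho$ is real, with the conditioning on $\bm{a}_i$ made explicit. One small correction to your aside on well-definedness: $2\bm{a}_i=\bm{0}$ can hold outside characteristic 2 (e.g.\ entries $2\in\Z_4$ or $3\in\Z_6$), and there $\chi(b_i)\neq\chi(-b_i)$ in general, so \eqref{eqn:Xi} is genuinely ambiguous; the lemma is unaffected, however, since under the paper's convention of summing parallel edges the identity still holds entrywise by linearity of expectation.
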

\begin{proof}
    This follows from the fact that $b_i = \innerproduct{\bm{a_i}}{\bm{s}}+e_i$, $\chi(b_i) = \chi(\innerproduct{\bm{a_i}}{\bm{s}})\cdot\chi(e_i)$ and $\E[\chi(e_i)] = \E[\chi(-e_i)] = \rho$.
\end{proof}

Notice that $\sum_{i\in[m]}A_{\bm{s},\bm{a}_i}$ can be viewed as a signed adjacency matrix of an appropriate graph. This graph has the same edges as in $\K_l(\I)$ but with different signs. Analogous to the spectral norm of an adjacency matrix being lower bounded by the average degree of the graph, we will set a lower bound for the spectral norm of $\sum_{i\in[m]}A_{\bm{s},\bm{a}_i}$ which will be useful later.
\begin{proposition}\label{eigenvalue lower bound for signed adjacency matrix}
    Let $\I = \{(\bm{a}_i,b_i)\}_{i\in[m]}$ be a planted $\LIN{k}{R}$ instance with respect to secret vector $\bm{s}\in R^n$ and noise distribution $\Psi$ on $R$. Let $N := \abs{R^n_l} = \binom{n}{l}(\abs{R}-1)^l$, $\Delta := \frac{m\theta^{(n,l,k)}}{N}$. Then for any $\delta>0$ we have  $\norm{\sum_{i\in[m]}A_{\bm{s},\bm{a}_i}} \geq \Delta$.
\end{proposition}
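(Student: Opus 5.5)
We plan to lower bound the spectral norm by evaluating the Hermitian quadratic form of $A := \sum_{i\in[m]}A_{\bm{s},\bm{a}_i}$ at a well-chosen unit vector, exactly as in Proposition~\ref{prop:justifying Kikuchi construction}. Since $A$ is Hermitian, $\norm{A} \geq \abs{\bm{y}^{\dagger}A\bm{y}}/\norm{\bm{y}}^2$ for every nonzero $\bm{y}$. The natural test vector is the one encoding the planted secret: $y_{\bm{u}} := \chi\left(-\innerproduct{\bm{u}}{\bm{s}}\right)$ for $\bm{u}\in R^n_l$. This vector has $\abs{y_{\bm{u}}}=1$ for every coordinate, so $\norm{\bm{y}}^2 = N$.

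Next we compute $\bm{y}^{\dagger}A\bm{y}$ term by term. For the $i$-th summand, each edge $\bm{u}\to\bm{v}$ with $\bm{v}-\bm{u}=\bm{a}_i$ contributes
\[
\chi(\innerproduct{\bm{u}}{\bm{s}})\,\chi(\innerproduct{\bm{a}_i}{\bm{s}})\,\chi(-\innerproduct{\bm{v}}{\bm{s}}) = \chi\left(\innerproduct{\bm{a}_i}{\bm{s}} - \innerproduct{\bm{v}-\bm{u}}{\bm{s}}\right) = \chi(0) = 1,
\]
where we use that $\chi$ is a homomorphism from $(R,+)$ and that $\innerproduct{\cdot}{\bm{s}}$ is additive. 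Symmetrically, each reverse edge with $\bm{u}-\bm{v}=\bm{a}_i$ also contributes $1$. This is the same computation as in Proposition~\ref{prop:justifying Kikuchi construction} with $b_i$ replaced by $\innerproduct{\bm{a}_i}{\bm{s}}$ and $\bm{x}=\bm{s}$, so every term has zero ``error''. Hence the $i$-th summand contributes exactly the number of its edges, namely $\theta^{(n,l,k)}$, and $\bm{y}^{\dagger}A\bm{y} = m\theta^{(n,l,k)}$.

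Dividing by $\norm{\bm{y}}^2=N$ gives $\norm{A}\geq m\theta^{(n,l,k)}/N = \Delta$. The statement is deterministic: it holds for every choice of left-hand sides, and the parameter $\delta$ plays no role. The main point requiring care is bookkeeping rather than a genuine obstacle. We must make sure that parallel edges (several $i$, or, for a given $i$, both directions coinciding) are counted with multiplicity in the matrix entries, consistent with how $\theta^{(n,l,k)}$ counts edges. Since the quadratic form is linear in the matrix, summing per-edge contributions handles this automatically.
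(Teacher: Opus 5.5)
Your argument is the paper's own proof: the same test vector $y_{\bm u}=\chi(-\innerproduct{\bm u}{\bm s})$, the same observation that every nonzero entry contributes exactly $1$ to the quadratic form, and division by $\norm{\bm y}^2=N$. The gap is the bookkeeping point you raise at the end, which is \emph{not} handled automatically. In the Notation, $A_{\bm s,\bm a}$ is defined case by case, not by adding contributions. Suppose $2\bm a_i=\bm 0$. This always happens when $R$ has characteristic $2$, and in $\Z_q$ with $q$ even it happens whenever every nonzero entry of $\bm a_i$ is $q/2$. Then the conditions $\bm v-\bm u=\bm a_i$ and $\bm u-\bm v=\bm a_i$ coincide, so each such position holds a single entry of modulus $1$. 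Consequently $A_{\bm s,\bm a_i}$ has only $\theta^{(n,l,k)}/2$ nonzero entries, and the $i$-th summand contributes $\theta^{(n,l,k)}/2$ to $\bm y^{\dagger}A\bm y$, not $\theta^{(n,l,k)}$. Linearity of the quadratic form cannot restore a multiplicity that the matrix does not record.

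This breaks the inequality itself, not just the write-up. Take $R=\mathbb{F}_4$, $n=k=2$, $l=1$, so $N=6$ and $\theta^{(2,1,2)}=4$. If each of the nine vectors of $R^2_2$ occurs once ($m=9$), then $\sum_i A_{\bm s,\bm a_i}$ is unitarily similar, via $\mathrm{diag}(\chi(\innerproduct{\bm u}{\bm s}))$, to the adjacency matrix of $K_{3,3}$. Its norm is $3<\Delta=6$. The paper's proof has exactly the same gap, since it equates $\sum_i(\#\text{nonzero entries of }A_{\bm s,\bm a_i})$ with $m\theta^{(n,l,k)}$. In general your argument proves only $\norm{\sum_i A_{\bm s,\bm a_i}}\geq\Delta/2$. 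To get $\Delta$ you need one of two fixes. Either assume $\abs{R}$ is odd, so that $2\bm a\neq\bm 0$ for every $\bm a\in R^n_k$. Or adopt the additive convention $A_{\bm s,\bm a}(\bm u,\bm v)=\chi(\innerproduct{\bm a}{\bm s})\,\mathbf{1}[\bm v-\bm u=\bm a]+\chi(-\innerproduct{\bm a}{\bm s})\,\mathbf{1}[\bm u-\bm v=\bm a]$, which is consistent with the footnote that parallel edges of $\K_l(\I)$ have their signs added. Under either fix your proof goes through verbatim.
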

\begin{proof}
    \begin{align}
        \norm{\sum_{i\in[m]}A_{\bm{s},\bm{a}_i}} = \sup_{\bm{x}\in\mathbb{C}^{R^n_l}\setminus\{\bm{0}\}}{\frac{\bm{x}^\dagger\left(\sum_{i\in[m]}A_{\bm{s},\bm{a}_i}\right)\bm{x}}{\bm{x}^\dagger\bm{x}}} \label{eqn:A_z quadratic form}
    \end{align}

    Let us consider the quadratic form in \eqref{eqn:A_z quadratic form} for a specific $\bm{x}$. Define $\bm{z}\in \mathbb{C}^{R^n_l}$ as follows: For $\bm{u}\in R^n_l$, $z_{\bm{u}} := \chi(-\innerproduct{\bm{u}}{\bm{s}})$. It can be seen that for $\bm{u},\bm{v}\in R^n_l$ satisfying $\bm{v}-\bm{u} = \bm{a}$, we have $\left(z_{\bm{u}}\right)^{-1}z_{\bm{v}} = \chi(-\innerproduct{\bm{a}}{\bm{s}})$.

    With this, we have
    \begin{align*}
        \norm{\sum_{i\in[m]}A_{\bm{s},\bm{a}_i}} &\geq \frac{\left(\sum\limits_{i\in[m]}\bm{z}^\dagger\cdotp A_{\bm{s},\bm{a}_i}\cdotp\bm{z}\right)}{\bm{z}^\dagger\bm{z}} \\
        &= \frac{\sum_{i\in[m]}\sum_{\substack{(\bm{u},\bm{v})}}{z_{\bm{u}}^{-1}\cdotp A_{\bm{s},\bm{a}_i}(\bm{u},\bm{v})}\cdotp z_{\bm{v}}}{N} \\
        &= \frac{\sum_{i\in[m]}(\text{\# of non-zero entries in $A_{\bm{s},\bm{a}_i}$})}{N} \\
        &= \Delta
    \end{align*}
    where the last inequality holds because the graph of $\sum_{i\in[m]}A_{\bm{s},\bm{a}_i}$ has the same edges as in $\K_l(\I)$.
\end{proof}

Let us now define $\widetilde{\bm{K}_i} := \bm{K}_i - \E[\bm{K}_i] = \bm{K}_i - \rho A_{\bm{s},\bm{a}_i}$. By definition, $\E[\widetilde{\bm{K}_i}] = \bm{O}$. Let us also obtain a bound on its spectral norm.

\begin{lemma}\label{X_i_tilde spectral norm}
    For $\widetilde{\bm{K}_i}$ defined as above, $\norm{\widetilde{\bm{K}_i}} \leq 2(1+\abs{\rho})$.
\end{lemma}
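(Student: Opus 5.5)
The plan is to use the triangle inequality for the spectral norm, $\norm{\widetilde{\bm{K}_i}} \leq \norm{\bm{K}_i} + \abs{\rho}\,\norm{A_{\bm{s},\bm{a}_i}}$, and then bound each of the two terms by $2$ using the out-degree argument already developed.

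The first term is handled directly by Corollary~\ref{Xi norm bound}, which gives $\norm{\bm{K}_i}\leq 2$ for any instance, planted or not.

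For the second term, note that $A_{\bm{s},\bm{a}_i}$ has exactly the same sparsity pattern as $\bm{K}_i$. Its nonzero entries sit precisely at the pairs $(\bm{u},\bm{v})$ with $\bm{v}-\bm{u}=\pm\bm{a}_i$. The only change is that the sign $\chi(\pm b_i)$ is replaced by $\chi(\pm\innerproduct{\bm{a}_i}{\bm{s}})$.
\begin{itemize}
\item Every entry is a value of $\chi$, so it has modulus $1$.
\item The argument of Observation~\ref{obs:Ki degree 2} never used the value of the sign, only the condition $\bm{v}-\bm{u}=\pm\bm{a}_i$. Hence the graph of $A_{\bm{s},\bm{a}_i}$ is again a disjoint union of chains with maximum out-degree at most $2$.
\end{itemize}
Lemma~\ref{spectral norm bounded by max degree} then yields $\norm{A_{\bm{s},\bm{a}_i}}\leq 2$. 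Combining the two bounds gives $\norm{\widetilde{\bm{K}_i}}\leq 2+2\abs{\rho} = 2(1+\abs{\rho})$.

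There is no real obstacle here. The only point needing care is to state explicitly that the degree bound of Observation~\ref{obs:Ki degree 2} transfers verbatim to $A_{\bm{s},\bm{a}_i}$, since it depends only on the edge set and not on the edge weights.

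As an alternative, one could apply Lemma~\ref{spectral norm bounded by max degree} directly to $\widetilde{\bm{K}_i}$. It has the same edge set, with entries $\chi(\innerproduct{\bm{a}_i}{\bm{s}})(\chi(e_i)-\rho)$ of modulus at most $1+\abs{\rho}$. After rescaling by $1+\abs{\rho}$, the lemma gives the same bound.
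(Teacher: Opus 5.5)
Your proposal is correct and is essentially the paper's proof: the triangle inequality $\norm{\widetilde{\bm{K}_i}} \leq \norm{\bm{K}_i} + \abs{\rho}\norm{A_{\bm{s},\bm{a}_i}}$, with Corollary~\ref{Xi norm bound} for the first term and the same out-degree argument transferred to $A_{\bm{s},\bm{a}_i}$ for the second. Your explicit remark that Observation~\ref{obs:Ki degree 2} depends only on the edge set, not the weights, spells out what the paper leaves as ``an analogous argument.''
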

\begin{proof}
    $\norm{\widetilde{\bm{K}_i}} \leq \norm{\bm{K}_i} + \abs{\rho}\norm{A_{\bm{s},\bm{a}_i}} \leq 2(1+\abs{\rho})$ where the second inequality can be obtained by an analogous argument for $A_{\bm{s},\bm{a}_i}$ as for $\bm{K}_i$ in Corollary~\ref{Xi norm bound}.
\end{proof}

To apply the Matrix Bernstein inequality, we are left to compute 
$$\sigma^2 := \left\| \sum_{i\in[m]} \mathbb{E}\left(\widetilde{\bm{K}_i}^{2}\right) \right\|$$
\begin{align}
    \sigma^2 &= \left\| \sum_{i\in[m]} \mathbb{E}\left(\left(\bm{K}_i - \rho A_{\bm{s},\bm{a}_i}\right)^2\right) \right\| \ \ = \left\| \sum_{i\in[m]} \left(\mathbb{E}\left(\bm{K}_i^2\right) - \rho^2 A_{\bm{s},\bm{a}_i}^2\right) \right\| \\
    &\leq \left\| \sum_{i\in[m]} \mathbb{E}\left(\bm{K}_i^2\right)\right\| + \abs{\rho}^2 \left\|\sum_{i\in[m]}A_{\bm{s},\bm{a}_i}^2 \right\| \\
    &\leq 2(1+\abs{\rho}^2)D(\K_l(\I)) \ \ (\text{using Lemma~\ref{variance bound} and same idea for }A_{\bm{s},\bm{a}_i}^2)
\end{align}

Like the random case, we need to choose an appropriate value of $t$ to limit the probability in Theorem \ref{matrix bernstein} to be essentially an inverse polynomial of $N = \abs{R^n_l} = \binom{n}{l}(\abs{R}-1)^l$, say $N^{-\epsilon}$ for some choice of $\epsilon > 0$. Thus we must satisfy
\begin{align}
    &N \cdot \exp\!\left( \frac{-t^{2}/2}{2\left(1+\abs{\rho}^2\right)D(\K_l(\I)) + 2(1+\abs{\rho})t/3} \right) \leq N^{-\epsilon} \label{quadratic in t for planted case}
\end{align}

It can be seen that \eqref{quadratic in t for planted case} is satisfied for (denote $Z = (1+\epsilon)\ln N$ in below equation)
\begin{align}
    t \geq \frac{2}{3}\left( (1+\abs{\rho})Z
    + \sqrt{
         \big((1+\abs{\rho})Z\big)^2
        + 9\left(1+\abs{\rho}^2\right) D(\K_l(\I)) Z
    }\right) \label{t threshold for planted case}
\end{align}

Like the random case, the next step would have been to bound $D(\K_l(\I))$. Fortunately, we can again assume $D(\K_l(\I))\leq (1+\gamma)\Delta$ from Proposition~\ref{upper bound on max degree}; even when $\I$ is planted, since  $\bm{a}_i\xleftarrow{\text{iid}}\text{Unif}(R^n_k)$. 

So, let us now proceed to get an upper bound on $\norm{\sum_{i\in[m]}\widetilde{\bm{K}_i}}$ with respect to $\Delta$. Like before, we would like to choose $\Delta$ so that we get $\norm{\sum_{i\in[m]}\widetilde{\bm{K}_i}} \leq \gamma \Delta$. Observe that in equation \eqref{t threshold for planted case} we can impose
\[
\Delta \geq 
\frac{12(1+\abs{\rho}^2)(1+\gamma)+4\gamma(1+\abs{\rho})}{3\gamma^2}(1+\epsilon)\ln N
\]
to get this desired bound for all $\gamma>0$. Again, the failure probability of Proposition~\ref{upper bound on max degree} is bounded by $N^{-\epsilon}$ for this choice. 

We now have $\norm{\sum_{i\in[m]}\widetilde{\bm{K}_i}} \leq \gamma\Delta$. How do we proceed to get a lower bound for $\norm{\sum_{i\in[m]}{\bm{K}_i}} = \norm{\K_l(\I)}$? 
\begin{align*}
    \norm{\K_l(\I)} &= \norm{\sum_{i\in[m]}{\bm{K}_i}} \ \geq \norm{\sum_{i\in[m]}{\E[\bm{K}_i}]} - \norm{\sum_{i\in[m]}\widetilde{\bm{K}_i}} \\
    &\geq \abs{\rho}\norm{\sum_{i\in[m]}{A_{\bm{s},\bm{a}_i}}} - \gamma\Delta \ \ (\text{from Lemma \ref{X_i expected value for planted case}}) \\
    &\geq \abs{\rho}\Delta - \gamma\Delta \ \ (\text{from Proposition \ref{eigenvalue lower bound for signed adjacency matrix}})
\end{align*}

\begin{theorem}\label{delta' lower bound planted case}
Let $\I = \{(\bm{a}_i,b_i)\}_{i\in[m]}$ be a planted $\LIN{k}{R}$ instance of dimension $n$, noise distribution $\Psi$ on $R$ with $\rho = \E[\chi(\Psi)]$. Let $N = \binom{n}{l}(\abs{R}-1)^l$. Suppose
$$\Delta := \frac{m\theta^{(n,l,k)}}{N} \geq \frac{12(1+\abs{\rho}^2)(1+\gamma)+4\gamma(1+\abs{\rho})}{3\gamma^2}(1+\epsilon)\ln N$$ Then we have $\norm{\K_l(\I)} \geq (\abs{\rho} - \gamma)\Delta$ except with probability at most $2N^{-\epsilon}$.
\end{theorem}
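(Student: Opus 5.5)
We combine the three ingredients assembled just before the statement. First, we decompose $\K_l(\I)=\sum_{i\in[m]}\bm{K}_i$ as in \eqref{eqn:Xi} and center each summand: $\widetilde{\bm{K}_i}=\bm{K}_i-\rho A_{\bm{s},\bm{a}_i}$. These are independent Hermitian matrices with $\E[\widetilde{\bm{K}_i}]=\bm{O}$, where the expectation is taken over the noise $e_i$ with $\bm{a}_i$ fixed. Lemma~\ref{X_i_tilde spectral norm} gives $\norm{\widetilde{\bm{K}_i}}\leq B:=2(1+\abs{\rho})$. The variance computation following that lemma gives $\sigma^2\leq 2(1+\abs{\rho}^2)D(\K_l(\I))$. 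Since $D(\K_l(\I))$ depends only on the $\bm{a}_i$'s, we condition on the left-hand sides and apply Theorem~\ref{matrix bernstein} to the noise randomness only.

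Next, we control the degree. The $\bm{a}_i$ are uniform on $R^n_k$ in the planted case too, so Proposition~\ref{upper bound on max degree} with parameter $\gamma$ gives $D(\K_l(\I))\leq(1+\gamma)\Delta$, except with probability $N\exp(-\gamma^2\Delta/(2(2+\gamma)))$. The hypothesis on $\Delta$ is at least $\frac{4(1+\gamma)}{\gamma^2}(1+\epsilon)\ln N$, and one checks this dominates $\frac{2(2+\gamma)}{\gamma^2}(1+\epsilon)\ln N$. So this failure probability is at most $N^{-\epsilon}$.

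On the event $D\leq(1+\gamma)\Delta$, we apply Matrix Bernstein with $t=\gamma\Delta$ and aim for \eqref{quadratic in t for planted case}. Writing $Z=(1+\epsilon)\ln N$, we need
\[
\gamma^2\Delta^2\geq 2Z\left(2(1+\abs{\rho}^2)(1+\gamma)\Delta+\tfrac{2}{3}(1+\abs{\rho})\gamma\Delta\right).
\]
Dividing by $\Delta$, this is exactly
\[
\Delta\geq\frac{12(1+\abs{\rho}^2)(1+\gamma)+4\gamma(1+\abs{\rho})}{3\gamma^2}Z,
\]
which is the assumption. Equivalently, $t=\gamma\Delta$ exceeds the threshold \eqref{t threshold for planted case}. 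Hence $\norm{\sum_i\widetilde{\bm{K}_i}}\leq\gamma\Delta$ except with conditional probability $N^{-\epsilon}$.

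Finally, we use the triangle inequality together with Lemma~\ref{X_i expected value for planted case} and Proposition~\ref{eigenvalue lower bound for signed adjacency matrix}:
\[
\norm{\K_l(\I)}\geq\abs{\rho}\norm{\textstyle\sum_i A_{\bm{s},\bm{a}_i}}-\gamma\Delta\geq(\abs{\rho}-\gamma)\Delta.
\]
A union bound over the two failure events gives total failure probability $2N^{-\epsilon}$.

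The main point needing care is the conditioning. Matrix Bernstein requires deterministic bounds on $B$ and $\sigma^2$, but $\sigma^2$ depends on the random $\bm{a}_i$. Conditioning on the $\bm{a}_i$'s makes the $\widetilde{\bm{K}_i}$ independent, mean zero (since the $e_i$ are independent of the $\bm{a}_i$), and gives fixed variance bounds. We then integrate over the good event for the degrees. Two smaller checks remain: that the hypothesis on $\Delta$ covers the Chernoff requirement of Proposition~\ref{upper bound on max degree}, and that $\rho$ being real makes $\E[\bm{K}_i]=\rho A_{\bm{s},\bm{a}_i}$ Hermitian.
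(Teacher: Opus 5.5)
Your proposal is correct and follows the paper's argument step for step. You center $\bm{K}_i$ by $\rho A_{\bm{s},\bm{a}_i}$, apply Matrix Bernstein with $B=2(1+\abs{\rho})$ and $\sigma^2\le 2(1+\abs{\rho}^2)D(\K_l(\I))$, bound $D(\K_l(\I))\le(1+\gamma)\Delta$ via Proposition~\ref{upper bound on max degree}, and finish with the triangle inequality and Proposition~\ref{eigenvalue lower bound for signed adjacency matrix}; the paper leaves implicit both the conditioning on the $\bm{a}_i$'s and the check that the hypothesis on $\Delta$ covers the Chernoff requirement, and you make both explicit.
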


Finally let us restate this result in terms of the parameters of $\I$. Once again we will use the lower bounds of $\theta^{(n,l,k)}$ from \eqref{theta lower bound_r=0}, \eqref{theta lower bound r=k}; the better result can be used depending on whether $\abs{R} = O(n/l)$ or $\abs{R} = \omega(n/l)$.

\begin{corollary}\label{lower bound for planted instance small q}
Let $k,n,l$ be positive integers such that $\ceil{k/2}\leq l\leq n-\floor{k/2}$. Let $0<\gamma<1$. Consider a $\LIN{k}{R}$ instance $\I$ of dimension $n$ and $m$ constraints, $m \geq \frac{l\ln((\abs{R}-1)n)}{\gamma^2}\cdotp(\abs{R}-1)^{\floor{k/2}}\left(\frac{\alpha n}{l}\right)^{\ceil{k/2}}$ where $\alpha > 0$ is an absolute constant. Let $\Delta := \frac{m\theta^{(n,l,k)}}{\binom{n}{l}(\abs{R}-1)^l}$. Then if $\I$ is planted with noise distribution $\Psi$ on $R$ such that $\rho := \E[\chi(\Psi)]\in\R\setminus\{0\}$, we get that $\norm{\K_l(\I)} \geq (\abs{\rho} - \gamma)\Delta$ except with probability at most $n^{-\Omega(l)}$.
\end{corollary}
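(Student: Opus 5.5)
The plan is to derive Corollary~\ref{lower bound for planted instance small q} from Theorem~\ref{delta' lower bound planted case}. I will show that the stated lower bound on $m$ forces
\[
\Delta \geq \frac{12(1+\abs{\rho}^2)(1+\gamma)+4\gamma(1+\abs{\rho})}{3\gamma^2}(1+\epsilon)\ln N
\]
for a suitable constant $\epsilon$, and then convert the failure probability $2N^{-\epsilon}$ into $n^{-\Omega(l)}$. The argument mirrors the derivation of Corollary~\ref{upper bound for random instance small q} from Theorem~\ref{delta' lower bound random case}.

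\textbf{Step 1: simplify the required threshold.} Since $\chi$ takes values on the unit circle, $\abs{\rho}\leq 1$. Together with $0<\gamma<1$, this bounds the numerator by an absolute constant:
\[
12(1+\abs{\rho}^2)(1+\gamma)+4\gamma(1+\abs{\rho})\leq 48+8=56.
\]
Fix $\epsilon=1$. It then suffices to have $\Delta \geq \frac{C}{\gamma^2}\ln N$ for an absolute constant $C$ (for instance $C=38$).

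\textbf{Step 2: bound $\ln N$.} Use $N=\binom{n}{l}(\abs{R}-1)^l \leq n^l(\abs{R}-1)^l$, which gives $\ln N\leq l\ln((\abs{R}-1)n)$. This produces exactly the factor $l\ln((\abs{R}-1)n)$ appearing in the hypothesis on $m$.

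\textbf{Step 3: lower bound $\Delta$ via \eqref{theta lower bound_r=0}.} This is valid because $\ceil{k/2}\leq l\leq n-\floor{k/2}$. It gives
\[
\Delta = \frac{m\theta^{(n,l,k)}}{\binom{n}{l}(\abs{R}-1)^l}\geq m\cdot\frac{2^{k-1}}{k+1}\cdot\frac{\binom{n-k}{l-\ceil{k/2}}}{\binom{n}{l}}\,(\abs{R}-1)^{-\floor{k/2}}.
\]
The main obstacle is showing that the binomial ratio $\binom{n-k}{l-\ceil{k/2}}/\binom{n}{l}$ is at least $(l/(\alpha' n))^{\ceil{k/2}}$ up to factors that the $2^{k-1}/(k+1)$ prefactor and the choice of absolute constant $\alpha$ can absorb. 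I would write the ratio as a product of falling factorials,
\[
\frac{\binom{n-k}{l-\ceil{k/2}}}{\binom{n}{l}}=\frac{l^{\underline{\ceil{k/2}}}\,(n-l)^{\underline{\floor{k/2}}}}{n^{\underline{k}}},
\]
and bound it term by term. The first factor gives roughly $(l/n)^{\ceil{k/2}}$ and the second roughly $((n-l)/n)^{\floor{k/2}}$, so the falling factorials contribute losses of the form $c^k$ (e.g.\ $l^{\underline{j}}\geq (l/e)^j$ when $j\leq l$).

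The delicate regime is $n-l$ close to $\floor{k/2}$, where $((n-l)/n)^{\floor{k/2}}$ can be small. There I would either use the symmetry $\binom{n}{l}=\binom{n}{n-l}$ to compare both sides, or accept an exponential loss in $k$ absorbed into $\alpha^{\ceil{k/2}}$. The latter works because the required $m$ then grows only by a $c^{k}$ factor, which is dominated by $(n/l)^{\ceil{k/2}}$-type terms after adjusting $\alpha$; if needed, restrict to $l\leq n/2$ implicitly. Choosing $\alpha$ large enough that the hypothesis on $m$ gives $\Delta\geq \frac{C}{\gamma^2}\,l\ln((\abs{R}-1)n)\geq\frac{C}{\gamma^2}\ln N$ completes the hypothesis check.

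\textbf{Step 4: conclude.} Apply Theorem~\ref{delta' lower bound planted case}. This uses $\rho\in\R\setminus\{0\}$, as required by Lemma~\ref{X_i expected value for planted case}, and yields $\norm{\K_l(\I)}\geq(\abs{\rho}-\gamma)\Delta$ except with probability $2N^{-1}$. Since $N\geq\binom{n}{l}\geq (n/l)^l$, this is $n^{-\Omega(l)}$ in the intended regime where $l$ is bounded away from $n$, for example $l\leq n^{1-c}$. Otherwise I would take $\epsilon$ larger, as a function of $n/l$, absorbing the extra factor into $\alpha$, or use $N\geq \abs{R}-1$ raised to the power $l$ when $\abs{R}\geq 3$.
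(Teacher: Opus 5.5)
Your overall route is the paper's own: the paper obtains the corollary simply by inserting \eqref{theta lower bound_r=0} and $\ln N\le l\ln((\abs{R}-1)n)$ into Theorem~\ref{delta' lower bound planted case}. Steps 1--2 are fine as computations. However, two steps as you execute them do not give the statement on its whole range; the paper's one-line restatement glosses over the same two points.

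The first gap is in Step 4. Freezing $\epsilon=1$ gives failure probability only $2/N$. But $N=\binom{n}{l}(\abs{R}-1)^l$ is just $2^{O(l)}$ when, e.g., $l=\Theta(n)$ and $\abs{R}=O(1)$, so $2/N$ is not $n^{-\Omega(l)}$. Your fallbacks do not repair this. The bound $N\ge(\abs{R}-1)^l$ only yields $2^{-l}$. The $\epsilon$ you would need grows like $\ln n$ in that regime, and no absolute $\alpha$ absorbs that. It need not be absorbed, because the hypothesis already pays for it and Step 2 discarded the slack. Put $1+\epsilon:=2l\ln((\abs{R}-1)n)/\ln N\ge 2$. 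The theorem's threshold then becomes $\frac{112}{3\gamma^2}\,l\ln((\abs{R}-1)n)$, which is exactly what the hypothesis on $m$ supplies. The failure probability becomes $2N^{-\epsilon}=2e^{\ln N-2l\ln((\abs{R}-1)n)}\le 2((\abs{R}-1)n)^{-l}$.

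The second gap is in Step 3: your claim that near $l=n-\floor{k/2}$ the loss is only of the form $c^k$ is false. At $l=n-\floor{k/2}$ one has $(n-l)^{\underline{\floor{k/2}}}=\floor{k/2}!$, so the binomial ratio is about $\floor{k/2}!\,n^{-\floor{k/2}}$. Meanwhile $(\alpha n/l)^{\ceil{k/2}}$ is only about $\alpha^{\ceil{k/2}}$. So the $r=0$ bound misses by a polynomial factor in $n$, and restricting to $l\le n/2$ proves a weaker statement.

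For $\abs{R}\ge 3$ and $l\ge n/2$, switch to \eqref{theta lower bound r=k}. It gives $\theta^{(n,l,k)}/N\ge 2^{1-k}\,l^{\underline{k}}/n^{\underline{k}}\ge 2^{1-3k}$ (for $k\le n/4$), a loss that $\alpha^{\ceil{k/2}}$ does absorb.

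For $\abs{R}=2$, no bound on $\theta$ saves this route, since $\theta^{(n,l,k)}$ is exactly the $r=0$ term. Take $k=4$, $l=n-\sqrt{n}$ and $m$ at its minimum: then $\Delta=\Theta(\alpha^2\gamma^{-2}\ln n)$, while the theorem needs $\Delta=\Omega(\gamma^{-2}\sqrt{n}\ln n)$.

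The corollary nevertheless holds in full generality by a much shorter argument that avoids Matrix Bernstein. Take the test vector $z_{\bm u}=\chi(-\innerproduct{\bm u}{\bm s})$ of Proposition~\ref{eigenvalue lower bound for signed adjacency matrix}. Proposition~\ref{prop:justifying Kikuchi construction} gives $\bm z^\dagger\K_l(\I)\bm z=\theta^{(n,l,k)}\sum_i\text{Re}\,\chi(e_i)$ and $\bm z^\dagger\bm z=N$, so $\norm{\K_l(\I)}\ge\Delta\cdot\abs{\frac{1}{m}\sum_i\text{Re}\,\chi(e_i)}$. Theorem~\ref{thm:Hoeffding}, applied to the i.i.d.\ variables $\text{Re}\,\chi(e_i)\in[-1,1]$ with mean $\rho$, then gives $\norm{\K_l(\I)}\ge(\abs{\rho}-\gamma)\Delta$ except with probability $e^{-m\gamma^2/2}\le n^{-\alpha l/2}$ (for $\alpha\ge 1$). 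This holds for every admissible $l$ and every ring $R$, with no bound on $\theta$ needed.
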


\begin{corollary}\label{lower bound for planted instance large q}
Let $k,n,l$ be positive integers such that $k\leq l\leq n$. Let $0<\gamma<1$. Consider a $\LIN{k}{R}$ instance $\I$ of dimension $n$ and $m$ constraints, $m \geq \frac{l\ln((\abs{R}-1)n)}{\gamma^2}\cdotp\left(\frac{\alpha n}{l}\right)^{k}$ where $\alpha > 0$ is an absolute constant. Let $\Delta := \frac{m\theta^{(n,l,k)}}{\binom{n}{l}(\abs{R}-1)^l}$. Then if $\I$ is planted with noise distribution $\Psi$ on $R$ such that $\rho := \E[\chi(\Psi)]\in\R\setminus\{0\}$, we get that $\norm{\K_l(\I)} \geq (\abs{\rho} - \gamma)\Delta$ except with probability at most $n^{-\Omega(l)}$.
\end{corollary}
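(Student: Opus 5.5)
The plan is to derive Corollary~\ref{lower bound for planted instance large q} from Theorem~\ref{delta' lower bound planted case}, in the same way Corollary~\ref{upper bound for random instance large q} follows from Theorem~\ref{delta' lower bound random case}. I would fix $\epsilon$ to be a small absolute constant, say $\epsilon=1$, and keep $\gamma\in(0,1)$ as given. It then suffices to show that the stated lower bound on $m$ forces
\[
\Delta=\frac{m\theta^{(n,l,k)}}{N}\geq \frac{12(1+\abs{\rho}^2)(1+\gamma)+4\gamma(1+\abs{\rho})}{3\gamma^2}(1+\epsilon)\ln N .
\]
Since $\abs{\rho}\leq 1$ and $\gamma<1$, the numerator is at most an absolute constant ($12\cdot 2\cdot 2+8=56$). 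So the right-hand side is at most $C\gamma^{-2}\ln N$ for an absolute constant $C$.

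Next I would bound $\ln N$ from above: $\ln N=\ln\binom{n}{l}+l\ln(\abs{R}-1)\leq l\ln n+l\ln(\abs{R}-1)=l\ln((\abs{R}-1)n)$. This is exactly the logarithmic factor appearing in the hypothesis on $m$.

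For the ratio $\theta^{(n,l,k)}/N$, I would use the $r=k$ lower bound \eqref{theta lower bound r=k}, which is valid because $k\leq l\leq n$:
\[
\frac{\theta^{(n,l,k)}}{N}\geq \frac{2^{1-k}\binom{n-k}{l-k}(\abs{R}-1)^l}{\binom{n}{l}(\abs{R}-1)^l}=2^{1-k}\frac{\binom{n-k}{l-k}}{\binom{n}{l}}.
\]
The $(\abs{R}-1)^l$ factors cancel, which is why no power of $\abs{R}$ appears in $m$. Then I would use $\binom{n-k}{l-k}/\binom{n}{l}=\prod_{j=0}^{k-1}\frac{l-j}{n-j}$. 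The main obstacle is lower-bounding this product by $(c\,l/n)^k$ for an absolute constant $c$. For $j<k\leq l$ one has $l-j\geq l-k+1$, but when $l$ is close to $k$ the factors are small. I would handle this with the standard estimate $\binom{n-k}{l-k}/\binom{n}{l}=\binom{l}{k}/\binom{n}{k}\geq (l/k)^k/(en/k)^k=(l/(en))^k$. Combined with the $2^{1-k}$ factor, this gives $\theta/N\geq 2(l/(2en))^k$.

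Putting these together gives $\Delta\geq 2m\,(l/(2en))^k$. Choosing $\alpha\geq 2e\cdot C'$ with $C'$ large enough (say $C'=C^{1/k}$, or simply absorbing the constant using $k\geq 1$), the hypothesis $m\geq \frac{l\ln((\abs{R}-1)n)}{\gamma^2}(\alpha n/l)^k$ yields $\Delta\geq C\gamma^{-2}\ln N$. Theorem~\ref{delta' lower bound planted case} then gives $\norm{\K_l(\I)}\geq(\abs{\rho}-\gamma)\Delta$ except with probability $2N^{-\epsilon}$.

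Finally, I would check that $2N^{-\epsilon}=n^{-\Omega(l)}$. With $\epsilon=1$ this follows from $N\geq\binom{n}{l}\geq (n/l)^l$. For $l$ bounded away from $n$ this is $n^{\Omega(l)}$, and for larger $l$ one uses $(\abs{R}-1)^l\geq 2^l$ together with $\binom{n}{l}\geq 1$, adjusting $\epsilon$ to be a suitable constant or mild function as in Corollary~\ref{upper bound for random instance large q}. I would treat this tail conversion exactly as it was done in the random-case corollary, since the planted case differs only in the constant in front of $\ln N$.
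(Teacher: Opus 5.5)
Your route is the one the paper intends: feed the $r=k$ bound \eqref{theta lower bound r=k} and $\ln N\le l\ln((\abs{R}-1)n)$ into Theorem~\ref{delta' lower bound planted case}. Your estimate $\theta^{(n,l,k)}/N\ge 2^{1-k}\binom{l}{k}/\binom{n}{k}\ge 2\left(\frac{l}{2en}\right)^k$ and the constant bookkeeping are correct. (Note that \eqref{theta lower bound r=k} also needs $\abs{R}\ge 3$, because it uses $\abs{R}-2\ge(\abs{R}-1)/2$; for $\abs{R}=2$ the $r=k$ term vanishes.)

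\textbf{The gap.} The step that does not go through is the last one, turning $2N^{-\epsilon}$ into $n^{-\Omega(l)}$. With $\epsilon=1$, or any fixed constant, this is false in general. The bound $N\ge(n/l)^l$ gives $n^{\Omega(l)}$ only when $l\le n^{1-c}$, not merely when $l$ is bounded away from $n$, and $(\abs{R}-1)^l\ge 2^l$ does not rescue it. For example, with $\abs{R}=3$ and $l=n/2$ one has $N\le 2^{3n/2}$, so your failure bound is only $2^{-O(n)}$, not $n^{-\Omega(n)}$. Pointing to Corollary~\ref{upper bound for random instance large q} does not settle this, since the identical issue arises there.

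\textbf{The fix.} Let $\epsilon$ depend on the parameters, which Theorem~\ref{delta' lower bound planted case} allows. Put $L:=l\ln((\abs{R}-1)n)$ and define $\epsilon$ by $(1+\epsilon)\ln N=2L$. Since $\ln N\le L$, this gives $\epsilon\ge 1$, and
\[
2N^{-\epsilon}=2Ne^{-2L}\le 2e^{-L}=2((\abs{R}-1)n)^{-l}\le 2n^{-l}.
\]
The threshold in the theorem is then at most $\frac{112}{3\gamma^2}L$. Your bound $\Delta\ge\frac{2L}{\gamma^2}\left(\frac{\alpha}{2e}\right)^k$ meets it as soon as $\alpha\ge 2e\cdot\frac{56}{3}$. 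This is exactly why the hypothesis on $m$ carries the factor $l\ln((\abs{R}-1)n)$ rather than $\ln N$.

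\textbf{A shorter proof.} Taking $\bm{x}=\bm{s}$ in Proposition~\ref{prop:justifying Kikuchi construction} gives $\bm{y}^\dagger\K_l(\I)\bm{y}/\bm{y}^\dagger\bm{y}=\Delta\,\texttt{adv}_{\I}(\bm{s})$, hence $\norm{\K_l(\I)}\ge\Delta\abs{\texttt{adv}_{\I}(\bm{s})}$. Here $\texttt{adv}_{\I}(\bm{s})=\frac{1}{m}\sum_{i\in[m]}\text{Re}\,\chi(e_i)$ is an average of i.i.d.\ $[-1,1]$-valued variables with mean $\rho$. Theorem~\ref{thm:Hoeffding} then gives $\norm{\K_l(\I)}\ge(\abs{\rho}-\gamma)\Delta$ except with probability $2e^{-m\gamma^2/2}=n^{-\Omega(l)}$. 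This needs no matrix Bernstein, no degree concentration, and no lower bound on $\theta^{(n,l,k)}$.
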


\subsection{Distinguishing between a planted and a random instance}\label{subsec:distinguish planted vs random Kikuchi}
Corollaries \ref{upper bound for random instance small q}, \ref{upper bound for random instance large q} and Corollaries \ref{lower bound for planted instance small q}, \ref{lower bound for planted instance large q} together suggest that we can solve the decisional $\LIN{k}{R}$ problem with high probability if we had the ability to approximate the maximum eigenvalue of $\K_l(\I)$. For this, we need to ensure that the thresholds established in these corollaries are separated. In particular, we must have $(\abs{\rho} - \gamma)\Delta > \delta\Delta$, which can be easily achieved by setting, say, $\gamma = \delta = \abs{\rho}/4$. 

We can then use a standard eigenvalue estimation technique, say the Power Iteration method, to approximate $\norm{\K_l(\I)}$. Note that $\K_l(\I)$ is a $N\times N$ matrix for $N = \binom{n}{l}(\abs{R}-1)^l$. Moreover, it is a sparse matrix since each row has at most $D(\K_l(\I)) = O(\Delta)$ non-zero entries, where $\Delta = O(\ln N)$ suffices for our purpose. Thus, the Power Iteration takes $\widetilde{O}(N) = \widetilde{O}\left(\binom{n}{l}(\abs{R}-1)^l\right)$ time to converge.

\begin{corollary}\label{solving kLIN small q}
Let $k,n,l$ be positive integers such that $\ceil{k/2}\leq l\leq n-\floor{k/2}$. Consider a $\LIN{k}{R}$ instance $\I$ of dimension $n$ and $m$ constraints which is either random or is planted with noise distribution $\Psi$ on $R$ with $\rho = \E[\chi(\Psi)]\in\R\setminus\{0\}$. Let $m \geq \frac{l\ln((\abs{R}-1)n)}{\abs{\rho}^2}\cdotp(\abs{R}-1)^{\floor{k/2}}\left(\frac{\alpha n}{l}\right)^{\ceil{k/2}}$  where $\alpha > 0$ is an absolute constant. Then we can determine the distribution of $\I$ in $\widetilde{O}\left(\binom{n}{l}(\abs{R}-1)^l\right)$ time except with probability at most $n^{-\Omega(l)}$.
\end{corollary}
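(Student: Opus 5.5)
The distinguisher builds $\K_l(\I)$, estimates its spectral norm, and compares it with a threshold placed between the two tail bounds. Set $\gamma=\delta=\abs{\rho}/4$ and $\Delta:=m\theta^{(n,l,k)}/N$ with $N=\binom{n}{l}(\abs{R}-1)^l$. The test outputs ``planted'' iff the estimate of $\norm{\K_l(\I)}$ exceeds $\tau:=\frac{\abs{\rho}}{2}\Delta$. Note that $\Delta$ is computable from $m,n,l,k,\abs{R}$ via \eqref{theta}, and $\rho$ is a known parameter of $\Psi$.

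\textbf{Correctness.} With $\delta=\gamma=\abs{\rho}/4$, the hypothesis $m\geq \frac{l\ln((\abs{R}-1)n)}{\abs{\rho}^2}(\abs{R}-1)^{\floor{k/2}}(\alpha n/l)^{\ceil{k/2}}$ matches the hypotheses of Corollaries~\ref{upper bound for random instance small q} and \ref{lower bound for planted instance small q}. The factor $16$ from $1/\delta^2=16/\abs{\rho}^2$ is absorbed into the absolute constant $\alpha$, since $\ceil{k/2}\geq 1$, so $\alpha^{\ceil{k/2}}$ can swallow a constant factor by enlarging $\alpha$.

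To check this absorption concretely, I would trace back to Theorems~\ref{delta' lower bound random case} and \ref{delta' lower bound planted case}. They require $\Delta\geq C(\rho)\,(1+\epsilon)\ln N$ with $C(\rho)=O(1/\abs{\rho}^2)$, using $\abs{\rho}\leq 1$. I would then use \eqref{theta lower bound_r=0} together with $\binom{n-k}{l-\ceil{k/2}}/\binom{n}{l}\gtrsim (l/(c n))^{\ceil{k/2}}$ and $\ln N\leq l\ln((\abs{R}-1)n)$. This turns the required lower bound on $\Delta$ into the stated lower bound on $m$; the $k$-dependent factors $2^{k-1}/(k+1)$ are also absorbed into $\alpha^{\ceil{k/2}}$. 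Taking $\epsilon$ a constant, the failure probability is $2N^{-\epsilon}=n^{-\Omega(l)}$, since $N\geq\binom{n}{l}\geq (n/l)^l$. (When $l$ is close to $n$ this needs $\abs{R}$ large, or I would use the lower bound $\binom{n}{l}(\abs{R}-1)^l$ directly.)

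The two corollaries then give the separation. In the random case $\norm{\K_l(\I)}\leq \frac{\abs{\rho}}{4}\Delta$, and in the planted case $\norm{\K_l(\I)}\geq \frac{3\abs{\rho}}{4}\Delta$. Each holds except with probability $n^{-\Omega(l)}$.

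\textbf{Estimating the norm.} We need an estimate of $\norm{\K_l(\I)}$ accurate to a factor $(1\pm 1/4)$; that suffices to separate $\frac{\abs{\rho}}{4}\Delta$ from $\frac{3\abs{\rho}}{4}\Delta$ around $\tau$. Because $\K_l(\I)$ is Hermitian, power iteration on $\K_l(\I)^2$ from a random start vector gives this. After $O(\log N)$ iterations, with constant success probability, the estimate is at least $(1-1/8)\norm{\K_l(\I)}$. Repetition boosts the success probability, and the estimate is always an upper bound, by Rayleigh quotient. Each iteration is a sparse matrix–vector product. By Proposition~\ref{upper bound on max degree}, with high probability every row has at most $O(\Delta)$ nonzeros, so one iteration costs $O(N\Delta)$. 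The graph itself is built by enumerating, for each $\bm{u}\in R^n_l$ and each constraint, the $\bm{v}=\bm{u}\pm\bm{a}_i$ lying in $R^n_l$.

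\textbf{Running time: the main obstacle.} The obstacle is bounding the total time by $\widetilde{O}(N)$, which needs $\Delta$ to be polylogarithmic. Here I would take $m$ at its threshold, discarding extra samples, so that $\Delta=\Theta(\ln N/\abs{\rho}^2)$. Naively building the graph costs $O(Nm)$, which is too much. Instead, for each constraint I would enumerate only its $\theta^{(n,l,k)}$ edges directly via Construction~\ref{kLIN to 2LIN alternate}. The total work is then $m\theta^{(n,l,k)}=N\Delta=\widetilde{O}(N)$, treating $\rho$ as constant, and the whole procedure runs in $\widetilde{O}\left(\binom{n}{l}(\abs{R}-1)^l\right)$ time with overall error $n^{-\Omega(l)}$.
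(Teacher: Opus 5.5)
Your proposal is correct and is essentially the paper's proof: take $\gamma=\delta=\abs{\rho}/4$ in Corollaries~\ref{upper bound for random instance small q} and \ref{lower bound for planted instance small q}, so the random bound $\frac{\abs{\rho}}{4}\Delta$ and the planted bound $\frac{3\abs{\rho}}{4}\Delta$ separate, and then estimate $\norm{\K_l(\I)}$ by power iteration on a matrix with $O(\Delta)=O(\ln N)$ nonzeros per row. Your extra details (trimming samples, enumerating the $m\theta^{(n,l,k)}=N\Delta$ edges directly) only make explicit what the paper leaves implicit, but two small corrections are needed: the estimate $\norm{\K_l(\I)\bm{x}}/\norm{\bm{x}}$ is always a \emph{lower} bound on $\norm{\K_l(\I)}$, not an upper bound (this is what makes taking the maximum over repetitions safe), and you should trim to the sample count at which the exactly computable $\Delta$ meets the thresholds of Theorems~\ref{delta' lower bound random case} and \ref{delta' lower bound planted case}, not merely to the stated lower bound on $m$, since $\theta^{(n,l,k)}$ can far exceed \eqref{theta lower bound_r=0} when $\abs{R}\gg n/l$.
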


\begin{corollary}\label{solving kLIN large q}
Let $k,n,l$ be positive integers such that $k\leq l\leq n$. Consider a $\LIN{k}{R}$ instance $\I$ of dimension $n$ and $m$ constraints which is either random or is planted with noise distribution $\Psi$ on $R$ with $\rho = \E[\chi(\Psi)]\in\R\setminus\{0\}$. Let $m \geq \frac{l\ln((\abs{R}-1)n)}{\abs{\rho}^2}\cdotp\left(\frac{\alpha n}{l}\right)^{k}$ where $\alpha > 0$ is an absolute constant. Then we can determine the distribution of $\I$ in $\widetilde{O}\left(\binom{n}{l}(\abs{R}-1)^l\right)$ time except with probability at most $n^{-\Omega(l)}$.
\end{corollary}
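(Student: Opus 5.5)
The plan is to combine Corollary~\ref{upper bound for random instance large q} and Corollary~\ref{lower bound for planted instance large q} with the common choice $\gamma=\delta=\abs{\rho}/4$, and then to estimate $\norm{\K_l(\I)}$ by power iteration on the sparse matrix $\K_l(\I)$. We first note that $\abs{\rho}\le 1$, because $\abs{\chi(\cdot)}=1$. Hence $\delta=\gamma=\abs{\rho}/4\in(0,1)$, and the hypothesis
\[
m\geq \frac{l\ln((\abs{R}-1)n)}{\abs{\rho}^2}\left(\frac{\alpha n}{l}\right)^k
\]
implies the sample requirement of both corollaries, namely the one with $1/\delta^2=16/\abs{\rho}^2$, once the constant factor $16$ is absorbed into $\alpha$. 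Since $k\le l$ here, both corollaries apply.

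Consequently, if $\I$ is random, then $\norm{\K_l(\I)}\le \frac{\abs{\rho}}{4}\Delta$, and if $\I$ is planted, then $\norm{\K_l(\I)}\ge \frac{3\abs{\rho}}{4}\Delta$. Each statement fails with probability at most $n^{-\Omega(l)}$. The quantity $\Delta=m\theta^{(n,l,k)}/N$ is computable directly from \eqref{theta}, so we can fix the threshold $\frac{\abs{\rho}}{2}\Delta$. The distinguisher outputs ``planted'' exactly when its estimate of $\norm{\K_l(\I)}$ exceeds this threshold. It only needs a multiplicative approximation within a factor $1\pm 1/4$, since the two cases are separated by a factor of $3$.

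For the running time, we build $\K_l(\I)$ implicitly. Its vertex set $R^n_l$ has size $N=\binom{n}{l}(\abs{R}-1)^l$. By Proposition~\ref{upper bound on max degree}, every row has at most $O(\Delta)$ nonzero entries, except with probability $N^{-\Omega(1)}$. If the maximum degree exceeds $2\Delta$, we can detect this and output arbitrarily, which costs only that failure probability. To compute the edges out of $\bm{u}$ for constraint $i$, note that the only candidates are $\bm{v}=\bm{u}\pm\bm{a}_i$, and we test whether each lies in $R^n_l$. Doing this naively over all $m$ constraints would cost $Nm$. Instead, we enumerate edges per constraint, generating its $\theta^{(n,l,k)}$ pairs via Construction~\ref{kLIN to 2LIN alternate}. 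The total work is $m\theta^{(n,l,k)}=N\Delta$.

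We may take $m$ equal to the stated bound, discarding any extra samples. Then $\Delta$ is at most polylogarithmic in $N$ times $\poly(1/\abs{\rho})$, which we treat as absorbed into the $\widetilde{O}$. Constructing the matrix therefore takes $\widetilde{O}(N)$ time, and each matrix-vector product also takes $\widetilde{O}(N)$.

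The main obstacle is the spectral norm estimate, because power iteration converges at a rate that depends on the spectral gap. To avoid this, we use the standard randomized bound instead. Starting from a random Gaussian vector $\bm{g}$, the quantity $\norm{\K_l(\I)^{T}\bm{g}}^{1/T}/\norm{\bm{g}}^{1/T}$ with $T=O(\log N)$ iterations approximates $\norm{\K_l(\I)}$ within a factor $1\pm 1/4$ with high probability. A Lanczos-type bound would do equally well, and neither needs a gap. This gives $O(\log N)$ sparse products, so the total time is $\widetilde{O}\left(\binom{n}{l}(\abs{R}-1)^l\right)$.

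Finally, a union bound over the failure events gives total failure probability at most $n^{-\Omega(l)}$. The events are the two corollaries, the degree bound, and the randomized norm estimate, whose failure we also drive down to $N^{-\Omega(1)}\le n^{-\Omega(l)}$ with $O(\log N)$ repetitions.
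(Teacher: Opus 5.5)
Your route matches the paper's: set $\gamma=\delta=\abs{\rho}/4$, combine Corollaries~\ref{upper bound for random instance large q} and~\ref{lower bound for planted instance large q}, threshold at $\abs{\rho}\Delta/2$, and estimate $\norm{\K_l(\I)}$ by power iteration. Your random-start, gap-free analysis of the power method and your per-constraint edge enumeration are correct details the paper glosses over.

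However, one step of your running-time argument fails. Taking $m$ equal to the stated bound does not make $\Delta$ polylogarithmic in $N$. That bound was derived through the lossy lower bound \eqref{theta lower bound r=k} on $\theta^{(n,l,k)}$. So at that $m$, the true $\Delta=m\theta^{(n,l,k)}/N$ can exceed the needed $O(\abs{\rho}^{-2}\ln N)$ by a polynomial factor, for two reasons:
\begin{itemize}
\item For small $\abs{R}$, the $r<k$ terms of \eqref{theta} dominate.
\item Even for large $\abs{R}$ and $k\ll l$, you get $\Delta\approx 2\alpha^k\abs{\rho}^{-2}l\ln((\abs{R}-1)n)$ with $\alpha>1$, which is off by $n^{\Theta(1)}$ once $k=\Theta(\log n)$.
\end{itemize}
Concretely, take $k=l=2$, $R=\Z_3$ and $\abs{\rho}=1$. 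Then $\theta^{(n,2,2)}=8n-14$ and $N=2n(n-1)$. So $m\approx\frac{\alpha^2}{2}n^2\ln(2n)$ gives $\Delta\approx 2\alpha^2 n\ln(2n)$, and your enumeration costs $N\Delta=\Theta(n^3\log n)$, not $\widetilde{O}(N)=\widetilde{O}(n^2)$.

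The fix is what the paper means by ``$\Delta=O(\ln N)$ suffices''. Since $\theta^{(n,l,k)}$ is explicit, keep only $m'=\lceil\tau N/\theta^{(n,l,k)}\rceil$ samples. Here $\tau$ is the larger of the two $\Delta$-thresholds in Theorems~\ref{delta' lower bound random case} and~\ref{delta' lower bound planted case}, with $\delta=\gamma=\abs{\rho}/4$ and $(1+\epsilon)\ln N=2l\ln((\abs{R}-1)n)$, so that $2N^{-\epsilon}\leq n^{-\Omega(l)}$.
\begin{itemize}
\item The hypothesis on $m$ together with \eqref{theta lower bound r=k} gives $m'\leq m$.
\item Because $m'$ is chosen independently of the data, the retained samples still form a random or planted instance.
\item You must invoke the two theorems directly, since $m'$ need not satisfy the corollaries' hypothesis.
\end{itemize}
Now $\Delta'\leq\tau+2=O(\abs{\rho}^{-2}l\ln((\abs{R}-1)n))$, which is polylogarithmic in $N$ apart from the $\abs{\rho}^{-2}$ factor. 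The matrix has at most $m'\theta^{(n,l,k)}=N\Delta'$ nonzero entries. This bound is deterministic, so your max-degree check is superfluous. With polylogarithmically many matrix--vector products, the total time is $\widetilde{O}(N)$.

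Finally, the inequality $N^{-\Omega(1)}\leq n^{-\Omega(l)}$ in your last paragraph fails when $l=\Theta(n)$ and $\abs{R}$ is constant: then $\ln N=O(n)$ while $l\ln n=\Theta(n\ln n)$. Use $O(l\log(\abs{R}n))$ independent restarts of the norm estimator instead of $O(\log N)$; this is still polylogarithmic in $N$.
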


An important aspect of this distinguisher is that it works for any (nice enough) noise distribution of the planted instance. The sample complexity required by the algorithm is inversely proportional to $\abs{\rho}^2$ where $\rho = \E[\chi(\Psi)]$. The sample and time complexities do not depend on any other property of the noise distribution. Also, as long as $\abs{\rho}$ is sufficiently large (say at least inverse polynomial in $n$), it won't affect the sample complexity by a lot. As immediate consequences of Corollaries \ref{solving kLIN small q}, \ref{solving kLIN large q} we have algorithms for solving $k$-sparse LWE and sparse LPN over modulus $q$. 

For sparse LWE, we can compute $1/\rho^2 \leq e^{2\pi r^2/q^2}$ (Observation~\ref{app:psi0 for DG}). Since the width $r$ for the LWE distribution is chosen to be less than $q$, this dependency will only contribute a constant term.

\begin{theorem}\label{solving sparse LWE}
    Let $n,m,q,r$ be positive integers, $r<q/2$. Let $\alpha > 0$ be an absolute constant. There exists an algorithm that solves the decisional \(k\mbox{-}\text{LWE}_{n,m,q,r}\) problem such that:
    \begin{itemize}
        \item Let $l\in\N$ be a parameter such that $\ceil{k/2}\leq l\leq n-\floor{k/2}$. Then provided that the number of samples $m\geq l\ln(qn)\cdotp (q-1)^{\floor{k/2}}\left(\frac{\alpha n}{l}\right)^{\ceil{k/2}}$, the algorithm takes time $\widetilde{O}\left(\binom{n}{l}(q-1)^l\right)$ and is correct, except with probability at most $n^{-\Omega(l)}$.
        \item Let $l\in\N$ be a parameter such that $k\leq l\leq n$. Then provided that the number of samples $m\geq l\ln(qn)\cdotp \left(\frac{\alpha n}{l}\right)^{k}$, the algorithm takes time $\widetilde{O}\left(\binom{n}{l}(q-1)^l\right)$ and is correct except with probability at most $n^{-\Omega(l)}$.
    \end{itemize}
\end{theorem}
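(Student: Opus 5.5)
The plan is to obtain Theorem~\ref{solving sparse LWE} as a direct specialization of Corollaries~\ref{solving kLIN small q} and~\ref{solving kLIN large q} to $R=\Z_q$ with noise distribution $\Psi=\Psi_{\Z_q,r}$. The decisional \(k\mbox{-}\text{LWE}_{n,m,q,r}\) problem (Definition~\ref{defn sparse LWE}) is exactly the task of distinguishing a random $\LIN{k}{\Z_q}$ instance from a planted one with this noise. So the algorithm is the Kikuchi spectral distinguisher: build $\K_l(\I)$, estimate $\norm{\K_l(\I)}$ by power iteration, and compare it with the threshold $\frac{\abs{\rho}}{2}\Delta$. This uses $\gamma=\delta=\abs{\rho}/4$, as in Subsection~\ref{subsec:distinguish planted vs random Kikuchi}. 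The first bullet uses the small-$q$ corollary and the second the large-$q$ corollary, with $\abs{R}-1=q-1$, so that $\ln((\abs{R}-1)n)\le\ln(qn)$ and $N=\binom{n}{l}(q-1)^l$.

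The one hypothesis to verify is the noise assumption $\rho=\E[\chi(\Psi_{\Z_q,r})]\in\R\setminus\{0\}$.
\begin{itemize}
\item \textbf{Realness.} The mass function of $\Psi_{\Z_q,r}$ is symmetric under $j\mapsto -j$, since $\sum_{x\in q\Z+j}e^{-\pi x^2/r^2}$ is invariant under $x\mapsto -x$. Hence the imaginary parts $\sin(2\pi j/q)$ cancel in pairs and $\rho=\sum_j\Psi(j)\cos(2\pi j/q)$ is real.
\item \textbf{Non-vanishing.} Observation~\ref{obs: rho for discrete gaussian} gives $\rho\ge e^{-\pi r^2/q^2}>0$.
\end{itemize}
With $r<q/2$ we get $r^2/q^2<1/4$, so $1/\rho^2\le e^{2\pi r^2/q^2}<e^{\pi/2}$, an absolute constant.

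The sample-complexity thresholds in the corollaries are $\frac{1}{\abs{\rho}^2}\, l\ln((q-1)n)\,(q-1)^{\floor{k/2}}\left(\frac{\alpha' n}{l}\right)^{\ceil{k/2}}$ and its large-$q$ analogue, for the corollaries' constant $\alpha'$. The constant factor $e^{\pi/2}$ is absorbed by enlarging the absolute constant: take $\alpha=\alpha' e^{\pi/2}$, which works since $e^{\pi/2}\le (e^{\pi/2})^{\ceil{k/2}}$ for $k\ge1$. The hypothesis $m\ge l\ln(qn)(q-1)^{\floor{k/2}}(\alpha n/l)^{\ceil{k/2}}$ then implies the corollary's requirement, and similarly for the $k$-exponent version. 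The runtime $\widetilde{O}(\binom{n}{l}(q-1)^l)$ and failure probability $n^{-\Omega(l)}$ carry over verbatim. These bounds are unaffected by $\rho$, because $\rho$ only rescales the $\gamma,\delta$ constants, which are now bounded below absolutely.

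The main obstacle is only bookkeeping: checking that the $1/\rho^2$ factor and the switch from $\ln((q-1)n)$ to $\ln(qn)$ are absorbed into $\alpha$ without altering the $n^{-\Omega(l)}$ failure bound. One also has to make sure the $\epsilon$ and $\ln N$ accounting in Theorems~\ref{delta' lower bound random case} and~\ref{delta' lower bound planted case} still holds with $\gamma=\delta$ a fixed constant bounded away from $0$. Since $\abs{\rho}\in[e^{-\pi/4},1]$, the prefactors $\frac{16\delta+12}{3\delta^2}$ and the analogous planted-case expression are absolute constants, so the argument goes through.
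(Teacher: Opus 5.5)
Your proposal is correct and follows the paper's route: specialize Corollaries~\ref{solving kLIN small q} and~\ref{solving kLIN large q} to $R=\Z_q$ with $\Psi=\Psi_{\Z_q,r}$, and absorb the constant $1/\rho^2\le e^{2\pi r^2/q^2}<e^{\pi/2}$ (from Observation~\ref{obs: rho for discrete gaussian}) into $\alpha$. Your explicit checks that $\rho$ is real by symmetry and that $\ln(qn)\ge\ln((q-1)n)$ are details the paper leaves implicit. Likewise, the appendix's Poisson-summation bound $\rho\ge e^{-\pi r^2/q^2}$ holds for every $r>0$, so the case $r=1$, excluded by that Observation's hypothesis $1<r$, is also covered.
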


For sparse LPN, $\rho = 1 - \frac{q\mu}{q-1}$ from Definition \ref{defn sparse LPN}. (For $\mu\rightarrow \frac{q-1}{q}$, the planted distribution approaches the random distribution.) 

\begin{theorem}\label{solving sparse LPN}
    Let $n,m,q$ be positive integers, $0\leq \mu<(q-1)/q$. Let $\alpha > 0$ be an absolute constant. There exists an algorithm that solves the decisional \(k\mbox{-}\text{LPN}_{n,m,q,\mu}\) problem such that:
    \begin{itemize}
        \item Let $l\in\N$ be a parameter such that $\ceil{k/2}\leq l\leq n-\floor{k/2}$. Then provided that the number of samples $m\geq \left(1-\frac{q\mu}{q-1}\right)^{-2}(l\ln(qn))\cdotp(q-1)^{\floor{k/2}}\left(\frac{\alpha n}{l}\right)^{\ceil{k/2}}$, the algorithm takes time $\widetilde{O}\left(\binom{n}{l}(\abs{R}-1)^l\right)$ time and is correct except with probability at most $n^{-\Omega(l)}$.
        \item  Let $l\in\N$ be a parameter such that $k\leq l\leq n$. Then provided that the number of samples $m\geq \left(1-\frac{q\mu}{q-1}\right)^{-2}(l\ln(qn))\cdotp\left(\frac{\alpha n}{l}\right)^{k}$, the algorithm takes time $\widetilde{O}\left(\binom{n}{l}(\abs{R}-1)^l\right)$ time and is correct except with probability at most $n^{-\Omega(l)}$.
    \end{itemize}
\end{theorem}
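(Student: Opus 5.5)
The plan is to reduce Theorem~\ref{solving sparse LPN} to Corollaries~\ref{solving kLIN small q} and~\ref{solving kLIN large q}, specialized to $R=\Z_q$ with noise distribution $\Psi=\zeta_{\Z_q,\mu}$. Over $\Z_q$ the canonical character is $\chi(j)=\omega_q^j$. The decisional $k\mbox{-}\text{LPN}_{n,m,q,\mu}$ problem is, by Definition~\ref{defn sparse LPN}, exactly the task of distinguishing a random $\LIN{k}{\Z_q}$ instance from a planted one with this noise. So all that remains is to verify the hypothesis on $\rho$ and substitute its value into the sample bounds.

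First, Observation~\ref{obs: rho for lpn distribution} gives $\rho=\E[\chi(\zeta_{\Z_q,\mu})]=1-\frac{q\mu}{q-1}$, which is real. The assumption $0\le\mu<(q-1)/q$ gives $q\mu/(q-1)<1$, so $\rho\in(0,1]$; in particular $\rho\in\R\setminus\{0\}$, as the corollaries require. Note that the noise distribution is not symmetric in general. However, $\E[\chi(-e)]=\overline{\E[\chi(e)]}=\rho$ because $\rho$ is real, and this is what Lemma~\ref{X_i expected value for planted case} uses.

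For the first bullet, take $\ceil{k/2}\le l\le n-\floor{k/2}$ and apply Corollary~\ref{solving kLIN small q} with $\abs{R}=q$ and $\abs{\rho}^{-2}=\left(1-\frac{q\mu}{q-1}\right)^{-2}$. The required sample bound in that corollary is $\abs{\rho}^{-2}\, l\ln((q-1)n)\,(q-1)^{\floor{k/2}}(\alpha n/l)^{\ceil{k/2}}$. Since $\ln((q-1)n)\le\ln(qn)$, the hypothesis of the theorem implies this bound. The running time $\widetilde{O}\left(\binom{n}{l}(q-1)^l\right)$ and the failure probability $n^{-\Omega(l)}$ carry over unchanged. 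The second bullet follows in the same way from Corollary~\ref{solving kLIN large q} for $k\le l\le n$.

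The only point needing care is bookkeeping rather than mathematics. The constant $\alpha$ and the parameter choice $\gamma=\delta=\abs{\rho}/4$ used in Subsection~\ref{subsec:distinguish planted vs random Kikuchi} introduce constant factors, so they must be absorbed into the absolute constant $\alpha$ or a leading constant. We also need that the distinguishing threshold, for instance $\frac{\abs{\rho}}{2}\Delta$, is computable. This holds because $\rho$ is known from $q$ and $\mu$, and $\Delta=m\theta^{(n,l,k)}/N$ is computable from \eqref{theta}. No new probabilistic argument is needed beyond the earlier results.
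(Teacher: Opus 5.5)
Your proposal is correct and is exactly the paper's argument. The theorem is Corollaries~\ref{solving kLIN small q} and~\ref{solving kLIN large q} specialized to $R=\Z_q$, with $\rho=1-\frac{q\mu}{q-1}\in(0,1]$ taken from Observation~\ref{obs: rho for lpn distribution}, and $\ln((q-1)n)\le\ln(qn)$ absorbing the change of logarithm.

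One caveat applies equally to the paper's proof: the lower bounds on $\theta^{(n,l,k)}$ behind those corollaries come from the $r=1$ summand of \eqref{theta} (for odd $k$) or the $r=k$ summand, and both vanish when $q=2$, so in those cases the argument actually needs $q\ge 3$. For instance, for $q=2$ and odd $k$ no weight-$l$ vectors $\bm{u},\bm{v}\in\Z_2^n$ satisfy $\bm{v}-\bm{u}=\bm{a}$, so $\K_l(\I)=\bm{O}$.
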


\subsection{A quantum speedup} 
In essence, to distinguish a random $\LIN{k}{R}$ instance from a planted one, the Kikuchi method performs estimation of the spectral norm of a sparse Hermitian matrix in a space of dimension $N = \binom{n}{l}(\abs{R}-1)^l$. For any known classical method, such as the power iteration, eigenvalue estimation requires matrix multiplications, so it is unlikely for the time complexity to go below $\widetilde{\Omega}(N)$. 

However, with quantum algorithmic techniques, we can do better. We can use the same idea as given by Schmidhuber et. al. in \cite{quartic_speedup} for $R = \Z_2$: Using a \emph{guiding state} created from the $\LIN{k}{R}$ instance, we are able to determine whether the eigenspace corresponding to eigenvalues $\geq \lambda$ (for some threshold $\lambda > 0$) is non-empty. This enjoys a speedup on Power iteration which can be from quadratic to quartic depending on the choice of parameter $l$. We elaborate on this intuition in Appendix~\ref{app:quantum speedup}. For even more details, the reader may refer to Appendix C of \cite{quartic_speedup}.

\section{A Simple Algorithm}\label{sec:simple algo}

In this section, we describe another algorithm for distinguishing whether a $\LIN{k}{R}$ instance is random or planted. This algorithm is simpler than the spectral method and outperforms the latter when $\abs{R}$ is large, specifically $\abs{R} = \omega(n/l)$; like the spectral method, $l$ here is a parameter used for sample-time complexity tradeoff. This algorithm in essence, is the dense-minor attack proposed in \cite{sparse_LWE}. It is also similar to the simple refutation algorithm proposed in Section 8 of \cite{kocurek2025spectral}, but is more efficient since we process only one dense minor instead of all $\binom{n}{l}$ dense minors\ \footnote{for the refutation task, the latter was required}. We describe this in Algorithm~\ref{alg:dense minor}.

\begin{algorithm}
\caption{\textsc{Dense Minor}}
\label{alg:dense minor}
\begin{algorithmic}[1]
\REQUIRE $\LIN{k}{R}$ instance $\I$ which is either random or is planted with respect to distribution $\Psi$ on $R$ with $\rho := \E[\chi(\Psi)]\ne0$, tradeoff parameter $k\leq l\leq n$

\STATE $\I' := \{(\bm{a},b)\in\I : \bm{a}\text{ has non-zero entries only on the first $l$ coordinates}\}$
\STATE $\texttt{adv} := \max\left\{\abs{\frac{1}{\abs{\I'}}\sum\limits_{(\bm{a},b)\in\I'}\chi\left(b-\innerproduct{\bm{a}}{\bm{x}}\right)} : \bm{x}\in R^n \text{ has non-zero entries only on the first $l$ coordinates}\right\}$
\IF{$\texttt{adv}< \abs{\rho}/2$}{
    \RETURN \texttt{RANDOM}
}\ENDIF
\RETURN \texttt{PLANTED}
\end{algorithmic}
\end{algorithm}

Essentially, the algorithm distinguishes by (naively) computing the advantage for a sub-instance $\I'$ of $\I$. This is efficient enough for us as it only needs to check for assignments $\bm{x}$ having only first $l$ coordinates non-zero. In particular, Algorithm~\ref{alg:dense minor} takes time $O(mn + \abs{R}^l\abs{\I'}k\log\abs{R})$ time since this is the time taken by steps 1 and 2. 

Firstly we state that $\I'$ defined in step 1 of the algorithm has sufficiently many constraints. This can be shown via a routine calculation.
\begin{lemma}
    Let $\I$ be a $\LIN{k}{R}$ instance with dimension $n$ and $m$ constraints. Let $S\in\binom{[n]}{l}$ be a fixed set of indices, and let $\I_S := \{(\bm{a},b)\in\I : \bm{a}\text{ has non-zero entries only on indices of $S$}\}$. Suppose $m \geq \frac{l\ln((\abs{R}-1)n)}{\gamma^2}\cdotp\left(\frac{\alpha n}{l}\right)^{k}$ for some constants $\alpha,\gamma>0$. Then $\abs{\I_S}\geq \Omega(\gamma^{-2})l\log\left(\abs{R}n\right)$ except with probability $n^{-\Omega(l)}$.
\end{lemma}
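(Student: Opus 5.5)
The plan is to write $\abs{\I_S}$ as a sum of independent indicators, compute its mean, and apply the lower-tail Chernoff bound (Theorem~\ref{thm:chernoff}). Since $\bm{a}_i\xleftarrow{\text{iid}}\text{Unif}(R^n_k)$, the support of $\bm{a}_i$ is a uniformly random $k$-subset of $[n]$. Hence the event that $\text{supp}(\bm{a}_i)\subseteq S$ has probability
\[
p := \binom{l}{k}\Big/\binom{n}{k},
\]
independently across $i$. Thus $\abs{\I_S}=\sum_{i\in[m]}X_i$ with $X_i$ i.i.d.\ Bernoulli$(p)$ and $\mu := \E\abs{\I_S} = mp$. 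This applies equally to random and planted instances, since only the left-hand sides matter.

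The main (routine) step is lower bounding $p$ by roughly $(l/(\alpha' n))^k$ for an absolute constant $\alpha'$. I would write
\[
\frac{\binom{l}{k}}{\binom{n}{k}} = \prod_{j=0}^{k-1}\frac{l-j}{n-j}.
\]
Using $\binom{l}{k}\geq (l/k)^k$ and $\binom{n}{k}\leq (en/k)^k$ gives $p\geq (l/(en))^k$. So choosing the absolute constant $\alpha\geq e$ in the hypothesis yields
\[
\mu \;\geq\; \frac{l\ln((\abs{R}-1)n)}{\gamma^2}\left(\frac{\alpha}{e}\right)^k \;\geq\; \frac{l\ln((\abs{R}-1)n)}{\gamma^2}.
\]
This is the point to be careful about: the constant $\alpha$ in the statement must be large enough (any $\alpha\ge e$ works) to absorb the $k^k$-type losses in the binomial ratio. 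Also, $\ln((\abs{R}-1)n)=\Theta(\log(\abs{R}n))$ for $\abs{R}\ge 3$. The degenerate case $\abs{R}=2$ gives $\ln n$, which still matches $\log(\abs{R}n)$ up to constants.

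Finally, I would apply the lower Chernoff bound with $\delta=1/2$:
\[
\Pr\left[\abs{\I_S}\le \mu/2\right]\le e^{-\mu/8}\le \exp\!\left(-\frac{l\ln n}{8\gamma^2}\right) = n^{-\Omega(l)},
\]
treating $\gamma$ as a constant, or noting that $\gamma<1$ only helps. On the complementary event,
\[
\abs{\I_S}\ge \mu/2 \ge \frac{l\ln((\abs{R}-1)n)}{2\gamma^2} = \Omega(\gamma^{-2})\, l\log(\abs{R}n),
\]
which is the claim. No union bound is needed because $S$ is fixed. In Algorithm~\ref{alg:dense minor}, $S=[l]$.
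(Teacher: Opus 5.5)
Your proposal is correct and follows the paper's proof essentially step for step: the support of each $\bm{a}_i$ is a uniform $k$-subset, so $p=\binom{l}{k}/\binom{n}{k}\geq (l/(en))^k$ and $\E\abs{\I_S}=mp$, and the lower-tail Chernoff bound then gives the claim. Your explicit requirement $\alpha\geq e$ makes precise what the paper leaves as ``$C^k$ for some constant $C$''; there $C=\alpha/e$, and $C\geq 1$ is exactly what the bound needs in order to hold uniformly in $k$.
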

\begin{proof}
    Since a left-hand side $\bm{a}$ is chosen uniformly and independently from $R^n_k$, the probability its support lies in $S$ is given by $p = \frac{\binom{l}{k}}{\binom{n}{k}} \geq \left(\frac{l}{en}\right)^k$. The expected number of constraints in $S$ are $mp \geq \frac{C^kl\ln((\abs{R}-1)n)}{\gamma^2}$ for some constant $C$. We can apply a Chernoff Bound (Theorem~\ref{thm:chernoff}):
    \begin{align*}
        \Pr\left[\abs{\I_S} < (1-\delta)\frac{C^kl\ln((\abs{R}-1)n)}{\gamma^2}\right] < e^{-\delta^2\frac{C^kl\ln((\abs{R}-1)n)}{2\gamma^2}} = n^{-\Omega(l)}
    \end{align*}
\end{proof}

We proceed by assuming $\I'$ has $m'\geq \frac{l\ln(\abs{R}n)}{\abs{\rho}^2}$ constraints, where $\rho := \E[\chi(\Psi)]\ne0$ for noise $\Psi$. For $\bm{x}\in R^n$, let us denote $\texttt{adv}_{\bm{x}} := \frac{1}{\abs{\I'}}\sum\limits_{(\bm{a},b)\in\I'}\chi\left(b-\innerproduct{\bm{a}}{\bm{x}}\right)$.

We first analyze the random case. If $\I$ is random, for every $(\bm{a},b)\in\I'$, $b$ is uniformly random in $R$, and so is $b-\innerproduct{\bm{a}}{\bm{x}}$ for an arbitrary $\bm{x}\in R^n$. From Observation~\ref{obs:sum of chi}, we have $\E\left[\chi(b-\innerproduct{\bm{a}}{\bm{x}})\right]=0$. So 
\begin{align}
    \forall\ \bm{x}\in R^n,\ &\E[\texttt{adv}_{\bm{x}}] = \frac{1}{m'}\sum\limits_{(\bm{a},b)\in\I'}\E\left[\chi(b-\innerproduct{\bm{a}}{\bm{x}})\right] = 0 \\
    \Rightarrow &\Pr\left[\abs{\texttt{adv}_{\bm{x}}} > \abs{\rho}/2\right] < 2e^{-m'\abs{\rho}^2/8} \label{eqn:hoeffding on random} \\
    \Rightarrow&\Pr\left[\texttt{adv} > \abs{\rho}/2\right] < 2\abs{R}^le^{-m'\abs{\rho}^2/8} \leq 2n^{-l/8} \label{eqn:adv random}
\end{align}
where the second inequality follows from applying Hoeffding inequality (Theorem~\ref{thm:Hoeffding}) over random variables $\{\chi(b-\innerproduct{\bm{a}}{\bm{x}})\}_{(\bm{a},b)\in\I'}$, and the last inequality follows by a union bound over all $\bm{x}\in R^n$ having non-zero entries only on first $l$ coordinates.

Now let us see what happens in the planted case. Suppose $\I$ is planted with respect to the secret vector $\bm{s}\in R^n$. Let $\bm{s}'\in R^n$ be defined as follows:
\begin{align*}
    s'_j = \begin{cases}
        s_j\ \text{if }j\leq l\\
        0\ \text{if }j>l
    \end{cases}
\end{align*}

Then for every $(\bm{a},b)\in\I'$, $b = \innerproduct{\bm{a}}{\bm{s}}+e = \innerproduct{\bm{a}}{\bm{s}'}+e$ such that $\E[\chi(e)]=\rho$. 

Thus
\begin{align}
    &\E[\texttt{adv}_{\bm{s}'}] = \frac{1}{m'}\sum\limits_{(\bm{a},b)\in\I'}\E\left[\chi(b-\innerproduct{\bm{a}}{\bm{s}'})\right] = \frac{1}{m'}\sum\limits_{(\bm{a},b)\in\I'}\E\left[\chi(e)\right] = \rho \\
    \Rightarrow &\Pr\left[\texttt{adv} < \abs{\rho}/2\right] \leq \Pr\left[\abs{\texttt{adv}_{\bm{s}'}} < \abs{\rho}/2\right] \\
    &\leq \Pr\left[\abs{\texttt{adv}_{\bm{s}'}-\rho} > \abs{\rho}/2\right] < 2e^{-m'\rho^2/8} \leq 2(\abs{R}n)^{-l/8}\ \ (\text{using Theorem~\ref{thm:Hoeffding}})
\end{align}

Hence we get the following result:

\begin{theorem}\label{thm:simple solving kLIN}
    Let $k,n,l$ be positive integers such that $k\leq l\leq n$. Consider a $\LIN{k}{R}$ instance $\I$ of dimension $n$ and $m$ constraints which is either random or is planted with noise distribution $\Psi$ on $R$ with $\rho := \E[\chi(\Psi)]$. Let $m \geq \frac{l\ln((\abs{R}-1)n)}{\abs{\rho}^2}\cdotp\left(\frac{\alpha n}{l}\right)^{k}$ where $\alpha > 0$ is an absolute constant. Then Algorithm~\ref{alg:dense minor} correctly distinguishes the distribution of $\I$ in $\widetilde{O}\left(\abs{\rho}^{-2}\left(\left(\frac{\alpha n}{l}\right)^{k+1} + \abs{R}^l\right)\right)$ time except with probability at most $n^{-\Omega(l)}$.
\end{theorem}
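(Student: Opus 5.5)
The proof assembles three ingredients already established above: the preceding lemma on the size of the dense minor, the Hoeffding-based analysis of the random case in \eqref{eqn:adv random}, and the analysis of the planted case via the truncated secret $\bm{s}'$. We then add a runtime count.

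\textbf{Step 1: the minor has enough equations.} Apply the preceding lemma with $S=[l]$ and $\gamma=\abs{\rho}$. The hypothesis $m \geq \frac{l\ln((\abs{R}-1)n)}{\abs{\rho}^2}(\alpha n/l)^k$ then gives
\[
m' = \abs{\I'} \geq c\,\abs{\rho}^{-2}\, l\ln(\abs{R}n)
\]
except with probability $n^{-\Omega(l)}$. This requires $\alpha$ to be chosen large enough (say $\alpha\ge e$ times a constant) so that $mp \ge C^k l\ln(\cdot)/\abs{\rho}^2$ with $C\ge 1$ is large enough to absorb the constants.

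\textbf{Step 2: correctness in each case.} Condition on the left-hand sides, so that $\I'$ is fixed, and use only the randomness of the right-hand sides.

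In the random case, Hoeffding is applied to the real and imaginary parts of $\chi(b-\innerproduct{\bm{a}}{\bm{x}})$ separately, which costs only constant factors. A union bound over the $\abs{R}^l$ candidates $\bm{x}$ then gives failure probability at most $2\abs{R}^l e^{-m'\abs{\rho}^2/8}$. With the constant $c$ from Step 1 chosen large (say $c\ge 16$), this is at most $(\abs{R}n)^{-\Omega(l)}\le n^{-\Omega(l)}$.

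In the planted case, the assignment $\bm{s}'$ agrees with $\bm{s}$ on all supports in $\I'$. So the terms $\chi(b-\innerproduct{\bm{a}}{\bm{s}'})=\chi(e)$ are i.i.d. with mean $\rho$, and Hoeffding gives $\abs{\texttt{adv}_{\bm{s}'}}\ge \abs{\rho}/2$ except with probability $2e^{-m'\rho^2/8}=n^{-\Omega(l)}$.

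A final union bound over Steps 1 and 2 gives total failure probability $n^{-\Omega(l)}$.

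\textbf{Step 3: runtime.} Step 1 of the algorithm is a single scan of the instance. We may assume it uses only the first $m_0=\lceil \frac{l\ln((\abs{R}-1)n)}{\abs{\rho}^2}(\alpha n/l)^k\rceil$ samples, since discarding extras only helps. Checking the support of each $\bm{a}$ costs $O(n)$, so this step takes $O(m_0 n)=\widetilde O(\abs{\rho}^{-2}(\alpha n/l)^{k+1})$, using $n = l\cdot (n/l)$ with the factor $l$ absorbed into the $\widetilde O$ via $l\le n$.

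Step 2 of the algorithm enumerates $\abs{R}^l$ assignments and evaluates $m'$ terms for each. Each term costs $O(k\log\abs{R})$ ring operations plus one character evaluation. Since $m'$ is concentrated at $O(\abs{\rho}^{-2} C^k l\log(\abs{R}n))$, this step takes $\widetilde O(\abs{\rho}^{-2}\abs{R}^l)$ up to the $C^k$ factor.

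\textbf{Main obstacle.} The expected delicate point is Step 3: showing that $m'$ is not too large, so that the $\abs{R}^l m'$ term is truly $\widetilde O(\abs{\rho}^{-2}\abs{R}^l)$. This needs an upper Chernoff bound on $\abs{\I'}$, and it needs the hidden $C^k\le (\alpha')^k$ factor to be absorbed, either into the polylog factor (it is $\mathrm{poly}(\abs{R}^l)$-negligible since $k\le l$) or by capping $m'$. The simplest fix is to truncate $\I'$ to its first $\lceil 16\,l\ln(\abs{R}n)/\abs{\rho}^2\rceil$ equations. Truncation does not affect the Hoeffding arguments, which only need a lower bound on $m'$, and it makes the enumeration cost exactly $\widetilde O(\abs{\rho}^{-2}\abs{R}^l)$.
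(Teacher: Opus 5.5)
Your proposal is correct and follows the paper's own argument: the lemma lower-bounding $\abs{\I'}$, then Hoeffding plus a union bound over the $\abs{R}^l$ candidate assignments in the random case, then Hoeffding at the truncated secret $\bm{s}'$ in the planted case, then a runtime count. Your insistence on $m'\geq c\,l\ln(\abs{R}n)/\abs{\rho}^2$ with $c\geq 16$ is genuinely needed, because the paper's step $2\abs{R}^l e^{-m'\abs{\rho}^2/8}\leq 2n^{-l/8}$ fails for $m'=l\ln(\abs{R}n)/\abs{\rho}^2$, where it only gives $2\abs{R}^{7l/8}n^{-l/8}$.

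Truncating $\I'$ is also the right way to keep the enumeration cost free of the $\alpha^k$ factor; the ``$\mathrm{poly}(\abs{R}^l)$-negligible'' remark would not yield a polylogarithmic factor. One bookkeeping slip remains in the scan cost, which is $m_0 n=\Theta\bigl(l^2\ln((\abs{R}-1)n)/(\alpha\abs{\rho}^2)\bigr)\cdot(\alpha n/l)^{k+1}$. The $l^2$ factor is absorbed not because $l\leq n$ but because $\log(\abs{R}^l)\geq l\ln 2$, i.e.\ into the polylogarithmic overhead coming from the $\abs{R}^l$ term.
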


\begin{remark}
    By comparing Theorem~\ref{thm:simple solving kLIN} and Corollary~\ref{solving kLIN large q}, Algorithm~\ref{alg:dense minor} outperforms the spectral method over the Kikuchi graph when $\abs{R} = \omega(n/l)$. 

    So the advantage of spectral method over the Kikuchi graph is when $\abs{R}=O(n/l)$. For instance, if $\abs{R}=n^{\delta}$ for some $0\leq\delta<1$, then Algorithm~\ref{alg:dense minor} will require roughly $(n/l)^k$ samples to solve in $\left(n^{\delta l}\right)$ time\footnote{assuming $k\ll l$, ignoring $\rho$ and logarithmic terms for simplicity}.
    On the other hand, it can be seen from Corollary~\ref{solving kLIN small q} that   
    the spectral method requires roughly $\left(\frac{(1+\delta)\cdot n^{(1+\delta)k/2}}{\delta \cdot l^{k/2}}\right)$ samples to distinguish in the same time. 
\end{remark}

\section{Distinguishing random vs planted $\ALIN{k}{R}$ instances}\label{sec:alin}
Let us also analyze the distinguishing problem for $\ALIN{k}{R}$, i.e. when the left-hand sides are at most $k$-sparse and follow the distribution in Definition~\ref{defn sparse coefficients}. As it may seem intuitively, we will get substantially better time and sample complexities in this version of the problem compared to $\LIN{k}{R}$.

Firstly, it can be seen that Algorithm~\ref{alg:dense minor} will work just as well if $\I$ is a $\ALIN{k}{R}$ instance. Hence the problem enjoys the sample-time complexity tradeoff as laid out in Theorem~\ref{thm:simple solving kLIN}. 

\begin{theorem}\label{thm:simple solving ALIN large q}
    Let $k,n,l$ be positive integers such that $k\leq l\leq n$. Consider a $\ALIN{k}{R}$ instance $\I$ of dimension $n$ and $m$ constraints which is either random or is planted with noise distribution $\Psi$ on $R$ with $\rho = \E[\chi(\Psi)]\ne 0$. Let $m \geq \frac{l\ln((\abs{R}-1)n)}{\abs{\rho}^2}\cdotp\left(\frac{\alpha n}{l}\right)^{k}$ where $\alpha > 0$ is an absolute constant. Then Algorithm~\ref{alg:dense minor} correctly distinguishes the distribution of $\I$ in $\widetilde{O}\left(\abs{\rho}^{-2}\left(\left(\frac{\alpha n}{l}\right)^{k+1} + \abs{R}^l\right)\right)$ time except with probability at most $n^{-\Omega(l)}$.
\end{theorem}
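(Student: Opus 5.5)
The plan is to rerun the analysis of Theorem~\ref{thm:simple solving kLIN} almost verbatim. The only ingredient that depended on the left-hand-side distribution being $\text{Unif}(R^n_k)$ is the count of constraints landing in the dense minor $\I'$; everything else (Hoeffding in the random case, Hoeffding at $\bm{s}'$ in the planted case, the union bound over $\abs{R}^l$ assignments, and the running time) only uses that, conditioned on $\bm{a}$, the right-hand side $b$ is either uniform or equals $\innerproduct{\bm{a}}{\bm{s}}+e$ with $e\sim\Psi$ independent. That holds verbatim for $\ALIN{k}{R}$.

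\textbf{Step 1 (size of $\I'$).} Under $D_{\coeff,n,k,R}$, first a uniform support set $S_i\in\binom{[n]}{k}$ is drawn, and then entries on $S_i$ are drawn from $R$, possibly zero. Hence if $S_i\subseteq[l]$, then $\bm{a}_i$ has non-zero entries only on the first $l$ coordinates. So $\Pr[(\bm{a}_i,b_i)\in\I']\geq \binom{l}{k}/\binom{n}{k}\geq (l/(en))^k$, which is the same bound used for $\LIN{k}{R}$; extra zeros only help. Memberships are independent across $i$, so the Chernoff bound (Theorem~\ref{thm:chernoff}) gives, for a suitable absolute $\alpha$, that $m'=\abs{\I'}\geq l\ln(\abs{R}n)/\abs{\rho}^2$ (up to a constant) except with probability $n^{-\Omega(l)}$. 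One small point: constraints with $\bm{a}=\bm{0}$ may now appear in $\I'$. They are harmless. In the random case $b$ is uniform, so $\E[\chi(b)]=0$. In the planted case $b=e$, so $\E[\chi(b-\innerproduct{\bm{0}}{\bm{s}'})]=\rho$.

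\textbf{Step 2 (random and planted cases).} Conditioned on the left-hand sides in $\I'$ (and hence on $m'$), the terms $\chi(b-\innerproduct{\bm{a}}{\bm{x}})$ are independent and bounded.
\begin{itemize}
\item \emph{Random case.} Each term has mean $0$ by Observation~\ref{obs:sum of chi}. Apply Hoeffding (Theorem~\ref{thm:Hoeffding}) to the real and imaginary parts, then a union bound over the $\abs{R}^l$ candidates $\bm{x}$, exactly as in \eqref{eqn:adv random}. This gives $\texttt{adv}<\abs{\rho}/2$ except with probability $n^{-\Omega(l)}$.
\item \emph{Planted case.} Take the truncation $\bm{s}'$ of $\bm{s}$ to the first $l$ coordinates. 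Every $\bm{a}$ in $\I'$ is supported in $[l]$, so $\innerproduct{\bm{a}}{\bm{s}}=\innerproduct{\bm{a}}{\bm{s}'}$. Hence $\E[\texttt{adv}_{\bm{s}'}]=\rho$, and Hoeffding gives $\texttt{adv}\geq\abs{\texttt{adv}_{\bm{s}'}}\geq\abs{\rho}/2$ except with probability $n^{-\Omega(l)}$.
\end{itemize}
A final union bound with Step 1 gives overall correctness.

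\textbf{Step 3 (runtime).} Step 1 of Algorithm~\ref{alg:dense minor} is a linear scan. Step 2 costs $\abs{R}^l\cdot m'\cdot k\log\abs{R}$. To reach the stated bound, I would feed the algorithm only $m=\Theta(\abs{\rho}^{-2}l\ln(\abs{R}n)(\alpha n/l)^k)$ samples, discarding any extras. The scan then costs $\widetilde{O}(\abs{\rho}^{-2}(\alpha n/l)^{k+1})$, after absorbing the factor of $n$ for reading each sample as in Theorem~\ref{thm:simple solving kLIN}. By Chernoff, $m'=\widetilde{O}(\abs{\rho}^{-2})$ with high probability, so Step 2 costs $\widetilde{O}(\abs{\rho}^{-2}\abs{R}^l)$.

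\textbf{Main obstacle.} I expect Step 1 to be the only step needing care. One must check that the coupling through the support set $S_i$ really gives the same lower bound on the inclusion probability. One must also handle the possibly zero (or sparser) rows without breaking the independence needed for Hoeffding. Both checks are settled by conditioning on the $S_i$ and on the left-hand sides.
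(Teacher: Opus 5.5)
Your route is the paper's own. The paper proves this theorem in one line, by observing that the analysis of Algorithm~\ref{alg:dense minor} behind Theorem~\ref{thm:simple solving kLIN} carries over. Your Steps~1--2 correctly supply the correctness half of that claim: lower-bounding the inclusion probability through the support set $S_i$, conditioning on the left-hand sides, and checking that rows with $\bm{a}=\bm{0}$ are harmless.

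\textbf{The gap is in your Step~3}, where you assert that $m'=\widetilde{O}(\abs{\rho}^{-2})$ with high probability. For $\ALIN{k}{R}$ this is false. The same zero entries that ``only help'' the lower bound on $m'$ also inflate it. A row lands in the minor with probability $\frac{1}{\binom{n}{k}}\sum_{t=0}^{k}\binom{l}{k-t}\binom{n-l}{t}\abs{R}^{-t}\geq \abs{R}^{-k}\binom{n-l}{k}/\binom{n}{k}$. When $kl\ll n$ this is essentially $\abs{R}^{-k}$ rather than $(l/n)^k$. So if $\abs{R}\ll n/l$, even the minimal admissible $m$ gives $m'\approx \abs{\rho}^{-2}\,l\ln((\abs{R}-1)n)\,(\alpha n/(l\abs{R}))^k$.

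Concretely, take $\abs{R}=2$, $k=1$, $l=2\log_2 n$, and $m$ exactly at the threshold, $m=\alpha\abs{\rho}^{-2}n\ln n$. Each row lies in $\I'$ with probability at least $1/2$, so $\abs{\I'}=\Theta(\abs{\rho}^{-2}n\ln n)$. Step~2 of Algorithm~\ref{alg:dense minor} then costs $2^l\abs{\I'}=\Theta(\abs{\rho}^{-2}n^3\ln n)$, while the claimed bound is $\widetilde{O}(\abs{\rho}^{-2}n^2)$. Discarding surplus input samples cannot help, since $m$ is already minimal.

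\textbf{Repair.} Truncate $\I'$ rather than $\I$: keep only the first $\Theta(\abs{\rho}^{-2}l\ln(\abs{R}n))$ rows of $\I'$. Which rows are kept depends on the left-hand sides alone. So, conditioned on them, the retained right-hand sides are still independent with the uniform (resp.\ planted) law. Your Hoeffding and union-bound arguments then go through verbatim, and step~2 costs $\widetilde{O}(\abs{\rho}^{-2}\abs{R}^l)$ no matter how many rows fall in the minor. The paper's one-line argument does not address this either. Algorithm~\ref{alg:dense minor} as literally written needs this modification to meet the stated running time in the $\ALIN{k}{R}$ setting.
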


We now describe another strategy. Recall from Observation~\ref{obs:pt} that the probability of sampling $\bm{0}\in R^n$ as a left-hand side is $p_0 = \abs{R}^{-k}$. What if we have enough samples $m$ in $\I$ so that there are $m'\geq l\ln n/\abs{\rho}^2$ constraints in $\I$ of the form $(\bm{0},b)$? Let us call this collection $\I'=\{(\bm{0},b)\}\subseteq\I$. Similar to Algorithm~\ref{alg:dense minor}, we can compute
\begin{align*}
    \texttt{adv} = \frac{1}{m'}\sum_{(\bm{0},b)\in\I'}\chi(b)
\end{align*}
If $\I$ is random, we have 
\begin{align}
    &\E[\texttt{adv}] = \frac{1}{m'}\sum\limits_{(\bm{0},b)\in\I'}\E\left[\chi(b)\right] = 0 \\
    \Rightarrow&\Pr\left[\abs{\texttt{adv}} > \abs{\rho}/2\right] < 2e^{-m'\abs{\rho}^2/8} \leq 2n^{-l/8}\ \ (\text{using Theorem~\ref{thm:Hoeffding}}) \label{eqn:adv random alin}
\end{align}
On the other hand, if $\I$ is planted, for $(\bm{0},b)\in\I'$ we have $b = e\xleftarrow{\text{iid}}\Psi$ such that $\E[\chi(e)]=\rho$.
\begin{align}
    &\E[\texttt{adv}] = \frac{1}{m'}\sum\limits_{(\bm{0},b)\in\I'}\E\left[\chi(b)\right] = \frac{1}{m'}\sum\limits_{(\bm{0},b)\in\I'}\E\left[\chi(e)\right] = \rho \\
    \Rightarrow &\Pr\left[\abs{\texttt{adv}} < \abs{\rho}/2\right] \leq \Pr\left[\abs{\texttt{adv}-\rho} > \abs{\rho}/2\right] < 2e^{-m'\rho^2/8} \leq 2n^{-l/8}\ \ (\text{using Theorem~\ref{thm:Hoeffding}})
\end{align}
Hence, like the dense minor idea, we can distinguish on the basis of whether $\abs{\texttt{adv}}\leq \abs{\rho}/2$. 

We know that the probability of a left-hand side being $\bm{0}$ is $p_0 = \abs{R}^{-k}$, so a standard application of Chernoff bound (Theorem~\ref{thm:chernoff}) gives that we need $m\geq 2\abs{R}^k\abs{\rho}^{-2}(l\ln n)$ samples to obtain the required number of $(\bm{0},b)$ samples ($m'\geq\abs{\rho}^{-2}l\ln n$) with probability at least $1-n^{-\Omega(l)}$. 

Finally, the time complexity of this strategy is $\widetilde{O}\left(\abs{\rho}^{-2}\abs{R}^k\right)$ as all we need to do is collect all samples with left-hand sides $\bm{0}$ and compute the sum of the corresponding (real parts of) right-hand sides. 

\begin{theorem}\label{thm:simple solving ALIN small q}
    Let $k,n,l$ be positive integers. Consider a $\ALIN{k}{R}$ instance $\I$ of dimension $n$ and $m$ constraints which is either random or is planted with noise distribution $\Psi$ on $R$ with $\rho := \E[\chi(\Psi)]\ne 0$. Let $m \geq 2\abs{R}^k\abs{\rho}^{-2}(l\ln n)$ where $\alpha > 0$ is an absolute constant. Then there exists an algorithm that correctly distinguishes the distribution of $\I$ in $\widetilde{O}\left(\abs{\rho}^{-2}\abs{R}^k\right)$ time except with probability at most $n^{-\Omega(l)}$.
\end{theorem}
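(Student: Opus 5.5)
The plan is to formalize the zero-left-hand-side strategy sketched just before the statement, in three steps: count the useful samples, analyze the statistic in both cases, and account for the running time.

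\textbf{Step 1: collecting zero samples.} The algorithm scans $\I$ and keeps $\I' := \{(\bm{a},b)\in\I : \bm{a}=\bm{0}\}$. It then computes $\texttt{adv} := \frac{1}{\abs{\I'}}\sum_{(\bm{0},b)\in\I'}\chi(b)$. It outputs \texttt{PLANTED} iff $\abs{\texttt{adv}}\geq\abs{\rho}/2$.

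The left-hand sides are i.i.d. from $D_{\coeff,n,k,R}$. By Observation~\ref{obs:pt}, each sample lies in $\I'$ independently with probability $p_0=\abs{R}^{-k}$. Hence $\E\abs{\I'} = mp_0 \geq 2\abs{\rho}^{-2}l\ln n$. The lower-tail Chernoff bound (Theorem~\ref{thm:chernoff}) with $\delta=1/2$ gives $\abs{\I'}\geq m' := \abs{\rho}^{-2}l\ln n$ except with probability $e^{-mp_0/8}\leq n^{-l/(4\abs{\rho}^2)} = n^{-\Omega(l)}$, using $\abs{\rho}\leq 1$.

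A key point is that the right-hand sides of the samples in $\I'$ are independent of the selection event. Selection depends only on the $\bm{a}_i$, and the $e_i$ (resp. $b_i$) are drawn independently of them. So we may condition on $\I'$ with $\abs{\I'}\geq m'$. Conditionally, the values $b$ in $\I'$ are i.i.d.: uniform on $R$ in the random case, and distributed as $\Psi$ in the planted case, since $b=\innerproduct{\bm{0}}{\bm{s}}+e=e$.

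\textbf{Step 2: concentration.} Theorem~\ref{thm:Hoeffding} is stated for real variables, so I apply it separately to the real and imaginary parts of $\chi(b)$. Each part lies in $[-1,1]$. Deviation $\abs{\rho}/(2\sqrt2)$ in each coordinate controls the complex deviation $\abs{\rho}/2$, at the cost of a constant factor in the exponent.

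In the random case, Observation~\ref{obs:sum of chi} gives $\E[\chi(b)]=0$. Hence $\Pr[\abs{\texttt{adv}}\geq\abs{\rho}/2]\leq 4e^{-c\,m'\abs{\rho}^2} = n^{-\Omega(l)}$.

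In the planted case, $\E[\chi(b)]=\rho$. Then $\abs{\texttt{adv}}<\abs{\rho}/2$ forces $\abs{\texttt{adv}-\rho}>\abs{\rho}/2$, which has the same probability bound $n^{-\Omega(l)}$.

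A union bound with Step 1 gives total failure probability $n^{-\Omega(l)}$. Note that $\rho$ need not be real here; the argument only needs $\rho\neq0$.

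\textbf{Step 3: running time.} The algorithm first computes $\psi$ in $\widetilde O(\abs{R})$ time. It then makes one pass to test $\bm{a}=\bm{0}$ and one pass to sum $\chi(b)$ over $\I'$. This costs $\widetilde O(\abs{\I'})$ arithmetic, and $\abs{\I'}=O(mp_0)$ with high probability by the upper Chernoff tail.

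Taking $m$ at the threshold $m=\Theta(\abs{R}^k\abs{\rho}^{-2}l\ln n)$, or simply truncating the input to that many samples, the total cost is $\widetilde O(\abs{\rho}^{-2}\abs{R}^k)$. The factors $l$ and $\ln n$ are absorbed into the $\widetilde O$, treating them as polylogarithmic as in the paper's convention.

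\textbf{Main obstacle.} The main obstacle is the bookkeeping rather than any real difficulty. Two things need care: arguing the conditional independence of the $b$'s given the selection of $\I'$, and handling the complex-valued Hoeffding step. I would address the latter by splitting into real and imaginary parts as above.
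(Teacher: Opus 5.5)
Your proposal is correct and follows essentially the same route as the paper. You isolate the samples with $\bm{a}=\bm{0}$ (Chernoff with $p_0=\abs{R}^{-k}$ guarantees at least $\abs{\rho}^{-2}l\ln n$ of them), apply Hoeffding to show the empirical mean of $\chi(b)$ concentrates near $0$ in the random case and near $\rho$ in the planted case, threshold at $\abs{\rho}/2$, and account for the running time at the same level of rigor as the paper (absorbing $l\ln n$ into the $\widetilde{O}$); your explicit conditioning on the selection of $\I'$ and your split of $\chi(b)$ into real and imaginary parts are welcome refinements, since the paper applies its real-valued Hoeffding inequality directly to complex quantities.
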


Once again, we can choose the best algorithm depending on whether $\abs{R} = \omega(n/l)$ (Theorem~\ref{thm:simple solving ALIN large q}) or $\abs{R} = O(n/l)$ (Theorem~\ref{thm:simple solving ALIN small q}).

\section{Discussion}\label{sec:discussion}

\subsection{Comparison with \cite{kocurek2025spectral}}\label{subsec:compare with KM25}

Independent of our work, Kocurek and Manohar \cite{kocurek2025spectral} have studied the problem of refuting random and semirandom instances of $\LIN{k}{R}$\footnote{a semirandom instance is when the left-hand sides of a $\LIN{k}{R}$ instance can be arbitrarily chosen, but the right-hand sides are uniformly and independently chosen}. For this task, they define Kikuchi graphs for finite fields and abelian groups that are similar to our construction. Their sample-time tradeoffs are also similar to ours. However, our objective is different from the refutation task; ours is the task of distinguishing a random vs a planted instance. The reason why their methods cannot be extended to our problem is that the statistic they use, the {\em value} of an instance (maximum fraction of satisfiable constraints) doesn't allow us to distinguish the random instance from the planted instance when the noise in the planted distribution is too high. Our distinguishing statistic, the \emph{advantage} (Definition~\ref{def advantage}), doesn't suffer from this problem. As we have seen in Subsection~\ref{subsec:distinguish planted vs random Kikuchi}, the advantage can help us distinguish the two cases for any noise distribution satisfying a mild condition.

Moreover, they construct separate Kikuchi constructions depending on the sparsity $k$ being even or odd, and also depending on the ring being a field or non-field, with significantly more involved construction in the odd sparsity case. 

Our Kikuchi construction is able to mitigate both these issues; it is more amenable to the distinguishing task, and is a unified and simplified construction for all rings $R$ and both sparsity parities when $\abs{R}>2$. Further, we are able to construct a simpler distinguisher which has a significantly lower runtime than that of their simple refutation algorithm when the size of the ring $R$ is large. 

Let us elaborate on these aspects: 

\subsubsection{Distinguishing random vs planted instances for any planted noise} In the task of refutation of a $\LIN{k}{R}$ instance $\I$, the goal is to output a certificate that a given instance is highly unsatisfiable, meaning no assignment can satisfy more than a certain fraction of the constraints. To solve the task of refuting a random/semirandom $\LIN{k}{R}$ instance $\I$, the algorithm of \cite{kocurek2025spectral} returns a certificate $\algval(\I)\in[0,1]$ with the following guarantees: 
    \begin{enumerate}
        \item $\algval(\I)\geq\val(\I)$, where $\val(\I)$ is the maximum fraction of satisfied constraints by any assignment.
        \item If $\I$ is random, then $\algval(\I)\leq \frac{1}{\abs{R}}+\varepsilon$ for an arbitrary small $\varepsilon>0$ with high probability, which is close to the expected value $\left(\E\left[\texttt{val}(\I)\right] = \frac{1}{\abs{R}}\right)$. 
        \item The algorithm requires $m\geq \varepsilon^{-2}O(n)\left((\abs{R}-1)n/l\right)^{k/2-1}$ samples for the refutation task.
    \end{enumerate}

    To utilize their algorithm for the distinguishing task, a natural thought is to utilize Item 1 of above for the planted case, hoping that $\val(\I)$ will be high for the planted case. However, we elaborate below why this will not hold for noise distributions with low probability ($\sim 1/q$) of zero error, and so for those cases we won't be able to use their construction for the distinguishing task. We also elaborate how our construction mitigates this.
    
    Let us see what the quantity $\val(\I)$ looks like if $\I$ is a planted instance with respect to a secret assignment $\bm{s}\in R^n$ and noise distribution $\Psi$ on $R$. For any sample $(\bm{a},b)\in\I$, we have $b=\innerproduct{\bm{a}}{\bm{s}}+e$ for $e\xleftarrow{\text{iid}}\Psi$, so the probability that $\bm{s}$ satisfies the constraint $(\bm{a},b)$ is $\Psi(0)$. This in turn means that the expected number of satisfied constraints by $\bm{s}$ is $\Psi(0)$, and $\E[\val(\I)]\geq\Psi(0)$. 

    This will actually suffice to solve (distinguish) the sparse LPN problem over the ring $\Z_q$, as $\Psi(0) = 1-\mu$ (see Definition~\ref{defn sparse LPN}) which can be assumed to be non-negligibly larger than the expected value in the random case $1/q$ (say $\mu\geq\frac{q-1}{q}-\frac{1}{n^d}$ for some constant $d>0$). Due to this, the $\varepsilon$ parameter will contribute a factor of $n^{2d}$ in the sample complexity cost. Note that this matches our result for sparse LPN (Theorem~\ref{solving sparse LPN}).

    For sparse LWE, $\Psi = \Psi_{\Z_q,r}$ for which $\Psi_{\Z_q,r}(0) = O(1/r)$ when the width $r$ is super constant (from Observation~\ref{obs:psi0 discrete gaussian}). For instance, if $r=q/10$, we must have 
    $$\algval(\I)\leq 1/q+\varepsilon <  \Psi(0) = 10/q\ \Rightarrow \varepsilon < 9/q$$ In this case, $\varepsilon$ contributes a factor of $q^2$ to the sample complexity, making the total contribution of the modulus $q^{k/2+1}$. In contrast, our result (Theorem~\ref{solving sparse LWE}) has a factor of $q^{k/2}$ in the sample complexity.

    % \amitabha{The two paragraphs below should be moved up right after the definition of $\algval(\I)$. Instead of spending time going over the details first, we should crisply point out the difference first and then go for the details.}\shashwat{please check updated Section 8.1 and 8.1.1}

    But what if the noise distribution is slightly perturbed, so that the mass at $0\in\Z_q$ is $1/q$? Indeed, we can have a noise $\Psi'$ resembling the discrete gaussian, except that $\Psi'(0)=1/q$, and $\Psi'(j)=\Psi_{\Z_q,r}(j)+\Psi_{\Z_q,r}(0)/(q-1)-1/(q(q-1))$ for $j\in\Z_q\setminus\{0\}$. In this case, distinguishing on the basis of $\val(\I)$ is not possible. 

    On the other hand, our distinguisher examines a different quantity $\texttt{adv}(\I)$ (recall Definition~\ref{def advantage}) to separate the random and planted instances. In the random case, this quantity has expectation $0$. For the planted case we have $\E\left[\texttt{adv}(\I)\right] \geq \E\left[\text{adv}_{\I}(\bm{s})\right] = \E[\text{Re}\left(\chi(\Psi')\right)] =: \rho'$. To estimate $\rho'$, firstly note that $\rho :=  \E[\chi(\Psi_{\Z_q,r})] \geq e^{-\pi r^2/q^2} \geq e^{-\pi/4}$ when $r<q/2$ (Observation~\ref{obs: rho for discrete gaussian}). Then
    \begin{align*}
        \rho' := \E[\text{Re}\left(\chi(\Psi')\right)] &= \sum_{j\in\Z_q}e^{2\pi\iota j/q}\Psi'(j)\ \ \ (\text{note that $\Psi'(j)=\Psi'(-j)$}) \\
        &= 1/q + \sum_{j\in\Z_q\setminus\{0\}}e^{2\pi\iota j/q}\left(\Psi_{\Z_q,r}(j)+\Psi_{\Z_q,r}(0)/(q-1)-1/(q(q-1))\right) \\
        &= \rho - \left(\Psi_{\Z_q,r}(0)-1/q\right)\left(q/(q-1)\right) \\
        &\geq \rho - 2/r\ = \Omega(1) 
    \end{align*}
    Hence, in this case, $\rho'$ is still a constant, and the same sample complexity as for sparse LWE suffices for the distinguishing task. In fact, our method is \emph{oblivious} to the noise distribution $\Psi$, as long as it satisfies a mild condition ($\E[\text{Re}\left(\chi(\Psi)\right)]\in(0,1]$). 

    In Observation 3.6 of \cite{kocurek2025spectral}, Kocurek and Manohar relate the value of the instance with a quadratic form of their Kikuchi construction. One key aspect of their Kikuchi construction (for their case of even $k$, $R$ finite field) is that for a constraint $(\bm{a},b)\in\I$, they consider all edges $(\bm{u},\bm{v})\in R^n_l$ such that $\bm{v}-\bm{u} = \beta\bm{a}$ for all $\beta\in R\setminus\{0\}$. This is equivalent to adding constraints $(\beta\bm{a},\beta b)\in\I$ for all $\beta\in R\setminus\{0\}$ for each original constraints. 
    Informally, equation \eqref{eqn:informal contri of constraint to quadtratic form} below captures the contribution of this constraint to the quadratic form for some assignment $\bm{x}\in R^n$:
    \begin{align}
        \sum_{\beta\in R\setminus\{0\}}\chi\left(\beta b-\beta\innerproduct{\bm{a}}{\bm{x}}\right) = \sum_{\beta\in R\setminus\{0\}}\chi\left(\beta\left(b-\innerproduct{\bm{a}}{\bm{x}}\right)\right) = \begin{cases}
            (\abs{R}-1)\ \text{ if }b=\innerproduct{\bm{a}}{\bm{x}} \\
            -1\text{ if }b\ne\innerproduct{\bm{a}}{\bm{x}}
        \end{cases} \label{eqn:informal contri of constraint to quadtratic form}
    \end{align}

    So the extent of the noise (how far is $b$ from $\innerproduct{\bm{a}}{\bm{x}}$) is not captured in the quadratic form.

    On the other hand, we only keep edges of the form $\bm{v}-\bm{u} = \pm\bm{a}$. By this, we will achieve the appropriate contribution of this constraint in the advantage:

    \begin{align}
        \sum_{\beta\in \{\pm1\}}\chi\left(\beta b-\beta\innerproduct{\bm{a}}{\bm{x}}\right) = 2\text{Re}\left(\chi\left(b-\innerproduct{\bm{a}}{\bm{x}}\right)\right)
    \end{align}
    
    We show in Proposition~\ref{prop:justifying Kikuchi construction} how the quadratic form of our Kikuchi construction relates to our notion of advantage. 

\subsubsection{A unified construction} \cite{kocurek2025spectral} produce separate Kikuchi constructions and analyses depending on the sparsity parameter $k$ being odd or even, and also depending on the ring $R$ being a field or $\Z_{q_1}\times\ldots\times\Z_{q_t}$. 

\begin{itemize}
    \item Given a left-hand side $\bm{a}\in R^n_k$, their construction for even $k$ only considers edges $(\bm{u},\bm{v})\in R^n_l\times R^n_l$ wherein the symmetric difference of supports of $\bm{u}$ and $\bm{v}$ equal to the support of $\bm{a}$ (see Remark~\ref{rem:comparing construction with KM25}). Such pairs will not exist when $k$ is odd. So for odd $k$, they revert to a much more involved setup which includes partitioning $\I$ into multiple sub-instances, building a Kikuchi graph for each sub-instance (which differs from the even case), and performing a further processing step of deleting edges from high-degree vertices before taking the spectral norm. 

    In our construction, we do not restrict our edges to satisfy the above restriction for even $k$. This means that when $\abs{R}>2$, it is possible that for $\bm{v}-\bm{u}=\bm{a}$, we have $u_j,v_j,a_j\ne 0$ for some $j\in[n]$ (condition 2(c) of Construction~\ref{kLIN to 2LIN alternate}). This also allows valid pairs $(\bm{u},\bm{v})$ for $\bm{a}\in R^n_k$ when $k$ is odd and $\abs{R}>2$. 

    \item When $R$ is the abelian group $G = \Z_{q_1}\times\ldots\times\Z_{q_t}$, they reduce the instance to a quotient group $G/H$ which is either small, is of prime order (i.e. a field) or does not have small subgroups. This helps them to save a factor of $\abs{R}$ in their sample complexity dependence. Here too, their Kikuchi matrix (for even $k$ too) differs from the field case.

    For reasons stated above, our construction does not consider edges of the form $\bm{v}-\bm{u}=\beta\bm{a}$ for $\beta\in R\setminus\{0\}\setminus\{\pm1\}$, which helped them (when $R$ was a field) save a factor of $\abs{R}$ in their sample complexity. Thus using our construction for non-fields as well does not increase dependence on $\abs{R}$ in our sample complexity.
\end{itemize}

We also remark that our analysis works when $R$ is any ring apart from a finite field or $\Z_{q_1}\times\ldots\times\Z_{q_t}$, i.e. any arbitrary multiplication operation on the ring is allowed.

\subsubsection{Other Differences}

\begin{itemize}

    \item They use the Trace Moment method to bound the spectral norm of the Kikuchi matrix, as they deal with the semirandom instance, where the left-hand sides are arbitrary. Since we deal with instances where the left-hand sides are also random (in both random and planted instances), we use the Matrix Bernstein inequality whose analysis is simpler.

    \item They give a simpler algorithm for the refutation task which outperforms the Kikuchi method when $\abs{R}$ is large, specifically $\abs{R}=\omega(n/l)$. This algorithm takes same time as their Kikuchi method which is $(\abs{R}n)^{O(l)}$. This algorithm involves checking all assignments in $R^l$ for each $\binom{n}{l}$ $l$-sized subsets of $[n]$.
    
    We too give a simple algorithm in the large $\abs{R}$ case, which takes time $\widetilde{O}\left(\rho^{-2}\left(\left(\frac{\alpha n}{l}\right)^{k+1} + \abs{R}^l\right)\right)$. We are able to achieve this significant advantage in time, as processing a fixed $l$-sized subset suffices for the task of distinguishing a random vs planted instance.

\end{itemize}

\subsection{Comparison with \cite{sparse_LWE}}

Jain, Lin and Saha in \cite{sparse_LWE} introduced the sparse LWE problem over the ring $\Z_p$ for a prime $p$. There is an ambiguity in their definition, as they mention that each sample has \emph{exactly} $k$-non-zero entries, yet each non-zero entry is sampled uniformly from $\Z_p$ rather than $\Z_p^*$. In this work, we have defined and analyzed both versions ($\LIN{k}{R}$ and $\ALIN{k}{R}$). 

They attempt to claim hardness of sparse LWE, both by a reduction from standard LWE, and via a conjecture. We now discuss the implications of our algorithms on these hardness assumptions. 

\subsubsection{On hardness of \(k\mbox{-}\text{ALWE}_{n,m,p,r}\) assuming hardness of LWE}

Over the ring $R=\Z_p$, they show a reduction from standard LWE over dimension $k$ to $k$-sparse LWE over dimension $n\geq k$ (Theorem 5.1 of \cite{sparse_LWE}) for the at most $k$-non-zero entries version, i.e. \(k\mbox{-}\text{ALWE}_{n,m,p,r}\) as per our notation in Definition~\ref{defn sparse LWE}. The reduction preserves the number of samples $m$. It is a standard assumption in cryptography that solving standard LWE on dimension $k$ takes time $2^{\Omega(k)}$ time given arbitrary many samples (even as many as $p^{O(k)}$). Based on this assumption, their result establishes that with these many samples, solving \(k\mbox{-}\text{ALWE}_{n,m,p,r}\) also takes $2^{\Omega(k)}$ time. 

On the other hand, in Section~\ref{sec:alin} we show that solving \(k\mbox{-}\text{ALWE}_{n,m,p,r}\) is achievable in $\widetilde{O}\left(p^k\right)$ time given enough ($m=\widetilde{O}\left(p^k\right)$) samples. So when the number of samples is large and the modulus $p$ is small, our result is towards matching the hardness assumption for \(k\mbox{-}\text{ALWE}_{n,m,p,r}\) from their reduction.

\subsubsection{On conjectured hardness of \(k\mbox{-}\text{LWE}_{n,m,p,r}\), \(k\mbox{-}\text{ALWE}_{n,m,p,r}\)}

As an interesting case, let us analyze the sample complexity required by our dense minor algorithm (see Theorem~\ref{thm:simple solving kLIN}) if we allow a time of $2^{\delta n}$ for an arbitrarily small $\delta>0$. We have roughly $m\geq \left(\frac{n}{l}\right)^k$ samples required for a time of roughly $(n/l)^k + p^l$. We must have
\begin{align*}
    &(n/l)^k + p^l \approx 2^{\delta n} \ \ \ \Rightarrow l \approx \frac{\delta n}{\log p} \\
    \Rightarrow &m= O\left(\left(\log p\right)^{k}\right) \text{ samples are sufficient}
\end{align*}
In particular, if $k = O(\log n/\log\log p)$ then polynomially many samples suffice. If $k = O(\log n)$, then we require $(\log p)^{O(\log n)}$ samples which is polynomial for constant modulus $p$, and slightly more than polynomial for $p = O\left(\text{exp}(n)\right)$.

Conjecture 7.1 in \cite{sparse_LWE} suggests that \(k\mbox{-}\text{LWE}_{n,m,p,r}\) is as hard as standard LWE over $\Theta(n)$ dimension, $\text{poly}(n)$ samples and the same modulus and width parameters, when the sparsity $k = \Omega(\log n)$ and $m = \text{poly}(n)$. Assuming hardness of standard LWE, the algorithm from the Kikuchi method nearly closes the gap of this conjecture, either when the sparsity is slightly less, or the polynomial samples requirement is slightly relaxed. 

\printbibliography

\appendix
    \section{Facts on Discrete Gaussian}\label{app:Kikuchi graph}

    Recall the discrete Gaussian distribution $\Psi_{\Z_q,r}$ on $R=\Z_q$ with width parameter $r$ (Definition~\ref{discrete gaussian on Zq}). We will state a couple of facts regarding this distribution which we use. 

    Recall the Poisson Summation Formula for $f:\R\to\mathbb{R}$ where $\hat{f}$ denotes its Fourier transform:
    $$\sum_{x\in\mathbb{Z}} f(x) = \sum_{y\in\mathbb{Z}} \hat{f}(y)$$ 

    \subsection{Estimating $\Psi_{\Z_q,r}(0)$}\label{app:psi0 for DG}
    We have 
    $$\Psi_{\Z_q,r}(0)\ = \frac{\sum\limits_{x\in q\Z}e^{-\pi x^2/r^2}}{\sum\limits_{x\in \Z}e^{-\pi x^2/r^2}}\ = \frac{\sum\limits_{x\in \Z}e^{-\pi (q/r)^2x^2}}{r\sum\limits_{x\in \Z}e^{-\pi r^2x^2}}$$
    where the last equality holds by applying the Poisson summation formula on the denominator. When $1<r<q$, both summations in the numerator and denominator can be bounded by constants. Thus $\Psi_{\Z_q,r}(0) = \Theta(1/r)$.
    \subsection{Estimating $\E[\chi(\Psi_{\Z_q,r})]$}\label{app:rho for DG}
    For $e\leftarrow\Psi_{\Z_q,r}$, we then have $\rho = \E[\chi(e)]= \E[\omega_q(e)]$. This means that $r\in[0,\infty]$ is the (unique) solution to the equation
\begin{align}
    \rho &= \E[\eta_i] = \frac{\sum_{j\in\Z_q}\omega_q^j\sum_{x\in q\Z+j}{e^{-\pi x^2/r^2}}}{\sum_{x\in\Z}{e^{-\pi x^2/r^2}}} \\
    &= \frac{\sum_{x\in\Z}{\omega_q^xe^{-\pi x^2/r^2}}}{\sum_{x\in\Z}{e^{-\pi x^2/r^2}}}  = \frac{\sum_{x\in\Z}{\cos\left(\frac{2\pi x}{q}\right)e^{-\pi x^2/r^2}}}{\sum_{x\in\Z}{e^{-\pi x^2/r^2}}}\label{rho equation for Gaussian distribution}
\end{align}

    Let us obtain a lower bound for \eqref{rho equation for Gaussian distribution}. 
    For $g(x) = \cos\left(\frac{2\pi x}{q}\right)e^{-\pi x^2/r^2}$,\\ $\hat{g}(y) = \frac{r}{2} \left( e^{-\pi r^2 \left(y - \frac{1}{q}\right)^2} + e^{-\pi r^2 \left(y + \frac{1}{q}\right)^2} \right)$ and for $h(x) = e^{-\pi x^2/r^2}$, $\hat{h}(y) = re^{-\pi r^2y^2}$. Using these in \eqref{rho equation for Gaussian distribution}, we obtain
    \begin{align*}
        \rho = \frac{\sum_{y\in\mathbb{Z}} \hat{g}(y)}{\sum_{y\in\mathbb{Z}} \hat{h}(y)} = \frac{e^{-\pi r^2/q^2}\sum_{y\in\mathbb{Z}}\cosh{\left(\frac{2\pi r^2 y}{q}\right)}e^{-\pi r^2y^2}}{\sum_{y\in\mathbb{Z}} e^{-\pi r^2 y^2}}
        \geq e^{-\pi r^2/q^2}     
    \end{align*}

In our results, we see that the sample complexity required for both distinguishing methods using Kikuchi graphs depends inverse polynomially on $\rho$. Since the width $r$ for the LWE distribution is chosen to be less than $q$, this dependency will only contribute a constant term.

\section{A quantum speedup}\label{app:quantum speedup}

\textbf{Some Notation.} $\bm{x}^\dagger$ denotes the conjugate transpose of a vector $\bm{x}$. $\ket{\bm{x}}$ denotes the unit vector in the direction of $\bm{x}$ and $\bra{\bm{x}} := \ket{\bm{x}}^\dagger$. For a matrix $A$, $\norm{A}_{\max} := \max_{i,j}\abs{A_{ij}}$.\\

In essence, to solve decisional $\LIN{k}{R}$, the Kikuchi method performs estimation of the spectral norm of a sparse Hermitian matrix in a space of dimension $N = \binom{n}{l}(\abs{R}-1)^l$. For any known classical method, such as the power iteration, eigenvalue estimation requires matrix multiplications, so it is unlikely for the time complexity to go below $\widetilde{\Omega}(N)$. However, with quantum algorithmic techniques, we can do better. 

A Hamiltonian $H$ on $t$ qubits is a Hermitian matrix of dimension $2^t$. $H$ is called \emph{$s$-sparse} if each row and each column of $H$ has at most $s$ non zero entries. Throughout the discussion, we shall assume the normalization constraint that $\|H\|_{\max} \leq 1$. Approximating the spectral norm of a sparse Hamiltonian is well studied in quantum computing. Indeed, for a local Hamiltonian (which is a special case of sparse hamiltonian), this problem is QMA complete. However, if we are provided with a \emph{guiding} state $\ket{\gamma}$, i.e. a state that has good overlap with a unit eigenvector $\ket{v}$ corresponding to the maximum absolute eigenvalue, i.e. $(\braket{\gamma |v})^2$ is large, the problem can be easier. If we are provided with a guiding state having $1/\poly(t)$ overlap with the leading eigenvector, then Gharibian and Le Gall \cite{guided_sparse_hamiltonian_BQP_complete} show that the problem is BQP-complete. 

Schmidhuber et al. \cite{quartic_speedup} provide a quantum algorithm to solve the Guided Sparse Hamiltonian problem by combining standard quantum algorithmic techniques of Hamiltonian simulation, phase estimation and amplitude estimation. We define the problem and state their result below. For details, the reader may refer to their paper.
\subsection{Guided Sparse Hamiltonian Problem}

The input Hamiltonian $H$ is assumed to be given in the form of quantum oracle access by providing oracles $O_H$ and $O_F$. $O_H$ is called the \emph{adjacency matrix} oracle, which on input $(i,j)$  (for $i,j\in[2^t]$) gives the entry $H_{ij}$ of $H$ up to rounding. $O_F$ is the \emph{adjacency list} oracle, which on input $(i,r)$ (for $i\in[2^t], r\in[s]$) either returns the index of the $r$\textsuperscript{th} non-zero entry in the $i$\textsuperscript{th} row, or indicates that there are less than $r$ non-zero entries in $i$\textsuperscript{th} row. 

The advantage of having oracle access to $H$ in this manner is that we do not have to store the Hamiltonian $H$ that may possibly be of large dimension; rather we can access it ``on the fly'' when required.

\begin{definition}\label{defn GSHS} (Definition 62 in \cite{quartic_speedup})
    In the GSHS problem, we are given the following as input : 
    \begin{enumerate}
        \item An $s$-sparse hamiltonian $H$ on $t$ qubits with $\|H\|_{\max} \leq 1$ as specified with the oracles $O_H$ and $O_F$.
        \item A quantum circuit using $G$ gates which performs the mapping : $\ket{0^t}\ket{0^A} \mapsto \ket{\Psi}\ket{0^A}$.
        \item Parameters $\lambda \in \mathbb{R}, \alpha \in (0,1), \gamma \in (0,1]$. 
    \end{enumerate}
    To solve it, one must output : 
    \begin{itemize}
        \item $\YES$, if $\|\Pi_{\geq \lambda} (H)\ket{\Psi}\| \geq \gamma$ (which also implies $\norm{H}\geq\lambda$)
        \item $\NO$, if $\|H\| \leq (1 - \alpha)\lambda$
    \end{itemize}
\end{definition}

\begin{theorem}\label{solving GSHS} (Theorem 63 in \cite{quartic_speedup})
The Guided Sparse Hamiltonian problem can be solved with high probability by a quantum algorithm that uses: 
\begin{itemize}
    \item \((Q=\widetilde{O}\bigl(s/\gamma\alpha\lambda)\) queries to the oracles
    \item \(\widetilde{O}\bigl(G/\gamma+\operatorname{polylog}(Q)/\gamma+Qt\bigr)\) gates
    \item \((A+O(t)+\widetilde{O}(\log Q)\) qubits
\end{itemize}
\end{theorem}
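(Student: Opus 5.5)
The statement to prove is the last one shown, Theorem~\ref{solving GSHS}, which the paper imports from \cite{quartic_speedup}. I would prove it by the standard route: Hamiltonian simulation, then phase estimation on the guiding state, then amplitude amplification.

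\textbf{Step 1: block encoding.} From the oracles $O_H$ and $O_F$, build a block encoding of $H/s$. The usual sparse-access construction does this with $O(1)$ queries and $O(t)$ extra gates per use, using that $\|H\|_{\max}\le 1$.

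\textbf{Step 2: the eigenvalue filter.} Apply quantum phase estimation, via qubitization or QSVT, to the walk operator of this block encoding. The resolution needed is $\alpha\lambda$, measured relative to the normalization $s$. So each phase-estimation call costs $\widetilde{O}(s/(\alpha\lambda))$ queries, with failure probability made polylogarithmically small at logarithmic overhead. This gives a circuit $W$ that marks a flag qubit when the estimated eigenvalue satisfies $|E|\ge (1-\alpha/2)\lambda$.

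Run $W$ on $\ket{\Psi}$, prepared with $G$ gates. In the $\YES$ case, eigencomponents with $|E|\ge\lambda$ carry weight at least $\gamma^2$, so the flag is set with probability about $\gamma^2$. In the $\NO$ case every eigenvalue satisfies $|E|\le(1-\alpha)\lambda$, so the flag is set only with the negligible leakage probability of phase estimation. Making this leakage small requires a smooth, Kaiser-window style filter rather than the textbook one, at a logarithmic cost. Note that $\Pi_{\ge\lambda}$ should be read with absolute values, consistent with the Hermitian spectral norm.

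\textbf{Step 3: amplification.} Distinguish ``flag probability $\ge\gamma^2$'' from ``$\approx 0$'' by amplitude amplification or estimation. This takes $O(1/\gamma)$ repetitions of $W$ together with state preparation. Totals:
\begin{itemize}
\item queries: $\widetilde{O}(s/(\gamma\alpha\lambda))=Q$;
\item gates: $\widetilde{O}(G/\gamma)$ for state preparation, plus $Qt$ for oracle and block-encoding arithmetic, plus $\operatorname{polylog}(Q)/\gamma$ for the phase-estimation QFTs;
\item qubits: $A+O(t)+\widetilde{O}(\log Q)$, the last term being the phase register.
\end{itemize}

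\textbf{Main obstacle.} The hard part is the error bookkeeping in Step 2. The leakage of phase estimation for eigenvalues just below the threshold has to be below $\gamma^2$ uniformly, yet still cost only logarithmic overhead. I would handle this with a QSVT polynomial approximation of a threshold function whose degree is $O(\frac{s}{\alpha\lambda}\log\frac{1}{\gamma})$. That degree bound keeps the query count within $\widetilde{O}$ of $Q$.
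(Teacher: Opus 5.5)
Your proposal is correct and follows essentially the same route: the paper does not prove this theorem itself but imports it as Theorem~63 of \cite{quartic_speedup}, describing it as a combination of Hamiltonian simulation, phase estimation and amplitude estimation, which is exactly your pipeline (block encoding, eigenvalue filter at resolution $\alpha\lambda/s$, then $O(1/\gamma)$ rounds of amplification). You also correctly identify the crux: the filter's leakage must be pushed below $\gamma^2$ uniformly at only $O(\log(1/\gamma))$ overhead, via a windowed or QSVT threshold filter rather than textbook phase estimation, which keeps the query count at $\widetilde{O}(s/(\gamma\alpha\lambda))$.
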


For a $\LIN{k}{R}$ instance $\I$ of dimension $n$ and $m$ samples, the associated Hermitian matrix $\K_l(\I)$ has the following parameters associated with it:
\begin{enumerate}
    \item $2^t = N = \binom{n}{l}(q^l$, so $t = O(l\log n)$
    \item Its sparsity $s=\Delta = \frac{m\theta^{(n,l,k)}}{\binom{n}{l}(q^l} = O(\log N) = O(l\log n)$ for the separation in spectral norm for random vs planted case (from Propositions \ref{delta' lower bound random case}, \ref{delta' lower bound planted case})
    \item $\lambda > (1+O(1))\rho\Delta$, say $\lambda = 3/2\rho\Delta$ (lower bound of spectral norm for planted case)
    \item $(1-\alpha)\rho\Delta = (1-O(1))\rho\Delta$, say $\alpha = 1/2$ (upper bound of spectral norm for random case)
\end{enumerate}
Additionally, the oracles $O_H$ and $O_F$ can be efficiently computed using Construction \ref{kLIN to 2LIN}. 

From point 3. of Definition \ref{defn GSHS}, $\gamma^2$ is the parameter which corresponds to the overlap of the guiding state $\ket{\Psi}$. If $\gamma$ is less than inverse polynomial, then from points 1 and 2 of Theorem \ref{solving GSHS}, $1/\gamma$ is the dominating factor for the time (i.e. gate) complexity. 

Note that if $\ket{\Psi}$ is a random unit vector, then its overlap (with high probability) to the cutoff eigenspace (corr. to $\Pi_{\geq\lambda}$ that in the worst case is of small dimension) will be inversely proportional to the dimension with high probability, i.e. 
\[
\gamma^2 = \Omega\left( \frac{1}{\binom{n}{l}(\abs{R}-1)^l} \right)
\]
in our setting. This already yields a quantum algorithm with complexity $\frac{\text{poly}(n)}{\gamma} = \text{poly}(n)\cdotp\sqrt{\binom{n}{l}(\abs{R}-1)^l}$, a (nearly) quadratic speedup over the classical algorithm. 

The other key observation of \cite{quartic_speedup} is the ability to prepare a better guiding state, one which has overlap of $\Omega\left( \frac{1}{\sqrt{\binom{n}{l}(\abs{R}-1)^l}}\right)$ with the cutoff-eigenspace.

\subsection{Intuition for Guiding State} 
Again, we give a brief intuition behind the better guiding state; the details of the proofs will be analogous to Appendix C of \cite{quartic_speedup}. 

Suppose for an instance $\I = \{(\bm{a}_i,b_i)\}_{i\in[m]}$ we construct $l/k$ copies of the following state: 

$$ \ket{\gamma(\I)} := \frac{1}{\sqrt{m}} \sum_i \chi(b_i) \ket{\bm{a}_i} \in \mathbb{C}^{R^n_k} $$

This state correlates well (better than random) with the planted assignment $\ket{\bm{s}} = \frac{1}{\sqrt{n}}\sum_i \chi(s_i)\ket{i}$ : 

$$|\braket{\bm{s}^{\otimes \ell}| \gamma(\I)^{\otimes \ell/k}}|^2 = \left( \braket{\bm{s}^{\otimes k}| \gamma(\I)} \right)^{2 \ell / k} \approx \left(\frac{\rho\cdotp m}{n^kq^k}\right)^{\ell / k}$$

If $m \approx \Theta(((\abs{R}-1)n)^{k/2})$ then the overlap $\sim ((\abs{R}-1)n)^{-l/2}$, which is a quadratic improvement from that of a random guiding state. The actual guiding state is approximately a symmetrized and a normalized version of $\ket{\gamma(\I)}$. Also, the other technical challenge is to show a high overlap with the cutoff eigenspace of $\K_l(\I)$ rather than with $\bm{s}$. These issues are handled in detail in Appendix C of \cite{quartic_speedup}.

\end{document}